\documentclass[11pt]{amsart}

\usepackage{setspace}

\usepackage{amsfonts}
\usepackage{amsmath}
\usepackage{color}
\usepackage{hyperref}

\usepackage{pict2e}
\usepackage{graphicx}

\setcounter{MaxMatrixCols}{10}
\textheight=600pt

\textwidth=480pt
\hoffset=-50pt
\newtheorem{thm}{Theorem}[section]

\newtheorem{cor}[thm]{Corollary}
\newtheorem{lem}[thm]{Lemma}
\newtheorem{prop}[thm]{Proposition}
\newtheorem*{prob*}{Problem}
\newtheorem*{thm*}{Theorem}

\theoremstyle{definition}

\newtheorem*{defn*}{Definition}
\newtheorem{rem}[thm]{Remark}

\newtheorem*{rem*}{Remark}
\numberwithin{equation}{section}

\DeclareMathOperator{\const}{const.}

\DeclareMathOperator{\diag}{diag}

\newcommand{\Tr}{\mathop{\mathrm{Tr}}}

\newcommand{\eins}{\leavevmode\hbox{\small1\kern-3.8pt\normalsize1}}

\def\ra{\rightarrow}

\begin{document}
\title[Finite rank perturbations in products of two coupled random matrices]
{\bf{Finite rank perturbations in products of coupled random matrices: From one correlated to two Wishart ensembles}}

\author{Gernot Akemann}
\address{Faculty of Physics and Faculty of Mathematics, Bielefeld University,  P.O. Box 100131, D-33501 Bielefeld, Germany} \email{akemann@physik.uni-bielefeld.de}

\author{Tomasz Checinski}
\address{Faculty of Physics, Bielefeld University,  P.O. Box 100131, D-33501 Bielefeld, Germany} \email{tchecinski@uni-bielefeld.de}

\author{Dang-Zheng Liu}
\address{Key Laboratory of Wu Wen-Tsun Mathematics, CAS, School of Mathematical Sciences, University of Science and Technology of China, Hefei 230026, P.R.~China
 \&   Institute of Science and Technology Austria,  Klosterneuburg 3400, Austria} 
\email{dzliu@ustc.edu.cn}

\author{Eugene Strahov}
\address{Department of Mathematics, The Hebrew University of
Jerusalem, Givat Ram, Jerusalem
91904, Israel}\email{strahov@math.huji.ac.il}

\keywords{Products of random matrices, correlated Wishart ensemble, determinantal point processes, biorthogonal ensembles,
finite rank perturbations}

\commby{}
\begin{abstract}
We compare finite rank perturbations of the following three ensembles of complex rectangular random matrices: First, a generalised Wishart ensemble with one random and two fixed correlation matrices introduced by Borodin and P\'ech\'e, second, the product of two independent random matrices where one has correlated entries, and third, the case when the two random matrices become also coupled through a fixed matrix. The singular value statistics of all three ensembles is shown to be determinantal and we derive double contour integral representations for their respective kernels. Three different kernels are found in the limit of infinite matrix dimension at the origin of the spectrum. They  depend on finite rank perturbations of the correlation and coupling matrices and are shown to be integrable. The first kernel (I) is found for two independent matrices from the second, and two weakly coupled matrices from the third ensemble. It generalises the Meijer $G$-kernel for two independent and uncorrelated matrices. The third kernel (III) is obtained for the generalised Wishart ensemble and for two strongly coupled matrices. It further generalises the perturbed Bessel kernel of Desrosiers and Forrester. Finally, kernel (II), found for the ensemble of two coupled matrices, provides an interpolation between the kernels (I) and (III), generalising previous findings of part of the authors.
\end{abstract}
\date{}

\maketitle

\tableofcontents
\section{Introduction and Main Results}
The topic of products of random matrices has seen a very rapid development in the past few years, in particular for a finite number of factors at finite matrix size. For example, it has been understood for a fixed product of $r$ independent complex Gaussian random matrices that its singular value statistics is determinantal, and the same holds for the complex eigenvalues, cf. \cite{AIp} for a recent review and a list of references. This has opened up the possibility for a detailed analysis of the local statistics. The global density of singular values of such a product, generalising the Marchenko-Pastur distribution with $r=1$, develops a singularity at the origin depending on $r$ (see e.g.\cite{FL14}). It is thus not surprising that for the local statistics at the origin, representing a hard edge, a new family of Meijer $G$-kernels labelled by $r$ has been found \cite{KZ14}. It generalises the Bessel kernel at $r=1$. On the other hand, in the bulk and at the soft edge of the spectrum the respective Sine and Airy kernel have been recovered \cite{LWZ}, agreeing with $r=1$. The kernels at $r=1$ are well known to be universal, see \cite{ArnoOUP} and references therein.

What is known about the universality of the Meijer $G$-kernel? While it is known to appear in ensembles with Cauchy interaction \cite{Bertola3} for $r\leq3$ (and conjectured $\forall r$), or when multiplying other types of e.g. truncated unitary matrices \cite{KKS}, the question is open for products of matrices from unitary bi-invariant ensembles with  general distributions. The difficulty is that the unitary group integrals  needed after singular value decomposition are not available in general  (see however \cite{DZL}).

Another direction to address its universality - apart from Wigner matrices that drop invariance entirely - is to introduce correlations among the elements of each random matrix. This is the route we will choose here, for the product of $r=2$ matrices as a starting point. We will combine this with a coupling among the two random matrices of scalar \cite{AStr16a,AStr16b} or matrix-valued type \cite{DZL}, investigating the most general distribution that is quadratic in the two random matrices and remains determinantal. A further direction was taken in \cite{DZP} by adding an external field to the product of $r$ Gaussian matrices.
All these deformations allow to study finite rank perturbations of the known Bessel and Meijer $G$-kernel, extending the results of \cite{DF06} for a single Wishart matrix with external field for the former, and of \cite{DZP,DZL} for the latter.
Similar findings were made earlier for deformations of the Airy kernel at the soft edge \cite{BBP,DF06,BP}, where a relation to directed percolation was pointed out. We can only speculate if the deformed Bessel, and Meijer $G$-kernel which we will find here, enjoy such a relation.

It is an open question if the analysis of \cite{BBP}, deforming the Tracy-Widom distribution by finite rank perturbations at the soft edge, could be extended to the smallest singular value distribution $p(s)$ in our setup. Apart from the Wishart-Laguerre ensemble, where several equivalent representations for $p(s)$ are available for the undeformed case, see e.g. \cite[Table 3]{EGP} for a list including references, for the product of $r\geq2$ independent matrices the corresponding Painlev\'e type systems of equations \cite{EStr} become very rapidly cumbersome, cf. \cite{WF}. This is the reason why we will focus on the kernel instead.

Let us introduce our most general ensemble of two correlated coupled matrices first. We consider two rectangular complex random matrices, $G$ of size $L\times M$, and $X$ of size $M\times N$. Throughout this work we will keep the following differences fixed:
\begin{equation}
\kappa=L-N\geq0\ , \ \ \nu=M-N\geq0\,.
\label{nukappadef}
\end{equation}
These two random matrices have the probability distribution with density
\begin{equation}
\mathcal{P}(G,X)=\,c
\exp\left[-\Tr\left(WGG^*\right)
+\Tr\left(\Omega GX+X^*G^*\Omega^*\right)-\Tr\left(QXX^*\right)\right],
\label{JointDensityGX}
\end{equation}
where $c$ is a normalisation constant. Here, we have introduced three constant matrices. First, $\Omega$ is a fixed complex matrix of size $N\times L$ with squared singular values $\delta_1,\ldots,\delta_N\geq 0$. It parametrises the coupling between the matrices $G$ and $X$. Second, $Q$ is a fixed Hermitian $M\times M$ matrix with positive eigenvalues $q_1,\ldots, q_M>0$. It introduces correlations among the matrix elements of $X$ - typically $Q^{-1}$ denotes a given empirical covariance matrix. Third, $W$ is a fixed Hermitian matrix of size $L\times L$, that we will take to be proportional to the identity, $W=\alpha \eins_L$, with $\alpha>0$ constant. We will show later, why the choice of a more general fixed matrix $W$ with positive eigenvalues, introducing correlations also among the matrix elements of $G$, leads out of the class of determinantal point processes. The joint distribution (\ref{JointDensityGX}) of the matrices $G$ and $X$ is convergent if the following conditions are satisfied:
\begin{equation}
\alpha q_i-\delta_j>0\ ,\ \ \forall\ i=1,\ldots,M,\ \ \forall\  j=1,\ldots, N,\label{aqd}
\end{equation}
see the discussion around \eqref{Idef} below. In the sequel we will determine the correlation functions of the squared singular values of the product matrix $Y=GX$ of the two random matrices that are coupled and correlated. Before stating our main results for this ensemble let us state, in which special cases the distribution \eqref{JointDensityGX} has been investigated, and compare to two related ensembles.

In \cite{AKW,AIK} it was shown that the distribution of squared singular values of the product $Y$ of two independent rectangular matrices with $\Omega\equiv0$ and $W,Q\sim\eins$ forms a determinantal point process. There, its kernel was constructed in terms of biorthogonal functions for finite matrix size. In \cite{KZ14} a double contour representation was found,  leading to a limiting kernel expressed in terms of Meijer $G$-functions, whence its name. Both, the results from \cite{AIK} and \cite{KZ14}, were derived for arbitrary but fixed products of $r\geq1$ independent matrices. In \cite{AStr16a} their independence was dropped and the product $Y$ of $r=2$ coupled matrices was considered for $Y$ square, i.e. $L=N$ with $\nu\ge 0$, where
\begin{equation}
\Omega=\frac{1-\mu}{2\mu}\eins_N, \ \ W= \frac{1+\mu}{2\mu}\eins_N \ \ \mbox{and}\ \  Q= \frac{1+\mu}{2\mu}\eins_M\,,
\label{def-ad}
\end{equation}
depending on the parameter $\mu\in(0,1]$. It allowed to interpolate between the ensemble of the product of two independent matrices
($\mu=1$), and a single random matrix ($\mu\to0$), due to $G=X^*$ in this limit. Once again the interpolating ensemble was shown to be determinantal and was solved in terms of biorthogonal functions \cite{AStr16a}. In \cite{AStr16b} three different scaling limits were identified at the origin of the spectrum representing a hard edge, with the following limiting kernels: (I) the Meijer $G$-kernel for two independent matrices, proving its universality for a one-parameter family, (II) a parameter dependent kernel that interpolates between the limit (I) and the limit (III), where the well-known universal Bessel kernel (III) was obtained. In a subsequent paper \cite{DZL} a full coupling matrix $\Omega$ was introduced as in \eqref{JointDensityGX}, while keeping the conditions on $W=\frac{1+\mu}{2\mu}\eins_N$ and $Q=\frac{1+\mu}{2\mu}\eins_M$ as in \cite{AStr16a}. There, the kernels in the limits (I), (II) and (III) were extended and finite rank perturbations in the limits (II) and (III) of \cite{AStr16b} were found. In the present work we will study the most general case of a coupling matrix and correlations amongst the matrix elements of the two random matrices, that is compatible with a determinantal structure. Our findings generalise the kernels found in \cite{AStr16b,DZL}, to include also finite rank perturbations in limit (I).

Next, we introduce the following two ensembles related to \eqref{JointDensityGX}. First, consider the product of two independent matrices $Y=GX$, where the second random matrix $X$ has correlated entries
\begin{equation}
\mathcal{P}_2(G,X)=\,c_2
\exp\left[-\alpha\Tr\left(GG^*\right)-\Tr\left(QXX^*\right)\right]\ .
\label{JointDensityindepGX}
\end{equation}
All conditions on the dimensions of $G,X,$ and $Q$ are as in \eqref{JointDensityGX}, $\alpha>0$ and $c_2$ is a normalisation constant. Related ensembles of products of $r$ correlated random matrices have been considered by Forrester  in the limit of infinite product size $r\to\infty$ \cite{PFL1}, studying the Lyapunov exponents.

Second, following \cite{BP} we consider the generalised Wishart ensemble correlated from two sides:
\begin{equation}
\mathcal{P}_1(X)=\,c_1
\exp\left[-\Tr\left(X\Sigma X^*\right)-\Tr\left(QXX^*\right)\right]\ ,
\label{corrWishart}
\end{equation}
with $Q$ as before, $\Sigma$ a fixed Hermitian $N\times N$ matrix with positive eigenvalues, and normalisation $c_1$. For both $Q$ and $\Sigma$ having positive eigenvalues this ensemble is clearly convergent. However, we could also allow $\Sigma$ (or $Q$) to have several or all eigenvalues to be negative, as long as they are bounded in absolute value by the smallest eigenvalue of $Q$ (or $\Sigma$). In fact, when integrating out the random matrix $G$ in ensemble \eqref{JointDensityGX}, we arrive at $-\Sigma = \Omega\Omega^*/\alpha$, with the corresponding bounds \eqref{aqd}. The ensemble \eqref{corrWishart} has been introduced in \cite{BP} for $M=N$ and was called generalised Wishart ensemble. The authors solved it for finite matrix size $N$ using Schur processes, and then focused on the kernel at the soft edge, without considering the hard edge.

Let $\Delta_{n}(x_{1},...,x_{n})=\prod_{1\leq j<k\leq n}(x_k-x_j)=\det\left[x_j^{i-1}\right]_{i,j=1}^n$ denote the Vandermonde determinant. We are now ready to state our main results.
\begin{thm}\label{TheoremMainDensity}
Denote by $x_1,\ldots,x_N$ the squared singular values of $X$ and by $y_1,\ldots,y_N$ the squared singular values of $Y=GX$, where $X$ and $G$ are distributed according to \eqref{JointDensityGX}, all parameters $\delta_1,\ldots,\delta_N$ and $q_1,\ldots,q_M$ are mutually distinct and satisfy \eqref{aqd}. Their joint probability density function is given by
\begin{equation}\label{MainDensity}
\begin{split}
P(x_1,\ldots,x_N;y_1,\ldots,y_N)=&\frac{1}{Z}\det\left[y_{j}^{\frac{\kappa}{2}}I_{\kappa}\left(2\sqrt{\delta_{l}y_{j}}\right)\right]_{j,l=1}^N
\det\left[x_{j}^{-\kappa-1}\exp\left[-\alpha\frac{y_l}{x_j}\right]\right]_{j,l=1}^N
\\
&\times \det\left[1,q_i,\ldots,q_i^{\nu-1},\exp[-q_ix_1],\ldots,\exp[-q_ix_N]\right]_{i=1}^{M},
\end{split}
\end{equation}
where we recall \eqref{nukappadef}. Here, $I_{\kappa}(z)$ denotes the modified Bessel function of the first kind and $Z$ is a normalising constant given by
\begin{equation}\label{Z}
Z=(N!)^{2}\,(-\alpha)^{N\nu+\frac{N(N-1)}{2}}\alpha^{-N\kappa}
{\Delta_{M}(q_{1},...,q_{M})\Delta_{N}(\delta_{1},...,\delta_{N})\prod_{j=1}^{N}\delta_{j}^{\frac{\kappa}{2}}}{\prod_{i=1}^{M}\prod_{j=1}^{N}(\alpha q_{i}-\delta_{j})^{-1}}.
\end{equation}
\end{thm}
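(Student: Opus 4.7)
The strategy is to reduce the matrix integration to an integration over the squared singular values by performing two singular value decompositions and then evaluating the resulting unitary group integrals. First I would write $X = U_X \widetilde D_x V_X^*$ with $U_X \in U(M)$, $V_X \in U(N)$ and $\widetilde D_x = \binom{D_x^{1/2}}{0}$, where $D_x = \diag(x_1,\ldots,x_N)$; this contributes a Jacobian proportional to $\Delta_N(x)^2\prod_j x_j^\nu$. Since the Gaussian weight $\exp(-\alpha\Tr(GG^*))$ is $U(M)$-invariant in $G$ (precisely why the scalar choice $W=\alpha\eins_L$ is imposed), I substitute $\widetilde G := GU_X$ and split $\widetilde G = (\widetilde G_1,\widetilde G_2)$ with $\widetilde G_1\in\mathbb C^{L\times N}$ and $\widetilde G_2\in\mathbb C^{L\times(M-N)}$; then $Y=GX = \widetilde G_1 D_x^{1/2} V_X^*$, and the Gaussian integral over $\widetilde G_2$ is trivial. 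The decisive change of variables $\widetilde G_1 \mapsto Y$ carries a Jacobian $\prod_j x_j^{-L}$ and collapses the source term, $\Tr(\Omega \widetilde G_1 D_x^{1/2} V_X^*) = \Tr(\Omega Y)$, because $D_x^{1/2}$ and $V_X^*$ cancel against the matching factors from the transformation.

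Next I would perform the SVD $Y = U_Y\widetilde D_y V_Y^*$ with $U_Y\in U(L)$, $V_Y\in U(N)$, adding a Jacobian $\Delta_N(y)^2\prod_j y_j^\kappa$. Introducing $W:=V_Y^*V_X \in U(N)$ by left-invariance of Haar measure, the quadratic term becomes $\alpha\Tr(\widetilde G_1\widetilde G_1^*) = \alpha\Tr(WD_x^{-1}W^* D_y)$. At this point the $Q$-weight $\Tr(U_X^* Q U_X \widetilde D_x\widetilde D_x^*)$ depends only on $U_X$ and the $x_j$, the quadratic term only on $W$ and the pairs $(x_j,y_l)$, and the source $\Tr(\Omega Y+Y^*\Omega^*)$ only on $(U_Y,V_Y)$ and the $y_j$, so the three remaining unitary integrals decouple.

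Each of these produces one of the determinants in \eqref{MainDensity}. The $W$-integral is the classical Harish-Chandra-Itzykson-Zuber (HCIZ) integral on $U(N)$ and yields $\det[e^{-\alpha y_l/x_j}]$ divided by $\Delta_N(y)$ and $\Delta_N(1/x) = (-1)^{\binom{N}{2}}\Delta_N(x)/\prod_j x_j^{N-1}$; combining the resulting power $\prod_j x_j^{N-1}$ with the Jacobians $\prod_j x_j^{\nu-L}$ from above, and with an extra $\prod_j x_j^{-\nu}$ coming from the degenerate HCIZ below, gives the total power $\nu - L + (N-1) - \nu = -\kappa-1$, reproducing the second determinant $\det[x_j^{-\kappa-1}e^{-\alpha y_l/x_j}]$. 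The $U_X$-integral is a degenerate HCIZ integral on $U(M)$: since $\widetilde D_x\widetilde D_x^*$ has $M-N=\nu$ zero eigenvalues, the standard limiting procedure of the non-degenerate formula produces polynomial columns $1,q_i,\ldots,q_i^{\nu-1}$ alongside exponential columns $e^{-q_i x_j}$, exactly the third determinant. Finally, the $(U_Y,V_Y)$-integral is a rectangular Berezin-Karpelevich-type integral; after reducing $\Omega$ to its canonical form $\diag(\sqrt{\delta_1},\ldots,\sqrt{\delta_N})$ padded with zeros by unitary invariance, it evaluates to $\det[y_j^{\kappa/2} I_\kappa(2\sqrt{\delta_l y_j})]$ divided by $\Delta_N(\delta)\Delta_N(y)\prod_j \delta_j^{\kappa/2}$.

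Combining the three determinants with the Jacobians, the factors $\Delta_N(x)^2\Delta_N(y)^2$ from the two SVDs cancel against the single Vandermonde denominators from the two $U(N)$ HCIZ integrals, while $\Delta_M(q)$ and $\Delta_N(\delta)\prod_j\delta_j^{\kappa/2}$ from the remaining HCIZ denominators become the corresponding factors in $Z$. The full normalisation is then fixed either by tracking every Haar-measure constant through the calculation, or more conveniently by computing the partition function of \eqref{JointDensityGX} directly (integrating out $G$ first yields the factor $\prod_{i,j}(\alpha q_i - \delta_j)^{-1}$, whose convergence is precisely the condition \eqref{aqd}). The main technical obstacle is the rectangular Berezin-Karpelevich identity producing the Bessel-function determinant with index $\kappa$: the square case $L=N$ is the classical Leutwyler-Smilga integral, and the extension to general $\kappa\ge 0$ used here is obtained either by first integrating out the $L-N$ redundant columns of $U_Y$ over the associated Stiefel manifold, or by invoking its known generalisation for rectangular sources. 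A secondary bookkeeping obstacle is the accounting of signs and factorial constants in the degenerate HCIZ on $U(M)$, which produces the explicit prefactor $(-\alpha)^{N\nu+\binom{N}{2}}\alpha^{-N\kappa}$ appearing in \eqref{Z}.
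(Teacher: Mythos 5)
Your proposal follows essentially the same route as the paper's proof of Theorem~\ref{TheoremMainDensity} in Subsection~\ref{JPDcoupled}: decompose $X$ (and correspondingly $G$) to isolate the square $N\times N$ block, change variables from the relevant $L\times N$ block of $G$ to $Y=GX$ with Jacobian $\prod_j x_j^{-L}$, diagonalize $X$ and $Y$ by singular value decomposition, and then evaluate three decoupled unitary group integrals: a standard $U(N)$ HCIZ for the quadratic coupling, a degenerate $U(M)$ HCIZ for the $Q$-term producing the polynomial columns, and a rectangular Berezin--Karpelevich integral for the $\Omega$-source. The identification of $W=V_Y^*V_X$ and the way the source collapses to $\Tr(\Omega Y)$ are the same manipulations the paper carries out with $VP\to P$.

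There is, however, a concrete power-counting slip in the Berezin--Karpelevich step. You state that the $(U_Y,V_Y)$-integral evaluates to $\det\bigl[y_j^{\kappa/2}I_\kappa\bigl(2\sqrt{\delta_l y_j}\bigr)\bigr]$ divided by $\Delta_N(\delta)\Delta_N(y)\prod_j\delta_j^{\kappa/2}$; the exponent of $y_j$ inside the determinant should be $-\kappa/2$, not $+\kappa/2$. The symmetric form of the rectangular Berezin--Karpelevich integral gives $\det\bigl[(\delta_l y_j)^{-\kappa/2}I_\kappa\bigl(2\sqrt{\delta_l y_j}\bigr)\bigr]\big/\bigl(\Delta_N(y)\Delta_N(\delta)\bigr)$, i.e.\ $\det[I_\kappa]\,\prod_j y_j^{-\kappa/2}\prod_l\delta_l^{-\kappa/2}$ in the numerator; compare Eq.~\eqref{IZExtended3}, where the $\prod_k y_k^{-\kappa/2}$ is written out explicitly (the $\delta$-dependence being absorbed into the unwritten constant). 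With your $y_j^{+\kappa/2}$ and the $\prod_j y_j^\kappa$ you correctly attribute to the $Y$-SVD Jacobian, the combined $y$-power in the final density would be $\prod_j y_j^{3\kappa/2}$ — i.e.\ a spurious $\prod_j y_j^{\kappa}$ relative to $\det\bigl[y_j^{\kappa/2}I_\kappa\bigr]$ in \eqref{MainDensity}. Had you run the $y$-power bookkeeping with the same care you applied to the $x$-powers (where $\nu-L+(N-1)-\nu=-\kappa-1$ is correctly obtained), this would have surfaced immediately; with the corrected exponent the Jacobian power $\kappa$ and the BK power $-\kappa/2$ add to the required $\kappa/2$. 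The rest of your accounting — the Vandermonde cancellations, the $\Delta_M(q)$, $\Delta_N(\delta)\prod_j\delta_j^{\kappa/2}$ and $\prod_{i,j}(\alpha q_i-\delta_j)^{-1}$ factors in $Z$, and the alternative route to the normalisation by integrating out $G$ first — is sound and matches the paper.
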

In \eqref{MainDensity}, in the determinant in the second line, the notation is such that the first $\nu$ columns are only present for $M>N$ and absent for $M=N$ ($\nu=0$). When two or more parameters become degenerate the corresponding density follows from l'H\^opital's rule, see e.g. Appendix \ref{altThm}. Integrating out the variables $x_j$ leads to the joint probability density of the $y_j$ alone, given by the following
\begin{cor}\label{Mainjpdf} The joint probability density of the squared singular values $y_1,\ldots,y_N$ of the product matrix $Y=GX$ from the ensemble \eqref{JointDensityGX} is equal to
\begin{equation}\label{biensemble}
P(y_1,\ldots,y_N)=\frac{2^NN!}{Z}\det\left[\psi_i(y_j)\right]_{i,j=1}^N\det\left[1,q_i,\ldots,q_i^{\nu-1},\varphi_i(y_1),\ldots,\varphi_i(y_N)\right]_{i=1}^{M},
\end{equation}
where $Z$ is given by \eqref{Z} and we have introduced the following notation
\begin{equation}\label{varphi,psi}
\psi_j(y)=y^{\frac{\kappa}{2}}I_{\kappa}\left(2\sqrt{\delta_j y}\right)\,,\ j=1,\ldots,N,\quad
\varphi_i(y)=\left(\frac{q_i}{\alpha y}\right)^{\frac{\kappa}{2}}\!K_{\kappa}\left(2\sqrt{\alpha q_i y}\right)\,,\ i=1,\ldots,M\,.
\end{equation}
Here, $K_{\kappa}(z)=K_{-\kappa}(z)$ denotes the modified Bessel function of the second kind.
\end{cor}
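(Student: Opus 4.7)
The plan is to integrate the joint density \eqref{MainDensity} over $x_1,\dots,x_N\in(0,\infty)^N$. Since the first determinant $\det[\psi_l(y_j)]_{j,l=1}^N$ (with $\psi_l(y)=y^{\kappa/2}I_\kappa(2\sqrt{\delta_l y})$) and the prefactor $1/Z$ depend only on the $y$'s, they factor out immediately. The real work is to show
\begin{equation*}
\int_{(0,\infty)^N}\!\!\det\!\left[x_j^{-\kappa-1}e^{-\alpha y_l/x_j}\right]_{j,l=1}^N\!\det\!\left[1,q_i,\dots,q_i^{\nu-1},e^{-q_ix_1},\dots,e^{-q_ix_N}\right]_{i=1}^M\prod_{j=1}^N\!dx_j = 2^N N!\,\det\!\left[1,q_i,\dots,q_i^{\nu-1},\varphi_i(y_1),\dots,\varphi_i(y_N)\right]_{i=1}^M.
\end{equation*}

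The natural device is a Laplace expansion combined with Andr\'eief's identity. First, expand the $M\times M$ determinant along its last $N$ columns (the $x$-dependent ones), producing a signed sum over $N$-element subsets $S\subset\{1,\dots,M\}$ of products of two minors: an $\nu\times\nu$ minor $\det[q_i^{k-1}]_{i\in S^c,\,k=1,\dots,\nu}$ that does not depend on $x$, and an $N\times N$ minor $\det[e^{-q_ix_l}]_{i\in S,\,l=1,\dots,N}$ that does. For each fixed $S$, apply Andr\'eief to the product of this latter minor with the first $N\times N$ determinant. This reduces the $N$-fold integral to a single determinant whose $(j,i)$ entry is
\begin{equation*}
\int_0^\infty x^{-\kappa-1}\exp\!\left[-\frac{\alpha y_j}{x}-q_ix\right]dx \;=\; 2\left(\frac{q_i}{\alpha y_j}\right)^{\!\kappa/2}\!K_\kappa\!\left(2\sqrt{\alpha q_i y_j}\right) \;=\; 2\,\varphi_i(y_j),
\end{equation*}
the standard integral representation of the modified Bessel function of the second kind (valid because $\alpha q_i>0$ and $y_j>0$). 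This step extracts the prefactor $2^N N!$ and replaces each exponential $e^{-q_ix_l}$ by $\varphi_i(y_l)$.

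What remains is to recognize the resulting signed sum over $S$ as precisely the Laplace expansion of the $M\times M$ determinant on the right-hand side, now with columns $(1,q_i,\dots,q_i^{\nu-1},\varphi_i(y_1),\dots,\varphi_i(y_N))$. The Laplace signs introduced in the first step and those needed for the reassembly match, so no spurious factors appear. The only genuine technical points are (i) the Bessel integral, which is standard and whose convergence is guaranteed by \eqref{aqd}, and (ii) careful bookkeeping of the signs in the Laplace expansion; neither is a serious obstacle, and the output is exactly \eqref{biensemble} with the normalisation $Z$ inherited unchanged from Theorem~\ref{TheoremMainDensity}.
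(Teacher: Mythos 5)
Your proof is correct and follows essentially the same route as the paper: integrate out the $x_j$ against the two determinants and evaluate the resulting one-dimensional integral via the standard $K_\kappa$ representation \eqref{K-int}. The only difference is that the paper invokes the generalised Andr\'eief formula \eqref{gAndreief} (cited from Kieburg--Guhr) in a single step, whereas you re-derive that identity inline by Laplace-expanding the $M\times M$ determinant along its $x$-dependent columns and applying the classical Andr\'eief formula minor-by-minor --- which is precisely how the generalised version is proven, so the two arguments coincide in substance.
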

The joint probability density \eqref{biensemble} lies outside the class of polynomial ensembles \cite{ArnoDries,Arno}, that have many invariance properties \cite{CKW}. Although the two determinants have different sizes, \eqref{biensemble} can be mapped to a bona fide biorthogonal ensemble in the sense of \cite{Borodin:1998} using the Schur complement formula, see Section \ref{detPP}. When we take the limit $\delta_l\to0$ for all $l=1,\ldots,N$ in Theorem \ref{TheoremMainDensity} (and Corollary \ref{Mainjpdf}),
which corresponds to setting $\Omega=0$, we arrive at the joint probability density of the ensemble \eqref{JointDensityindepGX} given by Theorem \ref{MainP2} (and Corollary \ref{CorrP2}). If in Theorem \ref{TheoremMainDensity} we  integrate out the squared singular values $y_1,\ldots,y_N$, which corresponds to integrating out the matrix $G$ in our ensemble \eqref{JointDensityGX},
we arrive at the joint probability density for the squared singular values of the ensemble \eqref{corrWishart}, as stated in Proposition \ref{gWjpdf}. For details of these short-cuts we refer to the Appendix \ref{AppendixA}.

In special cases the joint densities from \eqref{MainDensity} and \eqref{biensemble} were known. Setting all parameters $q_j=\frac{1+\mu}{2\mu}$ equal for all $j=1,\ldots,M$, they reduce to the joint probability densities in \cite{DZL}, and setting furthermore all parameters $\delta_l=\frac{(1-\mu)^2}{4\mu^2}$ equal for all $l=1,\ldots,N$, see \eqref{def-ad}, we reobtain the joint probability densities in \cite{AStr16a}.

Our next result is an example for biorthogonal ensembles \cite{Borodin:1998}. In Proposition \ref{PropPP} we show how to construct the kernel $K_N(x,y)$ for the determinantal point processes \eqref{biensemble}, given by
\begin{equation}
\label{ppKernel}
P(y_1,\ldots,y_N)=\frac{1}{N!}\det\left[K_N(y_i,y_j)\right]_{i,j=1}^N\ ,
\end{equation}
and the resulting $k$-point correlation functions defined in \eqref{k-point}. We quote here the final answer obtained for this kernel.
\begin{thm}\label{MainKernelContourIntegral} The correlation kernel for the ensemble \eqref{biensemble} can be represented as a double contour integral:
\begin{equation}
K_{N}(y_1,y_2)=
2
\oint_{\gamma_{\delta}}\frac{d\eta}{2\pi i}\oint_{\gamma_{q}}\frac{d\zeta}{2\pi i}\left(\frac{\zeta}{\eta}\right)^{\frac{\kappa}{2}}
\frac{I_{\kappa}\left(2\sqrt{\eta y_1}\right)K_{\kappa}\left(2\sqrt{\zeta y_2}\right)}{\eta-\zeta}
\prod_{l=1}^N\frac{\zeta-\delta_l}{\eta-\delta_l}
\prod_{l=1}^M\frac{\eta-\alpha q_l}{\zeta-\alpha q_l}.
\label{MainKerneloint}
\end{equation}
Here, $\gamma_{\delta}$ is a closed contour encircling $\delta_1,\ldots,\delta_N\geq 0$ in a counter-clockwise way, and $\gamma_{q}$ is a closed contour encircling $\alpha q_1,\ldots,\alpha q_M>0$ in counter-clockwise direction, excluding the origin and not intersecting $\gamma_{\delta}$\,.
\end{thm}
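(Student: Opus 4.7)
The plan is to derive \eqref{MainKerneloint} through the standard machinery for biorthogonal ensembles, starting from the joint density in Corollary \ref{Mainjpdf} and then repackaging the resulting finite sum into the advertised double contour integral. First, I would use the Schur complement reduction alluded to in Section \ref{detPP} to rewrite \eqref{biensemble} as a product of two square $N\times N$ determinants $\det[\psi_i(y_j)]\det[\Phi_j(y_i)]$, where each $\Phi_j$ is a specific linear combination of the $\varphi_i(y)$ (with coefficients built from the powers $q_i^{k-1}$ and the Vandermonde $\Delta_M$). This puts us in the Borodin \cite{Borodin:1998} setting and gives the kernel as
\begin{equation*}
K_N(y_1,y_2)=\sum_{i,j=1}^N \psi_i(y_1)\bigl[G^{-1}\bigr]_{ij}\Phi_j(y_2),\qquad G_{ij}=\int_0^\infty \Phi_i(y)\psi_j(y)\,dy,
\end{equation*}
provided the Gram matrix $G$ is invertible.

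Next I would compute the Gram entries. Using the standard Mellin integral
\begin{equation*}
\int_0^\infty y^{\kappa/2}I_\kappa\bigl(2\sqrt{\delta y}\bigr)\left(\tfrac{q}{\alpha y}\right)^{\!\kappa/2}\!K_\kappa\bigl(2\sqrt{\alpha q y}\bigr)dy=\frac{1}{\alpha q-\delta},\qquad \alpha q>\delta,
\end{equation*}
(valid by the Mellin-Barnes representation of $I_\kappa K_\kappa$, with convergence ensured by \eqref{aqd}), the basic pairings form a Cauchy-type matrix $(\alpha q_i-\delta_j)^{-1}$. After performing the polynomial bookkeeping from the first $\nu$ Schur-complement columns, the Gram determinant factors through the two Vandermondes $\Delta_N(\delta)$ and $\Delta_M(\alpha q)$ and is explicitly invertible via the Cauchy determinant identity. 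This gives a closed-form expression
\begin{equation*}
K_N(y_1,y_2)=\sum_{j=1}^N\sum_{i=1}^M a_{j,i}\,\psi_{\delta_j}(y_1)\,\varphi_{q_i}(y_2)
\end{equation*}
with coefficients $a_{j,i}$ expressed as products/ratios of the $\delta_l$ and $\alpha q_l$.

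The final, and most delicate, step is to recognise this finite double sum as a double contour integral. Writing each coefficient $a_{j,i}$ as a residue via
\begin{equation*}
\frac{1}{\prod_{l\neq j}(\delta_j-\delta_l)}=\Res_{\eta=\delta_j}\prod_{l=1}^N\frac{1}{\eta-\delta_l},\qquad \frac{1}{\prod_{l\neq i}(\alpha q_i-\alpha q_l)}=\Res_{\zeta=\alpha q_i}\prod_{l=1}^M\frac{1}{\zeta-\alpha q_l},
\end{equation*}
and matching against the factors $\prod(\zeta-\delta_l)$ and $\prod(\eta-\alpha q_l)$ in the numerator of \eqref{MainKerneloint}, one checks that the sum over simple poles inside $\gamma_\delta$ (in $\eta$) and $\gamma_q$ (in $\zeta$) reproduces precisely the biorthogonal series above. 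The factor $1/(\eta-\zeta)$ and the prefactor $(\zeta/\eta)^{\kappa/2}$ naturally enter from collecting the $\psi$- and $\varphi$-dependence together with the Cauchy kernel $(\alpha q_i-\delta_j)^{-1}$. The contour condition in the theorem (that $\gamma_\delta$ and $\gamma_q$ do not cross) is exactly what ensures no spurious residue at $\eta=\zeta$ is picked up.

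The main obstacle I anticipate is the bookkeeping in the second step: because the second determinant in \eqref{biensemble} mixes $M-N$ polynomial columns with $N$ columns of $\varphi_i(y_j)$, the Schur complement and the resulting Cauchy-Vandermonde inversion require careful index management, and one must verify that the $\nu$ polynomial columns precisely cancel the $(\alpha q_i)^{k-1}$-terms that would otherwise obstruct a clean Cauchy structure. Once this identification is made, the passage from the biorthogonal sum to the double contour integral is a residue computation, and the normalisation $Z$ from \eqref{Z} drops out because the Gram determinant and $Z$ share the same Cauchy and Vandermonde factors.
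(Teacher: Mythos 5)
Your overall strategy is sound and, at a high level, parallel to the paper's: start from the biorthogonal-ensemble kernel written through the inverse of the Gram-type matrix $A$ (this is exactly Proposition \ref{PropPP}), then express the finite double sum as a residue of a double contour integral. Where you diverge from the paper — and where your plan would get noticeably heavier — is in the inversion step.

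You propose to invert the Cauchy--Vandermonde matrix \eqref{MatrixA} explicitly. For $\nu=0$ this is exactly what the paper does (Lemma \ref{LemmaCauchyInverse}, giving \eqref{kernelN=M}). But for $\nu>0$ a closed-form inverse of the Cauchy--Vandermonde matrix with $\nu$ polynomial columns exists, yet extracting its entries and then repackaging them into the product $\prod_l(\zeta-\delta_l)/\prod_l(\eta-\delta_l)\cdot\prod_l(\eta-\alpha q_l)/\prod_l(\zeta-\alpha q_l)$ is considerably more delicate than you indicate; the polynomial columns do \emph{not} simply ``cancel'' in the entries $C_{i+\nu,j}$ with $j\le\nu$, and these entries are precisely the ones that feed into the kernel sum via \eqref{gen-kernel}. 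The paper avoids this entirely: rather than computing $\tilde{C}$, it reads off the $M$ meromorphic functions $f_l(z)=\sum_{k=1}^{\nu}z^{k-1}\tilde{C}_{k,l}+\sum_{k=1}^N\frac{\delta_k^{\kappa/2}}{2\alpha^\kappa(z-\delta_k)}\tilde{C}_{k+\nu,l}$ from the relation $\tilde{A}\tilde{C}=\eins_M$, and determines each $f_l$ uniquely by its zeros at $\alpha q_{i\neq l}$, its poles at $\delta_k$, the normalisation $f_l(\alpha q_l)=1$, and its growth at infinity. The kernel sum $\sum_j(\ldots)\tilde C_{\cdot,j}\varphi_j(y)$ is then literally $\sum_j f_j(\eta)\varphi_j(y)$ after the $\psi_i$ have been traded for the single contour integral \eqref{psi-int} (together with the trivial identity \eqref{0-int} for the polynomial pieces), and the $\zeta$-integral is produced by the residue theorem over the poles of $f_j$. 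This is cleaner because the entries of $\tilde C$ never appear individually.

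Two smaller points. First, your Mellin formula has dropped a prefactor: the correct identity \eqref{Idef} is
\begin{equation*}
\int_0^\infty \Big(\tfrac{q}{\alpha y}\Big)^{\kappa/2}K_\kappa\bigl(2\sqrt{\alpha q y}\bigr)\, y^{\kappa/2}I_\kappa\bigl(2\sqrt{\delta y}\bigr)\,dy
=\frac{\delta^{\kappa/2}}{2\alpha^\kappa}\cdot\frac{1}{\alpha q-\delta},
\end{equation*}
and the factor $\delta^{\kappa/2}/(2\alpha^\kappa)$ matters — it is exactly what is absorbed into the contour-integral representation of $\psi_i$ in \eqref{psi-int} and ultimately produces the $(\zeta/\eta)^{\kappa/2}$ and the overall factor $2$ in \eqref{MainKerneloint}; leaving it out would put your residue bookkeeping off by those factors. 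Second, your remark that the non-intersection of $\gamma_\delta$ and $\gamma_q$ ``ensures no spurious residue at $\eta=\zeta$'' is correct and is what the paper uses, but note that when the two contours are nested the order of integration is forced (inner first), as the paper emphasises.

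In summary: your route would work, but it commits you to the explicit generalized Cauchy--Vandermonde inversion, whereas the paper's proof replaces that by the analytic characterisation of $f_l(z)$ and is substantially shorter once $\nu>0$.
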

Note that the same formula for the kernel remains valid when two or more of the parameters become degenerate. In \eqref{MainKerneloint} we have suppressed a factor $(y_1/y_2)^{\kappa/2}$ on the right hand side. More generally speaking, due to \eqref{ppKernel} the following modification $K_N(y_1,y_2)\to K_N(y_1,y_2) f(y_1)/f(y_2)$ leads to an {\it equivalent} kernel, with the same joint probability density and $k$-point correlation functions. We will frequently use such a transformation. Corollary \ref{Mainjpdf} together with the representation of the kernel \eqref{MainKerneloint} constitutes the solution of the ensemble \eqref{JointDensityGX} for finite matrix sizes. When setting all coupling parameters to zero, $\delta_{l=1,\ldots,N}=0$, we obtain the kernel of the ensemble \eqref{JointDensityindepGX}, see Theorem \ref{Thmker2}.

Next, we turn to the main results taking three different large-$N$ limits at the origin of the spectrum, with matrices $\Omega$ and $Q$ having finite rank perturbations from \eqref{def-ad}. In order to prepare these limits let us introduce the following partial degeneracies among the sets of parameters $\delta_1,\ldots,\delta_N$ and $q_1,\ldots,q_M$, as parametrised by a single  parameter $\mu\in(0,1]$:
\begin{equation}
\delta_{n+1}=\cdots=\delta_N
=\frac{(1-\mu)^2}{4\mu^2}, \qquad
q_{m+1}=\cdots=
q_M=
\frac{1+\mu}{2\mu}.
\label{finiterank1}
\end{equation}
In addition we set
\begin{equation}
\alpha=\frac{1+\mu}{2\mu}.
\label{alpha-def}
\end{equation}
Let us first see what the degeneracies \eqref{finiterank1} imply. Clearly, because all parameters $\delta_l,q_j$ and $\alpha$ are positive, the condition for the convergence of the model \eqref{aqd} is equivalent to
\begin{equation}
\label{Inequal}
1-\frac{q_j}{\alpha}< 1-\frac{\delta_l}{\alpha^2}\ ,\ \ j=1,\ldots,M,\ \ l=1,\ldots, N,
\end{equation}
even before degeneracies \eqref{finiterank1} and \eqref{alpha-def} are imposed. Because we can insert both, on the left hand side, or on the right hand side of \eqref{Inequal} the set of degenerate or non-degenerate parameters, together with \eqref{alpha-def} this leads to the following four different inequalities:
\begin{equation}
1-\frac{2\mu q_j}{1+\mu}< 1-\frac{4\mu^2\delta_l}{(1+\mu)^2}\ ,\ \ j=1,\ldots,m,\ \ l=1,\ldots, n,
\label{nondeg-nondeg}
\end{equation}
for the non-degenerate values,
\begin{eqnarray}
1-\frac{2\mu(1+\mu)}{(1+\mu)2\mu}=0&<& 1-\frac{4\mu^2\delta_l}{(1+\mu)^2}\ ,\ \ l=1,\ldots, n,
\label{deg-nondeg}\\
\ \ 1-\frac{2\mu q_j}{1+\mu}&<& 1-\frac{(1-\mu)^2}{(1+\mu)^2}=\frac{4\mu}{(1+\mu)^2}\ ,\ \ j=1,\ldots,m,
\label{nondeg-deg}
\end{eqnarray}
for one of each sets being degenerate, and
\begin{equation}
0< \frac{4\mu}{(1+\mu)^2}\ ,
\label{deg-deg}
\end{equation}
for both sets being degenerate, which is trivially satisfied due to $\mu\in(0,1]$. This brings us to the sets of parameters that will be relevant for our limiting kernels. In view of \eqref{nondeg-nondeg} we define the quantities
\begin{eqnarray}
\pi_l&=&\frac{(1+\mu)^2}{4\mu } \left(1-\frac{4\mu^2\delta_l}{(1+\mu)^2}\right),\ \ l=1,\ldots, n,
\label{pi-def}\\
\theta_j&=&  \frac{(1+\mu)^2}{4\mu } \left( 1-\frac{2\mu q_j}{1+\mu}\right) ,\ \ j=1,\ldots,m,
\label{theta-def}
\end{eqnarray}
which satisfy
\begin{equation}
\theta_j < \pi_l\ ,\ \ j=1,\ldots,m,\ \ l=1,\ldots, n,
\label{t<pi}
\end{equation}
due to \eqref{nondeg-nondeg}, normalised by the right hand side of \eqref{deg-deg}. The bounds for these quantities resulting from \eqref{deg-nondeg} and \eqref{nondeg-deg}, respectively, are
\begin{equation}
0<\pi_l \ ,\ \ l=1,\ldots n,\ \ \mbox{and} \ \ \theta_j<1 \ ,\ \ j=1,\ldots,m,
\label{bounds}
\end{equation}
These will be useful when taking the three different large-$N$ limits next.

At large-$N$ limit the parameter $\mu$ is now considered as a function of $N$, $\mu=\mu(N)$, taking values in $(0,1]$. We will assume that also the remaining, non-degenerate parameters may become functions of $N$, that is, $\pi_l=\pi_l(N)$, $l=1,\ldots,n$ and $\theta_j=\theta_j(N)$, $j=1,\ldots,m$. Let us recall here that the parameters $\kappa$ and $\nu$ in \eqref{nukappadef} as well as $n,m$ will be kept fixed in these limits.

In the first limiting regime (I) the function $\mu(N)$ is such that $\lim_{N\to\infty}\mu(N) N=\infty$. This regime includes the situation when $\mu$ is constant. In this case the bounds in \eqref{bounds} together with \eqref{pi-def} ensure that the limits exist  
\begin{equation}
\lim_{N\to\infty}\frac1N \pi_l(N)=0\ ,\ \   l=1,\ldots, n.
\label{pitheta-LimI}
\end{equation}
Noting the restriction \eqref{t<pi}, in order to obtain some nontrivial results we assume that $\theta_j(N)$ grows linearly with $N$ as $N\rightarrow \infty$:
\begin{equation}
\lim_{N\to\infty}\frac1N \theta_j(N) =\hat{\theta}_j \in(-\infty,0] \ ,\ \ j=1,\ldots,m.
\label{pitheta-LimI'}
\end{equation}
The corresponding limiting kernel that we will encounter in Theorem \ref{hardlimits} below is defined as
\begin{align}
\mathbb{K}_{\mathrm{I}}^{(m)}(y_1,y_2) &=
\int_{0}^{\infty}  dt  \oint_{\Gamma_{\mathrm{0}}}   \frac{ds}{2\pi i} \,     t^{\kappa-1} s^{-\kappa-1}
\exp[s-t]  \nonumber \\ & \quad \times
\oint_{\Gamma_{\mathrm{out}}}  \frac{dv}{2\pi i}  \oint_{\Gamma_{\mathrm{in}}}  \frac{du}{2\pi i}
\frac{ \exp\left[-\frac{v}{s}y_1+\frac{u}{t}y_2 \right]}{u-v}e^{-\frac{1}{u}+\frac{1}{v}}
\Big(\frac{v}{u}\Big)^{\nu-m}
\prod_{l=1}^{m}\frac{v- \hat{\theta}_{l}}{u-\hat{\theta}_{l}}.
\label{kernelsub}
\end{align}
Here, $\Gamma_{\mathrm{0}}$ is a closed contour encircling the origin counter-clockwise, $\Gamma_{\mathrm{in}}$ is a closed contour encircling $\{0, \hat{\theta}_{1}, \ldots, \hat{\theta}_{m}\}$ in counter-clockwise direction, and $\Gamma_{\mathrm{out}}$ is a   closed contour enclosing the contour $\Gamma_{\mathrm{in}}$ counter-clockwise. This kernel generalises the Meijer $G$-kernel for the product of two independent matrices \cite{KZ14} by a set of finite rank perturbations, and it holds $\mathbb{K}_{\mathrm{I}}^{(m=0)}(x,y)= \mathbb{K}_{\rm Meijer}(x,y)$, see \eqref{KMeijer} and \eqref{KIfinal} for integral representations of the latter. It was shown in \cite{CKW} for this Meijer $G$-kernel that it can also be written as a double integral of the Bessel kernel. The same relation extends to the kernels with finite rank perturbations, comparing the second line of \eqref{kernelsub} with \eqref{kernelsup} below at $n=0$, the generalised Bessel kernel. We will show that a representation of the kernel \eqref{kernelsub} with only two integrals exists, cf. \eqref{KI2intonly}.

In the second limiting regime (II) the function $\mu(N)$ behaves as $\lim_{N\to\infty} \mu(N) N=\tau/4$, with $\tau>0$. In this limit, considering the definitions \eqref{pi-def} and \eqref{theta-def} together with \eqref{t<pi} and \eqref{bounds}, we may assume that
\begin{equation}
\lim_{N\to\infty}\frac1N \pi_l(N)=\hat{\pi}_l\in[0,\frac{1}{\tau})\ ,\ \   l=1,\ldots, n,\ \ \mbox{and} \ \ \lim_{N\to\infty}\frac1N \theta_j(N)=\hat{\theta}_j \in\cap_{k=1}^{n} (-\infty,\hat{\pi}_k] \ ,\ \ j=1,\ldots,m.
\label{pitheta-LimII}
\end{equation}
The limiting kernel in this regime (II) is a parameter dependent family of interpolating limiting kernels. It is defined as
\begin{align}
\mathbb{K}_{\mathrm{II}}^{(n,m)}(y_1,y_2;\tau)&= \frac{2}{\tau}\oint_{\Gamma_{\mathrm{out}}} \frac{dv}{2\pi i} \oint_{\Gamma_{\mathrm{in}}} \frac{du}{2\pi i}  I_{\kappa}\left(2\sqrt{(1-\tau v)y_1/\tau^2}\right)K_{\kappa}\left(2\sqrt{(1-\tau u)y_2/\tau^2}\right) \nonumber\\  &  \quad \times   e^{-\frac{1}{u}+\frac{1}{v}}\frac{1}{u-v}\Big(\frac{1-\tau u}{1-\tau v}\Big)^{\frac\kappa2} \Big(\frac{v}{u}\Big)^{\nu+n-m}
\prod_{k=1}^{n}\frac{u- \hat{\pi}_{k}}{v-\hat{\pi}_{k}}  \prod_{l=1}^{m}\frac{v- \hat{\theta}_{l}}{u-\hat{\theta}_{l}}.
\label{kernelcrit}
\end{align}
Here, $\Gamma_{\mathrm{in}}$ is defined as in the first limit (I), such that  $\Re(u)<1/\tau$ for $u\in \Gamma_{\mathrm{in}}$, and
$\Gamma_{\mathrm{out}}$ is a  closed  contour encircling the parameters $\{\hat{\pi}_{1},\ldots,  \hat{\pi}_m\}$ and contour $\Gamma_{\mathrm{in}}$ in counter-clockwise direction. It generalises the interpolating kernel of \cite{AStr16b}, where no such parameters $\hat{\pi}_k$ and $\hat{\theta}_l$ were present (although the representation of the kernel in \cite{AStr16b} is different), and the kernel of \cite{DZL}, where the parameters $\hat{\pi}_{k}$ are present and $\hat{\theta}_{k}$ are absent. The kernel \eqref{kernelcrit} can also be written as a double integral of the generalised Bessel kernel \eqref{kernelsup} below, cf. \eqref{KII4int}.

The third limiting regime (III) is given by $\lim_{N\to\infty}\mu(N) N=0$, with limiting parameter assumptions
\begin{equation}
\lim_{N\to\infty}\frac1N \pi_l(N)=\hat{\pi}_l\in[0,\infty)\ ,\ \   l=1,\ldots, n,\ \ \mbox{and} \ \ \lim_{N\to\infty}\frac1N \theta_j(N)= \hat{\theta}_j \in\cap_{k=1}^{n} (-\infty,\hat{\pi}_k] \ ,\ \ j=1,\ldots,m.
\label{pitheta-LimIII}
\end{equation}
The corresponding limiting kernel is a generalisation of the Bessel kernel with finite rank perturbations,
\begin{align}
\mathbb{K}_{\mathrm{III}}^{(n,m)}(y_1,y_2) =
\oint_{\Gamma_{\mathrm{out}}}  \frac{dv}{2\pi i}  \oint_{\Gamma_{\mathrm{in}}}  \frac{du}{2\pi i}
\frac{ \exp\left[-{y_1}v+{y_2}u \right]}{u-v}e^{-\frac{1}{u}+\frac{1}{v}} \Big(\frac{v}{u}\Big)^{\nu+n-m}\prod_{k=1}^{n}\frac{u- \hat{\pi}_{k}}{v-\hat{\pi}_{k}}  \prod_{l=1}^{m}\frac{v- \hat{\theta}_{l}}{u-\hat{\theta}_{l}},
\label{kernelsup}
\end{align}
where $\Gamma_{\mathrm{in}}$ is a closed contour encircling $\{0, \hat{\theta}_{1}, \ldots, \hat{\theta}_{m}\}$ counter-clockwise, and $\Gamma_{\mathrm{out}}$ a closed contour enclosing $\{\hat{\pi}_{1},\ldots,  \hat{\pi}_m\}$ and the contour $\Gamma_{\mathrm{in}}$ counter-clockwise. Without finite rank perturbations it coincides with the Bessel kernel, $\mathbb{K}_{\mathrm{III}}^{(0,0)}(x,y)= \mathbb{K}_{\rm Bessel}(x,y)$, as shown in \cite{DF06}, see \eqref{TheBessel} and \eqref{BesselC} for representations of the latter. At $m=0$ the kernel \eqref{kernelsup} was found earlier in \cite{DZL} and agrees with the one from \cite[Theorem~15]{DF06} for a different ensemble. The kernel \eqref{kernelsup} enjoys a formal duality relation, $\mathbb{K}_{\mathrm{III}}^{(n,m)}(x,y)\to\mathbb{K}_{\mathrm{III}}^{(m,n)}(y,x)|_{\nu\to-\nu,\,\hat{\pi}_k\leftrightarrow-\hat{\theta}_l}$, as can be seen already in the corresponding ensemble \eqref{corrWishart}, by interchanging matrices $\Sigma\leftrightarrow Q$ and $N\leftrightarrow M$. It thus holds already for the kernel at finite-$N$, cf. \eqref{kernelGeneralizedWishart}, and in particular also for the extended Airy kernel of \cite{BP} at the soft edge.

In all three kernels the contours can be chosen differently, for example when showing that they are integrable in the sense of \cite{IIKS}, see Corollaries \ref{Int2}, \ref{KII-integrable} and \ref{Int1}, respectively. Theorems \ref{Thmker1}, \ref{ThmLimker1} and Corollary \ref{Int1} solve an open problem stated in \cite[Section 7.2]{DF06}. The integrability found here implies that the asymptotic analysis of all three kernels can be formulated as a Riemann-Hilbert problem. These three kernels can all be obtained from the kernel \eqref{MainKerneloint} of ensemble \eqref{JointDensityGX}.
\begin{thm} [Hard edge scaling limits] \label{hardlimits}
Consider the correlation kernel \eqref{MainKerneloint} from Theorem \ref{MainKernelContourIntegral}, with fixed non-negative integers $\nu=M-N$ and $\kappa=L-N$, together with the definitions \eqref{pi-def} and \eqref{theta-def}.
With the three kernels defined above, the following limits hold uniformly for any $x, y$ in a compact subset of $(0,\infty)$ as $N\rightarrow \infty$.
\begin{itemize}
\item[(I)] Suppose that $\mu(N) N \ra \infty$ and \eqref{pitheta-LimI'} hold true, then we have 
\begin{equation}
\lim_{N\to\infty}
\frac{\mu(N)}{N} K_N\left( \frac{\mu(N)}{N} x, \frac{\mu(N)}{N} y\right)
\left(\frac{y}{x}\right)^{\frac{\kappa}{2}}
=
\mathbb{K}_{\mathrm{I}}^{(m)}(x,y),
\nonumber
\end{equation}
with the limiting parameters $\hat{\theta}_{k=1,\ldots,m}\in(-\infty,0]$.
\item[(II)] Suppose that  $\mu(N) N \ra\tau/4$ with $\tau>0$ and \eqref{pitheta-LimII} hold true, then we have 
\begin{equation}
\lim_{N\to\infty} \frac{1}{4N^{2}} K_N\left(   \frac{x}{4N^{2}},   \frac{y}{4N^{2}}\right)
=
\mathbb{K}_{\mathrm{II}}^{(n,m)}(x,y;\tau),
\nonumber
\end{equation}
with  $\hat{\pi}_{l=1,\ldots,n}\in[0,1/\tau)$ and the limiting parameters $\hat{\theta}_{k=1,\ldots,m}\in \cap_{l=1}^{n} (-\infty,\hat{\pi}_l]$.
\item[(III)] Suppose that $\mu(N) N \ra 0$ and \eqref{pitheta-LimIII} hold true, then we have 
\begin{equation}
\lim_{N\to\infty}\frac{1}{4N^2}
K_{N}\left(\frac{{x}}{4N^2},\frac{{y}}{4N^2}\right)
\ e^{\frac{1}{2\mu(N) N}(\sqrt{{y}}-\sqrt{{x}})}
=
\frac{1}{2}(x y)^{-\frac{1}{4}}\mathbb{K}_{\mathrm{III}}^{(n,m)}(\sqrt{x},\sqrt{y}),
\nonumber
\end{equation}
with $\hat{\pi}_{l=1,\ldots,n}\in[0,\infty)$ and the limiting parameters $\hat{\theta}_{k=1,\ldots,m}\in \cap_{l=1}^{n} (-\infty,\hat{\pi}_l]$.
\end{itemize}
\end{thm}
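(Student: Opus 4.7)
The plan is to start from the double-contour representation of the finite-$N$ kernel in Theorem \ref{MainKernelContourIntegral}, substitute the partial degeneracies \eqref{finiterank1} together with \eqref{alpha-def}, and extract the limiting behaviour by suitably rescaling the integration variables $\eta,\zeta$ in each of the three regimes. Writing $a=(1-\mu)^2/(4\mu^2)$ and $b=\alpha^2=(1+\mu)^2/(4\mu^2)$, so that $b-a=1/\mu$, the two products in \eqref{MainKerneloint} factorise as
\[
\prod_{l=1}^N\frac{\zeta-\delta_l}{\eta-\delta_l}\prod_{l=1}^M\frac{\eta-\alpha q_l}{\zeta-\alpha q_l}
= F(\eta,\zeta)^{N-n}\Big(\frac{\eta-b}{\zeta-b}\Big)^{\nu+n-m}\prod_{l=1}^n\frac{\zeta-\delta_l}{\eta-\delta_l}\prod_{j=1}^m\frac{\eta-\alpha q_j}{\zeta-\alpha q_j},
\]
with $F(\eta,\zeta)=(\zeta-a)(\eta-b)/\bigl((\eta-a)(\zeta-b)\bigr)$. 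The relations $\delta_l=b-\pi_l/\mu$ and $\alpha q_j=(b-\theta_j)/\mu$, which follow directly from \eqref{pi-def}--\eqref{theta-def}, place the finite-rank perturbations exactly where needed to produce the rational factors $\prod_l(u-\hat\pi_l)(v-\hat\pi_l)^{-1}\prod_j(v-\hat\theta_j)(u-\hat\theta_j)^{-1}$ visible in \eqref{kernelcrit} and \eqref{kernelsup}.

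The core asymptotic input is the expansion $(N-n)\log F(\eta,\zeta) = (N/\mu)(1/\zeta-1/\eta)+O(\cdot)$, which dictates that $\eta,\zeta$ be rescaled on the scale $N/\mu$. In each regime the rescaling of $\eta,\zeta$ is tuned to the prescribed scaling of $y_1,y_2$. \emph{Regime (III)} ($\mu N\to 0$, $y_i\mapsto y_i/(4N^2)$): the Bessel arguments $2\sqrt{\eta y_i}$ become large, of order $(2\mu N)^{-1}\sqrt{y_i}$, so I replace $I_\kappa$ and $K_\kappa$ by their uniform large-argument asymptotics $I_\kappa(z)\sim e^{z}/\sqrt{2\pi z}$ and $K_\kappa(z)\sim\sqrt{\pi/(2z)}\,e^{-z}$. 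The leading exponentials $e^{\pm 2\sqrt{\eta y_i}}$ are cancelled exactly by the compensating factor $e^{(\sqrt{y}-\sqrt{x})/(2\mu N)}$ on the left-hand side of (III), while the algebraic prefactors combine into the overall $(xy)^{-1/4}/2$; the surviving double integral is pure-exponential and matches \eqref{kernelsup}. \emph{Regime (II)} ($\mu N\to\tau/4$, $y_i\mapsto y_i/(4N^2)$): after rescaling the Bessel arguments tend to the finite limits $2\sqrt{(1-\tau v)y_1/\tau^2}$ and $2\sqrt{(1-\tau u)y_2/\tau^2}$, so the Bessel functions pass through the limit unchanged and the $(\zeta/\eta)^{\kappa/2}$ prefactor becomes $((1-\tau u)/(1-\tau v))^{\kappa/2}$, yielding \eqref{kernelcrit}. \emph{Regime (I)} ($\mu N\to\infty$, $y_i\mapsto(\mu/N)y_i$): the Bessel arguments now converge to $2\sqrt{uy_i}$, which does not simplify further; to match \eqref{kernelsub} I apply the standard integral representations
\[
I_\kappa(2\sqrt{x})=\oint_{\Gamma_{0}}\frac{ds}{2\pi i}\,s^{-\kappa-1}e^{s+x/s},\qquad K_\kappa(2\sqrt{x})=\tfrac{1}{2}x^{\kappa/2}\int_{0}^{\infty}\!t^{-\kappa-1}e^{-t-x/t}\,dt,
\]
which introduce the extra $(s,t)$-integrals of \eqref{kernelsub}. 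In this regime $\hat\pi_l=0$, and the $n$ perturbation factors $\prod_l(\zeta-\delta_l)/(\eta-\delta_l)$ limit to $1$ because $\pi_l/N\to 0$; correspondingly only the $\hat\theta_j$-perturbations survive in the limit, which is consistent with the absence of $\hat\pi_l$'s in \eqref{kernelsub} and the exponent $\nu-m$ rather than $\nu+n-m$.

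In each regime I then deform the original contours $\gamma_\delta,\gamma_q$ so that after rescaling they converge to the target contours $\Gamma_{\mathrm{in}},\Gamma_{\mathrm{out}}$ (and $\Gamma_0$, $(0,\infty)$ in (I)), and apply dominated convergence. The dominating functions come from choosing the contours as steepest-descent paths for $|F(\eta,\zeta)|^{N-n}$, so that $|F|\le 1+O(1/N)$ on a compact piece and $|F|$ decays exponentially away from it, combined with standard uniform bounds on $I_\kappa,K_\kappa$.

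I expect the principal technical difficulty to lie in this last step in regime (I): the contours must be deformed so that portions run off to infinity in order to match the non-compact $t$-integral of \eqref{kernelsub}, and one needs to balance the algebraic behaviour of $F^N$ at infinity against the exponential growth $I_\kappa(2\sqrt{\eta y})\sim e^{2\sqrt{\eta y}}$ to extract an integrable dominating function. A further bookkeeping hurdle is to verify, over the full admissible ranges of $\hat\pi_l$ and $\hat\theta_j$ in \eqref{pitheta-LimI'}, \eqref{pitheta-LimII}, \eqref{pitheta-LimIII}, that the rescaled contours can be chosen disjoint and with the correct enclosure properties; in particular the constraint $\gamma_\delta\cap\gamma_q=\emptyset$ together with \eqref{t<pi} must survive the scaling and translate into the prescribed nesting of $\Gamma_{\mathrm{in}}$ inside $\Gamma_{\mathrm{out}}$.
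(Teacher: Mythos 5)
Your proposal follows essentially the same route as the paper: substitute the degeneracies and center the contour variables at $\alpha^2=(1+\mu)^2/(4\mu^2)$ on scale $N/\mu$, identify the universal factor $(1-1/(N\hat u))^{N-n}\to e^{-1/\hat u}$, treat (II) by a direct limit, (III) via the large-argument asymptotics \eqref{asymptoticsIK} of $I_\kappa,K_\kappa$, and (I) by inserting the contour/Euler representations \eqref{Besselint}, then apply Fubini and dominated convergence. One small misconception in your closing paragraph: the non-compact $t$-integral in \eqref{kernelsub} arises entirely from the integral representation of $K_\kappa$, not from pushing the $u,v$-contours to infinity; in the paper's argument $\Gamma_\theta$ and $\Gamma_1$ remain compact throughout, so no balancing of $F^N$ against the exponential growth of $I_\kappa$ is needed, and dominated convergence is over a fixed compact-$\times$-$(0,\infty)$ domain already furnished by the Bessel representations.
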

The kernel in limit (I) is also obtained from ensemble \eqref{JointDensityindepGX} of the product of two independent random matrices $Y=GX$, where the matrix elements of $X$ are correlated, see Theorem \ref{ThmLimker2}. The kernel in limit (III) is also obtained from ensemble \eqref{corrWishart} of a single random matrix, that has matrix elements correlated from both sides, see Theorem \ref{ThmLimker1}. The fact that in Theorem \ref{hardlimits} (III) we obtain this generalised Bessel kernel, rescaled and with square root arguments, was already observed and explained in \cite{AStr16a}. It is due to the fact that in this strongly coupled limit the squared singular values of $Y=X^*X$ are obtained. The kernel in the limit (II) is called interpolating in the following sense.
\begin{thm}[Interpolating kernel]\label{Interpolate}
The parameter dependent family of kernels \eqref{kernelcrit} is interpolating between the Meijer $G$-kernel and the Bessel kernel, both with finite rank perturbations. Namely, it holds for $x,y$ in any compact subset of $\mathbb{R}_+$:
\begin{itemize}
\item[a)]
$\lim_{\tau\to\infty} \tau \mathbb{K}_{\mathrm{II}}^{(n,m)}(\tau x,\tau y;\tau)\left(\frac{y}{x}\right)^{\frac{\kappa}{2}} = \mathbb{K}_{\mathrm{I}}^{(m)}(x,y)\,,$
\item[b)]
$\lim_{\tau\to 0+} \mathbb{K}_{\mathrm{II}}^{(n,m)}(x,y;\tau) e^{2(\sqrt{y}-\sqrt{x})/\tau}= \frac{1}{2}(xy)^{-\frac14}\mathbb{K}_{\mathrm{III}}^{(n,m)}(\sqrt{x},\sqrt{y})\,.$
\end{itemize}
\end{thm}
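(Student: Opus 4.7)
Both limits will be obtained by analyzing the Bessel factors in the integrand of \eqref{kernelcrit} in appropriate asymptotic regimes. For part (a) I will use integral representations of $I_\kappa$ and $K_\kappa$ to rewrite \eqref{kernelcrit} as a quadruple contour/real integral that already has the structure of $\mathbb{K}_{\mathrm{I}}^{(m)}(x,y)$, and then pass to the limit $\tau\to\infty$ inside the integrand. For part (b), I will instead use the large-argument asymptotic expansions of $I_\kappa$ and $K_\kappa$ directly on the original double integral, observing that the resulting exponential divergence is precisely cancelled by the explicit factor $e^{2(\sqrt{y}-\sqrt{x})/\tau}$ in the statement.

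\textbf{Plan for part (a).} Apply the standard integral representations
\[
I_\kappa\bigl(2\sqrt{z}\bigr)= z^{\kappa/2}\oint_{\Gamma_0}\frac{ds}{2\pi i}\,s^{-\kappa-1}e^{\,s+z/s},\qquad K_\kappa\bigl(2\sqrt{z}\bigr)= \tfrac12\,z^{-\kappa/2}\int_0^\infty t^{\kappa-1}e^{-t-z/t}\,dt,
\]
to the Bessel factors in $\mathbb{K}_{\mathrm{II}}^{(n,m)}(\tau x,\tau y;\tau)$ with $z=(1-\tau v)x/\tau$ and $z=(1-\tau u)y/\tau$ respectively. The emerging powers $(1-\tau v)^{\kappa/2}$ and $(1-\tau u)^{-\kappa/2}$ combine with the prefactor $\bigl(\tfrac{1-\tau u}{1-\tau v}\bigr)^{\kappa/2}$ in \eqref{kernelcrit} to produce a clean factor $(x/y)^{\kappa/2}$, which cancels against the required $(y/x)^{\kappa/2}$ on the left of (a). The Bessel-induced exponent then reads $(1-\tau v)x/(\tau s)-(1-\tau u)y/(\tau t)=-vx/s+uy/t+O(1/\tau)$, converging as $\tau\to\infty$ to exactly the exponential appearing in $\mathbb{K}_{\mathrm{I}}^{(m)}$ in \eqref{kernelsub}. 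Since regime (II) constrains $\hat\pi_k\in[0,1/\tau)$, the perturbations $\hat\pi_k$ necessarily tend to $0$, so $\prod_{k=1}^{n}(u-\hat\pi_k)/(v-\hat\pi_k)\to(u/v)^n$, which together with $(v/u)^{\nu+n-m}$ produces the $(v/u)^{\nu-m}$ of \eqref{kernelsub}. Everything else matches termwise, giving the claim after an application of dominated convergence.

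\textbf{Plan for part (b).} For $u,v$ on fixed compact contours and $\tau\to 0+$, the Bessel arguments $2\sqrt{(1-\tau v)x}/\tau$ and $2\sqrt{(1-\tau u)y}/\tau$ grow like $2\sqrt{x}/\tau$ and $2\sqrt{y}/\tau$, so I apply the uniform asymptotic expansions
\[
I_\kappa(z)=\frac{e^{z}}{\sqrt{2\pi z}}\bigl(1+O(1/z)\bigr),\qquad K_\kappa(z)=\sqrt{\frac{\pi}{2z}}\,e^{-z}\bigl(1+O(1/z)\bigr),
\]
valid uniformly on compacta in $|\arg z|<\pi/2-\delta$, which is ensured since $1-\tau v,\,1-\tau u\to 1$. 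Expanding $2\sqrt{(1-\tau v)x}/\tau=2\sqrt{x}/\tau-v\sqrt{x}+O(\tau)$ (and similarly in $u$), the leading exponentials yield $e^{(2\sqrt{x}-2\sqrt{y})/\tau}\cdot e^{-v\sqrt{x}+u\sqrt{y}}$, and the divergent part is exactly annihilated by the prescribed multiplier $e^{2(\sqrt{y}-\sqrt{x})/\tau}$. The algebraic prefactors $(4\pi\sqrt{(1-\tau v)x}/\tau)^{-1/2}$ and $(\pi\tau/(4\sqrt{(1-\tau u)y}))^{1/2}$, combined with the overall $2/\tau$ from \eqref{kernelcrit} and the vanishing correction $((1-\tau u)/(1-\tau v))^{\kappa/2}\to 1$, produce the overall factor $\tfrac12(xy)^{-1/4}$. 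The remaining integrand coincides with that of $\mathbb{K}_{\mathrm{III}}^{(n,m)}(\sqrt{x},\sqrt{y})$ as given in \eqref{kernelsup}.

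\textbf{Main obstacle.} The core technical difficulty in both parts is to justify interchanging the limit with the contour and/or real integrations. For (b), the Bessel asymptotics must be controlled uniformly on $\Gamma_{\mathrm{in}}\cup\Gamma_{\mathrm{out}}$; this is manageable because the branch points $u=1/\tau$ and $v=1/\tau$ escape to infinity as $\tau\to 0+$, so we may freely choose compact contours and invoke uniform estimates of the Bessel remainders. For (a) the more delicate point is that the constraint $\Re(u)<1/\tau$ on $\Gamma_{\mathrm{in}}$ squeezes the contour into the closed left half-plane as $\tau\to\infty$; the plan is to deform $\Gamma_{\mathrm{in}}$ to a fixed small contour encircling $\{0,\hat\theta_1,\ldots,\hat\theta_m\}$ and bounded away from $\Re(u)=1/\tau$ for large $\tau$, and likewise a fixed $\Gamma_{\mathrm{out}}$ enclosing it, after which standard dominated convergence applies to the quadruple integral since the integrand is bounded uniformly in $\tau$ by an integrable majorant.
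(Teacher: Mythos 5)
Your proof is correct and takes essentially the same approach as the paper's: part (a) is handled via the integral representations of $I_\kappa$ and $K_\kappa$ from \eqref{Besselint}, yielding a quadruple integral whose integrand converges termwise to that of \eqref{kernelsub}, while part (b) uses the large-argument asymptotics \eqref{asymptoticsIK} whose exponential growth is cancelled by the prescribed multiplier; both are concluded by dominated convergence as in the paper.
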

This generalises the interpolating kernel derived in \cite{DZL} for $m=0$, and in \cite{AStr16b} for $m=n=0$, where an alternative integral representation was given. It is an open problem to map these two representations.

In \cite{BBP} a modification of the Tracy-Widom distribution for the largest eigenvalue was observed from finite rank perturbations.
One could ask if a similar phenomenon occurs here for the distribution of the smallest eigenvalue, applying a Fredholm determinant representation in terms of the three different limiting kernels from our ensemble \eqref{JointDensityGX}. Here, we stress that
the difference in scaling and thus of the fluctuations in limits (I) and (II) indicates such a transition. On the other hand, the same scaling in limits (II) and (III) indicates a smooth interpolation, consistent with the findings of \cite{AStr16b} without finite rank perturbations.

The remainder of this article is organised as follows. In the next Section \ref{JPD} we derive the joint densities of squared singular values for all three ensembles. In Section \ref{detPP} we show that all three ensembles represent determinantal point processes, and derive their kernels at finite matrix size as double contour integrals. The last Section  \ref{large-N} is devoted to the asymptotic analysis of the three kernels at the origin, their integrability and the interpolating property of the kernel in limit (II).
\section{Joint Probability Densities}\label{JPD}
For pedagogical reasons we will start with the derivation of the joint probability density of the generalised Wishart ensemble \eqref{corrWishart}, where we extend the results of \cite{BP} to rectangular matrices. Then, we move to the product of two independent matrices \eqref{JointDensityindepGX} where one matrix has correlated entries, before coupling these two matrices in our  most general ensemble \eqref{JointDensityGX}.
\subsection{Joint probability density of the generalised Wishart ensemble}\label{JPDgW}
This ensemble that is correlated from two sides is defined following \cite{BP}. Let $X$ be a complex random matrix of size $M\times N$ with $M-N=\nu\geq0$, and its matrix entries distributed as
\begin{equation}\label{JointDensityX}
\mathcal{P}_1(X)= c_1
\exp\left[-\Tr\left(X\Sigma X^*\right)-\Tr\left(QXX^*\right)\right] .
\end{equation}
Here, $\Sigma$ is a fixed Hermitian matrix of size $N\times N$ with eigenvalues $\sigma_1, \ldots, \sigma_N$, and
$Q$ is a fixed Hermitian matrix of size $M\times M$ with eigenvalues $q_1, \ldots, q_M$. For simplicity we first assume that these are all pairwise distinct. The normalisation constant reads $c_1=\pi^{-NM}\prod_{i=1}^M\prod_{j=1}^N(q_i+\sigma_j)$. When thinking of $\Sigma$ and $Q$ as originating from given empirical covariance matrices we would choose them to have only
positive eigenvalues, ensuring convergence.

For what follows below we will choose $Q$ to have positive eigenvalues, and allow $\Sigma$ to have also negative eigenvalues, which leads to the constraint
\begin{equation}
q_i+\sigma_j>0\ ,\ \ \forall i=1,\ldots,M\ ,\ \  \forall j=1,\ldots,N \   .
\label{constraint3}
\end{equation}
Note that the ensemble \eqref{JointDensityX} is different from the double correlated Wishart ensemble, with distribution $\tilde{\mathcal{P}}(X)\sim\exp[-\Tr(QX\Sigma X^*)]$, that has been considered in \cite{WWG}. Our first result is the following
\begin{prop}\label{gWjpdf} The joint probability density of the squared singular values $x_1,\ldots,x_N$ of $X$ distributed according to \eqref{JointDensityX} equals
\begin{equation}\label{biensemble2}
P_1(x_1,\ldots,x_N)=\frac{1}{Z_1}\det\left[e^{-\sigma_i x_j}\right]_{i,j=1}^N\det\left[1,q_i,\ldots,q_i^{\nu-1},e^{-q_i x_1},\ldots,e^{-q_i x_N}\right]_{i=1}^{M},
\end{equation}
where the normalising constant is given by
\begin{align}
Z_1=N!\,(-1)^{N\nu}{\Delta_{M}(q_{1},...,q_{M})\Delta_{N}(\sigma_{1},...,\sigma_{N})}{\prod_{i=1}^{M}\prod_{j=1}^{N}( q_{i}+\sigma_{j})^{-1}} .
\label{Z"}
\end{align}
\end{prop}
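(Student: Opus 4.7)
My plan is to reduce \eqref{JointDensityX} to a distribution with independent Gaussian entries, pass to the singular value decomposition of $X$, and evaluate the two resulting unitary group integrals via the Harish-Chandra-Itzykson-Zuber (HCIZ) formula and its rectangular degeneration. First, by the bi-invariance of Haar measure on $U(N)\times U(M)$ together with the cyclicity of the trace, the density \eqref{JointDensityX} depends only on the spectra of $\Sigma$ and $Q$; absorbing their diagonalising unitaries into $X$ allows me to assume $\Sigma=\diag(\sigma_1,\ldots,\sigma_N)$ and $Q=\diag(q_1,\ldots,q_M)$. In this gauge the exponent collapses to $-\sum_{i,j}(q_i+\sigma_j)|X_{ij}|^2$, so the matrix entries are independent complex Gaussians of variance $(q_i+\sigma_j)^{-1}$; convergence requires exactly \eqref{constraint3} and the normalisation reads $c_1=\pi^{-MN}\prod_{i,j}(q_i+\sigma_j)$.

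Next, I would write the singular value decomposition $X=ULV^*$ with $L$ the $M\times N$ matrix carrying $\sqrt{x_j}$ on its main diagonal, so that $dX=c_{M,N}\,\Delta_N(x)^2\prod_{j=1}^N x_j^{\nu}\,dx_1\cdots dx_N\,dU\,dV$, with $dU$ and $dV$ normalised Haar measures on $U(M)$ and $U(N)$. Using $\Tr(X\Sigma X^*)=\Tr(\Sigma V\diag(x)V^*)$ and $\Tr(QXX^*)=\Tr(QU\diag(x_1,\ldots,x_N,0,\ldots,0)U^*)$, the two angular integrations decouple. The $V$ integral is the classical HCIZ identity on $U(N)$ and evaluates to a constant multiple of $\det[e^{-\sigma_i x_j}]_{i,j=1}^N\big/(\Delta_N(\sigma)\Delta_N(x))$.

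The $U$ integral is the technically delicate step, because $\diag(x_1,\ldots,x_N,0,\ldots,0)$ has rank only $N<M$. I would obtain it as a degeneration of the square HCIZ formula on $U(M)$ applied with eigenvalues $(x_1,\ldots,x_N,b_1,\ldots,b_{\nu})$, letting $b_1,\ldots,b_{\nu}\to 0$ via iterated l'H\^opital on both the numerator determinant and the full Vandermonde. The derivatives $\partial_b^{k-1}e^{-q_ib}|_{b=0}=(-q_i)^{k-1}$ produce the polynomial columns $1,q_i,\ldots,q_i^{\nu-1}$ that appear in \eqref{biensemble2}, while $\prod_{j,k}(b_k-x_j)\to(-1)^{N\nu}\prod_j x_j^{\nu}$ supplies exactly the factor that will cancel $\prod x_j^{\nu}$ from the Jacobian; the result is
\[
\int_{U(M)}e^{-\Tr(QU\diag(x_1,\ldots,x_N,0,\ldots,0)U^*)}\,dU\;\propto\;\frac{\det[1,q_i,\ldots,q_i^{\nu-1},e^{-q_ix_1},\ldots,e^{-q_ix_N}]_{i=1}^{M}}{\Delta_M(q)\,\Delta_N(x)\prod_j x_j^{\nu}}.
\]
This l'H\^opital degeneration, together with careful bookkeeping of the signs coming from the Vandermondes $\Delta_N(-\sigma)$ and $\Delta_M(-q)$, from the column permutation moving the $\nu$ polynomial columns in front of the $N$ exponential columns, and from $(-q_i)^{k-1}$ versus $q_i^{k-1}$, is the main technical obstacle. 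Once both unitary integrals are in hand, the Jacobian $\Delta_N(x)^2\prod x_j^{\nu}$ exactly cancels the two factors $\Delta_N(x)$ and $\prod x_j^{\nu}$ in the HCIZ denominators, leaving $\sigma$- and $q$-dependence only through $\Delta_N(\sigma)$ and $\Delta_M(q)$. Combining these with $c_1$ and with the $N!$ produced by restricting to ordered singular values yields \eqref{biensemble2} and the normalisation \eqref{Z"}, the overall $(-1)^{N\nu}$ emerging from the column permutation.
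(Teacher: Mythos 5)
Your derivation of the determinant structure of $P_1$ takes essentially the same route as the paper's: reduce to diagonal $\Sigma$ and $Q$, pass to the singular value decomposition of $X$, apply the square HCIZ formula to the $V$--integral, and obtain the $U$--integral as a degenerate HCIZ by letting $\nu$ of the $M$ eigenvalues tend to zero via iterated l'H\^opital. The only structural difference is that you take the rectangular SVD $X=ULV^*$ in one step, whereas the paper first writes $X=U\binom{X_0}{O}$ with $X_0$ square, then SVDs $X_0$ and introduces an auxiliary $U(\nu)$ integral $\mathfrak U_1$ so that $U,\mathfrak U,\mathfrak U_1$ together parametrise $U(M)/U(1)^N$; the two parametrisations are equivalent and lead to the same group integrals.

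Where you genuinely diverge from the paper is in the normalising constant $Z_1$, and there your account has a gap. The paper does \emph{not} track the proportionality constants from the HCIZ integrals and the SVD Jacobian. Instead it takes the derived density up to a constant, integrates it using the generalised Andr\'eief formula \eqref{gAndreief}, and evaluates the resulting determinant with the Basor--Forrester generalised Cauchy determinant \eqref{genCauchy}, which delivers both the $\prod_{i,j}(q_i+\sigma_j)^{-1}$ and the factor $(-1)^{N\nu}$ in one stroke. You instead propose to keep explicit track of all constants and signs, and you assert that the overall $(-1)^{N\nu}$ ``emerg[es] from the column permutation.'' That attribution is not correct as stated: moving the $\nu$ polynomial columns to the front of the degenerate-HCIZ determinant produces a factor $(-1)^{N\nu}$, but this \emph{cancels} against the $(-1)^{N\nu}$ arising from $\prod_{j,k}(b_k-x_j)\to(-1)^{N\nu}\prod_j x_j^\nu$ in the Vandermonde $\Delta_M(x_1,\ldots,x_N,b_1,\ldots,b_\nu)$; what survives from the column manipulations is instead $(-1)^{\nu(\nu-1)/2}$ from replacing $(-q_i)^{k-1}$ by $q_i^{k-1}$, and further signs are hidden in the explicit HCIZ constant (which itself carries a factor $(-1)^{M(M-1)/2}$ for the $t=-1$ normalisation you are implicitly using). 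You flag the bookkeeping as ``the main technical obstacle'' but do not resolve it, so as written the proposal does not actually establish \eqref{Z"}. The safest fix is the one the paper uses: once the determinant form of $P_1$ is in hand up to an unknown constant, determine $Z_1$ by direct integration via Andr\'eief and the Basor--Forrester identity rather than by chasing the SVD and HCIZ constants.
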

In the case of quadratic matrices with $\nu=0$, where the $x$-independent columns in the second determinant of the distribution \eqref{biensemble2} are absent, this result was derived in \cite{BP}.
\begin{proof}
As the first step we will reduce the rectangular matrix $X$ to a quadratic matrix, following \cite{Jonit}. In the sequel we will denote by $O$ matrices with zero entries. Set
\begin{equation}
X=U\left(\begin{array}{c}
           X_0 \\
           O
         \end{array}\right),
         \label{Xdecomp}
\end{equation}
where $U\in U(M)/U(N)\times U(\nu)$ is an $M\times M$ unitary matrix, and $X_0$ is an $N\times N$ complex matrix. We note that
$X^*X=X_0^*X_0$ is of size $N\times N$ and thus the matrices $X$ and $X_0$ have the same squared singular values. Inserting the Jacobian \cite{Jonit} the joint probability distribution of $X_0$ and $U$ following from \eqref{JointDensityX} is thus proportional to
\begin{align}
\mathcal{P}_1(X)[dX]\sim
e^{-\Tr\left(\Sigma X_{0}^{*}X_{0}\right)}e^{
-\Tr\left(QU\left(\begin{array}{cc}
                                                              X_0X_0^* & O \\
                                                              O & O
                                                            \end{array}
\right)U^*\right)}\det\left[X_0^*X_0\right]^{\nu}[dX_0]d\mu(U) .
\label{X0decomp}
\end{align}
Here, $d\mu(U) $ denotes the corresponding Haar measure, $[dX]=\prod_{i=1}^M\prod_{j=1}^N d{X_{i,j}}^R d{X_{i,j}}^I$ denotes the flat Lebesgue measure over all independent matrix elements $X_{ij}$, their real and imaginary parts ${X_{i,j}}^R$ and ${X_{i,j}}^I$.
In the following we will suppress all proportionality constants that can be determined, and fix the normalisation $Z_1$ of the joint probability density of squared singular values \eqref{biensemble2} only at the end. The singular value decomposition for $X_0$ can be written as
\begin{equation}
X_0= \mathfrak{U} \Lambda_x^{\frac{1}{2}}P^*\,, \ \
\Lambda_x^{\frac{1}{2}}=\left(\begin{array}{cccc}
                       \sqrt{x_1} & 0 & \ldots & 0 \\
                       0 & \sqrt{x_2} & \ldots & 0 \\
                        &  & \ddots &  \\
                       0 & 0 & \ldots & \sqrt{x_N}
                     \end{array}
\right),
\label{svdecomp}
\end{equation}
where $\mathfrak{U}$ and $P$ are unitary matrices with $P\in\, U(N)$ and $\mathfrak{U}\in\, U(N)/U(1)^N$. Here and further on we will use the notion $\Lambda_{x}=\diag\left(x_{1},\ldots,x_{N}\right)$ for this and other sets of variables. The measure $[dX_{0}]$ decomposes as
\begin{equation}
[dX_0]\sim\Delta_N(x_1,\ldots,x_N)^2dx_1\ldots dx_Nd\mu(\mathfrak{U})d\mu(P)\,,
\label{JacobiX}
\end{equation}
and we arrive at
\begin{equation}
\begin{split}
\mathcal{P}_1(X)[dX]\sim&\ e^{-\Tr\left(\Sigma P\Lambda_{x}P^{*}\right)}
e^{-\Tr\left[QU\left(\begin{array}{cc}
                       \mathfrak{U} \Lambda_x \mathfrak{U}^* & O \\
                       O & O
                     \end{array}
\right)U^*\right]} \prod_{k=1}^Nx_k^{\nu}\ \Delta_N(x_1,\ldots,x_N)^2\\
&\times dx_1\ldots dx_Nd\mu(U)d\mu(\mathfrak{U})d\mu(P)\,.
\label{preP1}
\end{split}
\end{equation}

To obtain \eqref{biensemble2} we need to compute the group integrals over $P$, and over $U$ and $\mathfrak{U}$ that have already decoupled. Furthermore, we have to diagonalise the fixed matrices, $\Sigma=V\Lambda_\sigma V^*$ and $Q=\tilde{V}\Lambda_q\tilde{V}^*$, and absorb these extra factors $V$ and $\tilde{V}$ of fixed unitary matrices through the invariance of the Haar measures of these group integrals. For the integral over $P$ this is straightforward and we can readily apply the standard Harish-Chandra--Itzykson--Zuber (HCIZ) integral formula, reading \cite{HC,IZ}:
\begin{align}
\int_{U(N)}d\mu(P)e^{-\Tr\left(\Lambda_\sigma P\Lambda_{x}P^{*}\right)}={\rm{const.}}
\frac{\det\left[e^{-\sigma_{i} x_{j}}\right]_{i,j=1}^{N}}{\Delta_{N}(\sigma_1,\ldots,\sigma_N)\Delta_{N}(x_1,\ldots,x_N)}\ .
\label{HCIZ}
\end{align}
The constant that is independent of the $\sigma_j$ and $x_j$ can be determined and depends only on the convention in normalising the Haar measure of the unitary group.

Next we can turn to the integrals over $U$ and $\mathfrak{U}$. Here, an additional integral over $\mathfrak{U}_{1}$ can be introduced that we choose to be over $U\left(\nu\right)$,
\begin{equation}
e^{-\Tr\left[QU\left(\begin{array}{cc}
                       \mathfrak{U} \Lambda_x \mathfrak{U}^*  & O \\
                       O & O
                     \end{array}
\right)U^*\right]}
=\mbox{const.} \int d\mu\left(\mathfrak{U}_{1}\right)e^{-\Tr\left[\tilde{V}\Lambda_q\tilde{V}^* U\left(\begin{array}{cc}
                       \mathfrak{U}  & O \\
                       O & \mathfrak{U}_{1}
                     \end{array}
\right)\widetilde{\Lambda}_{x} \left(\begin{array}{cc}
                       \mathfrak{U}^{*}  & O \\
                       O & \mathfrak{U}_{1}^{*}
                     \end{array}
\right)   U^*\right]}
\,,\nonumber
\end{equation}
where the extra zeros are denoted by the diagonal $M\times M$ matrix
\begin{equation}
\widetilde{\Lambda}_{x}=\diag\left(x_{1},\ldots,x_{N},0,\ldots,0\right)\,.\nonumber
\end{equation}
The three integrations  over $U$, $\mathfrak{U}$, and $\mathfrak{U}_{1}$ together parametrise the coset space $U(M)/U(1)^N$, which can be used to absorb $\tilde{V}$ by invariance of the corresponding Haar measure. In order to apply the HCIZ formula \eqref{HCIZ} we need to take into account that $\nu$ eigenvalues of $\widetilde{\Lambda}_{x}$ are equal to zero, which can be obtained by l'H\^opital's rule. We thus arrive at
\begin{equation}
\int d\mu(U)d\mu(\mathfrak{U}) e^{-\Tr\left[QU\left(\begin{array}{cc}
                        \mathfrak{U} \Lambda_x \mathfrak{U}^* & O \\
                       O & O
                     \end{array}
\right)U^*\right]}
=\const  \frac{{\det}\left[1,q_i,\ldots, q_i^{\nu-1},e^{-q_ix_1},\ldots,e^{-q_ix_N}\right]_{i=1}^M}{\Delta_{M}\left(q_1,\ldots,q_M\right)\Delta_{N}\left(x_1,\ldots,x_N\right) \prod_{k=1}^Nx_k^{\nu}}\,,
\label{HCIZdeg}
\end{equation}
where $U$ is integrated over $U(M)/U(N)\times U(\nu)$, $\mathfrak{U}$ over $U(N)/U(1)^N$ (and $\mathfrak{U}_{1}$ over $U\left(\nu\right)$). Integrating \eqref{preP1} over the corresponding coset spaces, from \eqref{HCIZdeg} together with \eqref{HCIZ}
we arrive at \eqref{biensemble2}, up to the normalisation constant $Z_1$. For its calculation we apply the generalisation  \cite[Appendix C]{Kieburg:2010}  of the Andr\'eief formula \cite{And} that follows from simple linear algebra. We quote the following form for later use:
\begin{eqnarray}
&&\prod_{j=1}^N\int_0^\infty dx_j
\det\left[
\begin{array}{r}
R_{a,b}\big|_{1\leq a\leq k}^{1\leq b\leq N+k}\\
\\
\psi_b(x_a)\big|_{1\leq a\leq N}^{1\leq b\leq N+k}\\
\end{array}
\right]
\det\left[
S_{b,a}\big|^{1\leq a\leq l}_{1\leq b\leq N+l}\ \
\varphi_b(x_a)\big|^{1\leq a\leq N}_{1\leq b\leq N+l}
\right]
\label{genAnd}\\
&=& (-1)^{kl}N!\det\left[
\begin{array}{cc}
O_{k\times l}
& R_{a,b}\big|_{1\leq a\leq k}^{1\leq b\leq N+k}
\\
&\\
S_{b,a}\big|^{1\leq a\leq l}_{1\leq b\leq N+l}
&\int_0^\infty dx\ \varphi_b(x)\psi_a(x) \big|^{1\leq a\leq N+k}_{1\leq b\leq N+l} \\
\end{array}
\right].
\nonumber
\end{eqnarray}
Here, we have explicitly spelled out the dimension of the matrix block with zero elements $O$. This identity is valid for two sets of functions $\psi_j(x)$ and $\varphi_k(x)$ that are suitably integrable, and two constant matrices $R$ and $S$. The integration domains can also be chosen differently, cf. \cite{Kieburg:2010}. When specifying to $k=0$, that is in the absence of matrix $R$, and to $l=\nu$ with $S_{b,a}=q_b^{a-1}$, we obtain
\begin{eqnarray}
&&\prod_{k=1}^N\int_0^\infty dx_k \det\left[{\psi_{i} (x_{j})}\right]_{i,j=1}^{N}{\det}\left[1,q_j,\ldots, q_j^{\nu-1},\varphi_j(x_1),\ldots,\varphi_j(x_N)\right]_{j=1}^M
\nonumber\\
&&=N!\det\left[1,q_{j},\ldots,q_{j}^{\nu-1},\int_0^\infty dx{\psi_{1}(x)\varphi_{j}(x)},\ldots,\int_0^\infty dx{\psi_{N}(x)\varphi_{j}(x)}\right]_{j=1}^{M}\,.
\label{gAndreief}
\end{eqnarray}
The standard Andr\'eief formula is obtained when also setting $\nu=0$, when the first $\nu$ columns on left and right hand side (and thus matrix $S$ in \eqref{genAnd}) are absent. Inserting
\begin{equation}
\label{psiphi1}
\psi_i^{(1)}(x)=e^{-\sigma_ix}\ \  \mbox{and}\ \ \varphi_j^{(1)}(x)=e^{-q_j x}
\end{equation}
into \eqref{gAndreief} we have the resulting simple integral that we define for later purpose:
\begin{equation}
\label{I1def}
I_{i,j}^{(1)}=\int_{0}^{\infty}dx\,e^{-q_{i}x}e^{-\sigma_{j}x}=\frac{1}{q_{i}+\sigma_{j}}\ ,\ \ \mbox{for}\ \ i=1,\ldots,M,\ \ j=1,\ldots,N\,.
\end{equation}
We can now apply the generalised Cauchy determinant derived by Basor and Forrester \cite[Lemma 2]{Basor:1994} to \eqref{gAndreief} and \eqref{I1def}
\begin{align}
\det\left[1,q_{j},\ldots,q_{j}^{\nu-1},\frac{1}{q_{j}+\sigma_{1}},\ldots,\frac{1}{q_{j}+\sigma_{N}}\right]_{j=1}^{M}=
(-1)^{N\nu}\frac{\Delta_N(\sigma_1,\ldots,\sigma_N)\Delta_M(q_1,\ldots,q_M)}{\prod_{i=1}^N\prod_{j=1}^M(q_j+\sigma_i)},
\label{genCauchy}
\end{align}
with $\nu=M-N$.
Equation \eqref{genCauchy} yields the normalisation in \eqref{Z"}.
\end{proof}
\subsection{Joint probability density of the product of two independent correlated matrices}\label{JPD2indep}
Let us consider two independent complex random matrices $G$ of size $L\times M$, and $X$ of size $M\times N$. The matrix elements of each $G$ and $X$ are correlated, given by the probability distribution
\begin{align}
\mathcal{P}_2(G,X)=&\,c_2\exp\left[-\Tr\left(WGG^*\right)-\Tr\left(QXX^*\right)\right]\ .
\label{P2}
\end{align}
Here, $W$ is a fixed Hermitian matrix of size $L\times L$ with positive eigenvalues, and $Q$ is a fixed Hermitian matrix of size $M\times M$ with pairwise non-degenerate eigenvalues $q_1,\ldots, q_M>0$. In what follows we will restrict ourselves to the case of $W=\alpha \eins_L$ being proportional to the identity, with $\alpha>0$. The reason is that for generic $W$ the joint probability density of squared singular values of the product $Y=GX$ is no longer determinantal, as we will show. The normalising constant $c_2$ in \eqref{P2} can be computed by performing the Gaussian integrals over $G$ and $X$, which leads to $c_2=\pi^{-M(L+N)}\alpha^{ML}\prod_{j=1}^Mq_j^{N}$. Our first result is the following
\begin{thm}\label{MainP2}
Denote by $x_1$, $\ldots$, $x_N$ the squared singular values of $X$ and by $y_1$, $\ldots$, $y_N$ the squared singular values of $Y=GX$, with $G$ and $X$ distributed according to \eqref{P2} with $W=\alpha\eins_{L}$. Their joint probability density is given by
\begin{eqnarray}
P_2(x_1,\ldots,x_N;y_1,\ldots,y_N)&=&\frac{1}{Z_2}
\prod_{j=1}^N{y_j^\kappa} \Delta_N(y_1,\ldots,y_N)
\det\left[{x_j^{-\kappa-1}}e^{-\alpha\frac{y_k}{x_j}}\right]_{j,k=1}^N\nonumber\\
&&\times\det\left[1,q_i,\ldots,q_i^{\nu-1},e^{-q_ix_1},\ldots,e^{-q_ix_N}\right]_{i=1}^{M},
\label{P2jpdf}
\end{eqnarray}
where $Z_2$ is a normalising constant given by
\begin{align}
Z_2=\left(N!\right)^{2}(-\alpha)^{N\nu+\frac{N(N-1)}{2}}\alpha^{-N\kappa-NM}
\left(\prod_{l=1}^{N}\Gamma(\kappa+l)\Gamma(l)\right){\Delta_{M}(q_1,\ldots,q_M)}{\prod_{k=1}^{M}q^{-N}_{k}}\,.
\label{ZGX}
\end{align}
\end{thm}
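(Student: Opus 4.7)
The plan is to mimic the proof of Proposition \ref{gWjpdf}, with the extra ingredient being the Gaussian matrix $G$. The critical use of the assumption $W=\alpha\eins_L$ enters through the $U(M)$-invariance of $\Tr(GG^*)$, which lets us absorb the left singular vectors of $X$ into $G$ and decouple the two sectors.

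Concretely, I would first reduce the rectangular $X$ to a square matrix $X_0$ via \eqref{Xdecomp} and perform the SVD $X_0=\mathfrak{U}_X\Lambda_x^{1/2}P_X^*$, picking up the familiar Jacobian $\Delta_N(x)^2\prod_k x_k^{\nu}$ as in \eqref{preP1}. Next, using the invariance $G\mapsto G\tilde U$ under any $M\times M$ unitary $\tilde U$, I would absorb into $G$ the combined unitary built from the reduction coset, $\mathfrak{U}_X$, and the auxiliary $\mathfrak{U}_1\in U(\nu)$ as in \eqref{HCIZdeg}. After this substitution $Y=G_1\Lambda_x^{1/2}P_X^*$, where $G_1$ is the $L\times N$ block consisting of the first $N$ columns of the absorbed $G$; the remaining $L\times\nu$ block $G_2$ decouples as an independent Gaussian and integrates to a constant. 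The right-unitary $P_X$ does not affect the squared singular values of $Y$, and the $Q$-dependent exponent becomes $\Tr(\tilde U^*Q\tilde U\,\tilde\Lambda_x)$ with $\tilde\Lambda_x=\diag(x_1,\ldots,x_N,0,\ldots,0)$, handled verbatim by the degenerate HCIZ identity \eqref{HCIZdeg}. This produces the second determinant appearing in \eqref{P2jpdf}.

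For the $G_1$-integration, change variables to $B:=G_1\Lambda_x^{1/2}$ with Jacobian $\bigl(\prod_k x_k\bigr)^{-L}$, so the Gaussian weight becomes $\exp(-\alpha\Tr B\Lambda_x^{-1}B^*)$. The SVD of the rectangular matrix $B$ contributes the extra $\prod_k y_k^{\kappa}\Delta_N(y)^2$ factor, and the remaining unitary integral over the right-singular-vector matrix of $B$ is evaluated by the standard HCIZ formula \eqref{HCIZ} with eigenvalues $\alpha/x_i$ and $y_j$, yielding $\det[e^{-\alpha y_j/x_i}]_{i,j=1}^N/(\Delta_N(\alpha/x)\Delta_N(y))$. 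Using $\Delta_N(1/x_1,\ldots,1/x_N)=(-1)^{N(N-1)/2}\Delta_N(x)\prod_k x_k^{-(N-1)}$ and collecting Jacobians and powers of $\prod_k x_k$, the $\Delta_N(x)$ factors cancel and the prefactor $x_i^{-1-\kappa}$ can be pulled inside the determinant row-by-row, reproducing \eqref{P2jpdf}.

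For the normalisation $Z_2$, I would compute $\int P_2\,dx\,dy=1$ in two stages. The $y$-integration uses the standard Andr\'eief identity applied to $\prod_k y_k^{\kappa}\Delta_N(y)=\det[y_j^{l-1+\kappa}]_{j,l=1}^N$ and $\det[x_i^{-1-\kappa}e^{-\alpha y_j/x_i}]$: the elementary integrals $\int_0^\infty y^{l-1+\kappa}e^{-\alpha y/x_i}\,dy=\Gamma(l+\kappa)(x_i/\alpha)^{l+\kappa}$ factor out a Vandermonde $\Delta_N(x)$ together with $\alpha^{-N\kappa-N(N+1)/2}\prod_l\Gamma(l+\kappa)$. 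The $x$-integral then fits the generalised Andr\'eief formula \eqref{gAndreief} with $\psi_i(x)=x^{i-1}$ and $\varphi_j(x)=e^{-q_jx}$, yielding a determinant whose last $N$ columns are $\Gamma(k)/q_j^k$ and whose first $\nu$ columns are $q_j^{0},\ldots,q_j^{\nu-1}$. Multiplying row $j$ by $q_j^N$ (contributing $\prod_j q_j^{-N}$) and permuting columns to sort the monomial powers into $0,1,\ldots,M-1$ reduces this to $\Delta_M(q)$ with sign $(-1)^{N\nu+N(N-1)/2}$. Combining the $\alpha$-exponents via $-N\kappa-N(N+1)/2=N\nu+N(N-1)/2-N\kappa-NM$ reproduces \eqref{ZGX}. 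The main obstacle is the careful bookkeeping of signs, Jacobians, and powers of $\alpha$, $x_k$ and $q_j$ across the coset decompositions and the two HCIZ applications — routine but error-prone, as is typical for such matrix-integral calculations.
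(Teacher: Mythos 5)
Your proof is correct and follows essentially the same route as the paper's: reduce $X$ to square form via \eqref{Xdecomp}, absorb the left unitary factors of $X$ into $G$ using the $U(M)$-invariance of $\Tr(GG^*)$, decouple the $\nu$ surplus columns of $G$, change variables to $B=G_1\Lambda_x^{1/2}$ (the paper's $Y=\widehat G_0 X_0$), and apply the degenerate HCIZ integral \eqref{HCIZdeg} on the $Q$-side and the inverse-HCIZ \eqref{HCIZinv} on the $G$-side, with the Jacobian bookkeeping you describe yielding $\prod x_j^{-\kappa-1}$ after the Vandermonde cancellations. The one genuine (though minor) difference is in the normalisation: the paper integrates out $x_1,\dots,x_N$ first — producing the $K_\kappa$-Bessel functions of Corollary \ref{CorrP2} — and then integrates the $y$'s using \eqref{I2def}, whereas you integrate the $y$'s first (recovering $\Delta_N(x)$ with $\Gamma(\kappa+l)\alpha^{-\kappa-l}$ prefactors via elementary Gamma integrals) and then the $x$'s via the generalised Andr\'eief formula and a column sort to $\Delta_M(q)$. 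Your order avoids the Bessel-function intermediate entirely and is arguably cleaner; your sign and $\alpha$-exponent bookkeeping ($-N\kappa-N(N+1)/2 = N\nu+N(N-1)/2-N\kappa-NM$ with sign $(-1)^{N\nu+N(N-1)/2}$) checks out against \eqref{ZGX}.
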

\begin{proof}
We start with a general fixed Hermitian matrix $W$ up to the point where it becomes clear, why only $W=\alpha \eins_L$ leads to a determinantal point process. We begin by decomposing the random matrix $X$ as in \eqref{Xdecomp}, leading immediately to
\begin{align}
\mathcal{P}_2(G,X)[dG][dX]&\sim e^{-\Tr\left(WGG^*\right)}
e^{
-\Tr\left(QU\left(\begin{array}{cc}
                                                              X_0X_0^* & O \\
                                                              O & O
                                                            \end{array}
\right)U^*\right)}\det\left[X_0^*X_0\right]^{\nu}[dG][dX_0]d\mu(U)\,,
\label{P2decomp}
\end{align}
in analogy to \eqref{X0decomp}, where the corresponding measures are defined. We are interested in the singular values of the product matrix $Y=GX$, and in view of \eqref{Xdecomp} we set $\widehat{G}=GU$ which is again a matrix of size $L\times M$. We split this matrix $\widehat{G}$ into its first $N$ columns and its remaining $\nu$ columns by introducing the matrices $\widehat{G}_0$ of size $L\times N$ and $\widehat{G}_1$ of size $L\times \nu$ as
\begin{equation}
GU=\widehat{G}=\left(\widehat{G}_0,\widehat{G}_1\right)\, ,
\label{Gsplit}
\end{equation}
with $[dG]=[d\widehat{G}_0][d\widehat{G}_1]$.
It immediately follows that  the matrices $\widehat{G}_0X_0$ and ${G}X$ have the same singular values. Furthermore, the product of $GG^{*}$ can be written as
\begin{equation}
GG^*=\widehat{G}\widehat{G}^*=\left(\widehat{G}_0,\widehat{G}_1\right)\left(\begin{array}{c}
                                                                       \widehat{G}_0^* \\
                                                                       \widehat{G}_1^*
                                                                     \end{array}
\right)=\widehat{G}_0\widehat{G}_0^*+\widehat{G}_1\widehat{G}_1^*\,.\nonumber
\end{equation}
For that reason the matrix $\widehat{G}_1$ completely decouples in the exponent in \eqref{P2decomp} and can be integrated out, being part of the normalisation. Now consider the change of variables for invertible $X_0$:
\begin{equation}
\label{changeGY}
\widehat{G}_0\longrightarrow Y=\widehat{G}_0X_0,\;\;\; X_0\longrightarrow X_0\,.
\end{equation}
Note that $Y$ is a matrix of size $L\times N$ and that the Jacobian of this transformation is given by $\det\left[X_0^*X_0\right]^{-L}$. Thus we obtain that the joint distribution of $Y$, $X_0$ and $U$ is proportional to
\begin{equation}
\begin{split}
e^{-\Tr\left(WY\left(X_0^*X_0\right)^{-1}Y^*\right)}
e^{-\Tr\left(QU\left(\begin{array}{cc}
                       X_0X_0^* & O \\
                       O & O
                     \end{array}
\right)U^*\right)}\det\left[X_0^*X_0\right]^{\nu-L}[dY][dX_0]d\mu(U).
\end{split}
\end{equation}
The singular value decomposition for $Y$ is in analogy to that of $X_0$ in \eqref{svdecomp}, using the same notation:
\begin{equation}
Y=\widetilde{U}\Lambda_y^{\frac12}V\ .
\label{Ysv}
\end{equation}
Here, $\widetilde{U}$ is an $L\times N$ matrix with $\widetilde{U}^*\widetilde{U}=\eins_N$, whereas $V\in\, U(N)$ is a unitary matrix, cf. \cite{Jonit}. The measure $[dY]$ can be expressed through the singular values $y_1,\ldots,y_N$ in analogy to \eqref{X0decomp} and \eqref{JacobiX}, leading to
\begin{equation}
[dY]\sim\prod_{l=1}^Ny_l^{\kappa}\Delta_N(y_1,\ldots,y_N)^2dy_1\ldots dy_Nd\mu(\widetilde{U})d\mu(V)\,.\nonumber
\end{equation}
The joint probability density of squared singular values of $X_0$ and $Y$ is obtained from the following relation between probability measures:
\begin{equation}
\begin{split}
&\mathcal{P}_2(G,X)[dG][dX]\sim e^{-\Tr\left[\left(VP\right)^*\Lambda_y^{\frac{1}{2}}\widetilde{U}^{*}W\widetilde{U}\Lambda_{y}^{\frac{1}{2}}\left(VP\right)\Lambda_x^{-1}\right]}
e^{-\Tr\left[QU\left(\begin{array}{cc}
                       \mathfrak{U} \Lambda_x \mathfrak{U}^* & O \\
                       O & O
                     \end{array}
\right)U^*\right]}\\
&\times
\prod_{l=1}^Nx_l^{\nu-L}
y_l^{\kappa}\Delta_N(y_1,\ldots,y_N)^2
\Delta_N(x_1,\ldots,x_N)^2dy_1\ldots dy_Ndx_1\ldots dx_N d\mu(U)d\mu(\widetilde{U})d\mu(V)d\mu(\mathfrak{U})d\mu(P).\\
\end{split}
\label{preP2}
\end{equation}
It remains to integrate over all remaining Haar measures, after diagonalising $W=V_1\Lambda_w V_1^*$ and $Q=V_2\Lambda_q V_2^*$ by unitary transformations. Clearly the integrals over $U$ and $\mathfrak{U}$ decouple and lead to the same results as in \eqref{HCIZdeg}.
The remaining integrals are over $P$, $\widetilde{U}$ and $V$, and after using the invariance of the Haar measure to absorb $VP\rightarrow P$ and $V_1^*\widetilde{U}\to\widetilde{U}$ we face the following group integral:
\begin{align}
\mathcal{J}=&\int d\mu(P) d\mu (\widetilde{U})e^{-\Tr\left[P^{*}\Lambda_y^{\frac{1}{2}}\widetilde{U}^{*}\Lambda_w\widetilde{U}\Lambda_{y}^{\frac{1}{2}}P\Lambda_x^{-1}\right]}\,.
\label{JSSM}
\end{align}
This integral was computed by Simon, Moustakas and Marinelli \cite{SMM:2005} using character expansion techniques. However, the final answer is given by a sum over representations that cannot be simplified to a determinantal expression, see \cite[Eq. (59)]{SMM:2005} and details therein. For this reason from now on we will simplify to $W=\alpha \eins_L$ as for the result stated in \eqref{P2jpdf}. In that case the integral \eqref{JSSM} simplifies to the standard HCIZ integral \eqref{HCIZ}, and we obtain
\begin{align}
\int_{U(N)}d\mu(P)e^{-\alpha\Tr\left(P^*\Lambda_yP\Lambda_x^{-1}\right)}
=&\const \frac{\det\left[e^{-\alpha\frac{y_j}{x_i}}\right]_{i,j=1}^N}{\Delta_{N}\left(y_1,\ldots,y_N\right)\Delta_{N}\left(x_1,\ldots,x_N\right)}
\prod_{j=1}^N x_j^{N-1},
\label{HCIZinv}
\end{align}
after using $\Delta_{N}\left(x_1^{-1},\ldots,x_N^{-1}\right) = \mbox{const.}\Delta_{N}\left(x_1,\ldots,x_N\right)/\prod_{j=1}^N x_j^{N-1}$. Integrating \eqref{preP2} over the corresponding coset spaces, using \eqref{HCIZdeg} and \eqref{HCIZinv} we arrive at the statement in \eqref{P2jpdf}. It remains to compute the normalisation constant $Z_2$ which we postpone to the proof of the next corollary.
\end{proof}
From \eqref{P2jpdf} we can easily deduce the joint probability density of the variables $y_j$ alone, together with the corresponding normalisation constant, as summarised in the following
\begin{cor}\label{CorrP2}
The joint probability density of the squared singular values $y_1$, $\ldots$, $y_N$ of the product matrix $Y=GX$, where $G$ and $X$ are distributed according to \eqref{P2} with $W=\alpha\eins_L$, is reading
\begin{equation}
\begin{split}
P_2(y_1,\ldots,y_N)&=\frac{2^{N}N!}{Z_2}\det\left[y_i^{\kappa+j-1}\right]_{i,j=1}^N
\det\left[1,q_i,\ldots,q_i^{\nu-1},\varphi_{i}^{(2)}(y_{1}),\ldots,\varphi_{i}^{(2)}(y_{N})\right]_{i=1}^{M}.
\end{split}
\label{P2yonly}
\end{equation}
Here, we have introduced the following notation
\begin{equation}
\label{varphi2}
\varphi_i^{(2)}(y)=\left(\frac{q_{i}}{\alpha y}\right)^{\frac{\kappa}{2}}K_{\kappa}\left(2\sqrt{\alpha q_{i} y}\right)\,,
\end{equation}
and the normalising constant $Z_2$ is defined as in \eqref{ZGX}.
\end{cor}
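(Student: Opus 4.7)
The corollary is obtained from Theorem \ref{MainP2} by integrating out the variables $x_1,\ldots,x_N$, so the plan is to apply the generalised Andr\'eief identity \eqref{gAndreief} directly to the joint density \eqref{P2jpdf}. First I would identify the two determinants in \eqref{P2jpdf} as fitting the template of \eqref{gAndreief} with
\begin{equation*}
\psi_{k}(x)=x^{-\kappa-1}\exp\!\left[-\alpha y_{k}/x\right],\qquad \varphi_{j}(x)=e^{-q_{j}x},
\end{equation*}
together with the $x$-independent block $S_{b,a}=q_{b}^{a-1}$ ($a=1,\ldots,\nu$) that supplies the first $\nu$ columns in the second determinant. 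The prefactor $\prod_{j}y_{j}^{\kappa}\Delta_{N}(y_{1},\ldots,y_{N})$ will be rewritten as $\det[y_{i}^{\kappa+j-1}]_{i,j=1}^{N}$, which reproduces the first determinant in \eqref{P2yonly}.

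The only non-trivial calculation is the scalar integral produced by \eqref{gAndreief}, namely
\begin{equation*}
I^{(2)}_{i,k}=\int_{0}^{\infty}x^{-\kappa-1}\exp\!\left[-\alpha y_{k}/x-q_{i}x\right]dx,
\end{equation*}
which I would evaluate via the change of variables $x=\sqrt{\alpha y_{k}/q_{i}}\,u$ and the integral representation $K_{\kappa}(z)=\tfrac{1}{2}(z/2)^{\kappa}\int_{0}^{\infty}t^{-\kappa-1}e^{-t-z^{2}/(4t)}dt$ for the modified Bessel function of the second kind. This immediately yields $I^{(2)}_{i,k}=2\,\varphi_{i}^{(2)}(y_{k})$ with $\varphi_{i}^{(2)}$ as defined in \eqref{varphi2}, the factor $2$ in front producing the overall $2^{N}$ (one per integrated column) in \eqref{P2yonly}; the factor $N!$ is exactly the combinatorial factor of \eqref{gAndreief}. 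The use of $K_{-\kappa}=K_{\kappa}$ absorbs any sign ambiguity in the index. Since this application of \eqref{gAndreief} is valid uniformly in the $y_{k}$ (the integrals converge thanks to the exponential decay at both $0$ and $\infty$), interchanging integration and determinant is unproblematic.

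To justify that the same $Z_{2}$ given in \eqref{ZGX} still normalises \eqref{P2yonly}, I would verify normalisation directly by applying \eqref{gAndreief} a second time, now to the $y$-integrations, with $\psi_{j}(y)=y^{\kappa+j-1}$ and $\varphi_{i}(y)=\varphi_{i}^{(2)}(y)$. The needed one-dimensional integrals reduce, via the Mellin transform of $K_{\kappa}$, to
\begin{equation*}
\int_{0}^{\infty}y^{\kappa+j-1}\varphi_{i}^{(2)}(y)\,dy=\frac{\Gamma(j)\Gamma(\kappa+j)}{2\,\alpha^{\kappa+j}\,q_{i}^{j}}.
\end{equation*}
After pulling the $i$-independent constants out of each of the last $N$ columns, what remains is the determinant $\det[1,q_{i},\ldots,q_{i}^{\nu-1},q_{i}^{-1},\ldots,q_{i}^{-N}]_{i=1}^{M}$. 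Extracting $\prod_{i}q_{i}^{-N}$ and reordering columns to obtain a standard Vandermonde gives a sign $(-1)^{N\nu+N(N-1)/2}$ together with $\Delta_{M}(q_{1},\ldots,q_{M})$, and collecting all powers of $\alpha$ via $-N\kappa-N(N+1)/2=N\nu+N(N-1)/2-N\kappa-NM$ recovers exactly the formula \eqref{ZGX}, completing the computation of $Z_{2}$ that was postponed in Theorem \ref{MainP2}.

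The main obstacle is essentially bookkeeping: keeping track of the sign $(-1)^{N\nu+N(N-1)/2}$ produced by the column reordering, and matching the $\alpha$ and $q_{i}$ powers between the two equivalent expressions for $Z_{2}$. All analytic content reduces to the single Bessel integral and to the generalised Andr\'eief identity \eqref{gAndreief}, both already available.
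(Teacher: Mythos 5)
Your proposal is correct and follows essentially the same route as the paper's own proof: applying the generalised Andr\'eief formula \eqref{gAndreief} once over the $x_j$ to obtain the $K_\kappa$-integrals (the paper cites the same Bessel identity as \cite[Eq.\ 3.471.9]{GradshteynRyzhik}, which is your integral representation of $K_\kappa$ in disguise), and then a second time over the $y_j$ with $\psi_j^{(2)}(y)=y^{\kappa+j-1}$ and the Mellin transform of $K_\kappa$ (the paper's \cite[Eq.\ 6.561.16]{GradshteynRyzhik}) to produce $Z_2$. The sign $(-1)^{N\nu+N(N-1)/2}$ from the column reordering and the exponent identity $-N\kappa-N(N+1)/2=N\nu+N(N-1)/2-N\kappa-NM$ both check out.
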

\begin{proof}
We make use of the generalised Andr\'eief formula \eqref{gAndreief} by integrating \eqref{P2jpdf} over the $x_{j}$:
\begin{eqnarray}
&\prod_{j=1}^N\int_0^\infty dx_j &P_2(x_1,\ldots,x_N;y_1,\ldots,y_N)=
\frac{N!}{Z_2}\prod_{j=1}^N y_{j}^{\kappa}\ \Delta_N(y_1,\ldots,y_N)\nonumber\\
&&\times\det\left[1,q_{i},\ldots,q_{i}^{\nu-1},\int_{0}^{\infty}dxx^{-\kappa-1}e^{-\frac{\alpha y_{1}}{x}-q_{i}x},\ldots,\int_{0}^{\infty}dxx^{-\kappa-1}e^{-\frac{\alpha y_{N}}{x}-q_{i}x}\right]_{i=1}^{M}.\nonumber
\end{eqnarray}
The remaining integrals are obtained using \cite[Eq. 3.471.9]{GradshteynRyzhik} and the identity $K_{-\kappa}(x)=K_\kappa(x)$ for the modified Bessel function of the second kind,
\begin{equation}
\int_0^{\infty}dxx^{-\kappa-1}e^{-\frac{a}{x}-bx}=2\left(\frac{a}{b}\right)^{-\frac{\kappa}{2}}K_{\kappa}\left(2\sqrt{ab}\right)\,,\quad\text{for }\,\Re(a)>0\,,\quad\Re(b)>0 \,.
\label{K-int}
\end{equation}
This yields \eqref{P2yonly} together with \eqref{varphi2}, after taking out factors of $2$ of the determinant and the factors $y_i^\kappa$ into the Vandermonde determinant. To finally compute the normalisation constant $Z_2$ we have to apply once again the generalised Andr\'eief formula \eqref{gAndreief} to \eqref{P2yonly}. In there we identify
\begin{equation}
\label{Psi2}
\psi_j^{(2)}(y)=y^{\kappa+j-1}\ .
\end{equation}
The integral that remains to be evaluated is thus
\begin{equation}
\label{I2def}
I_{i,j}^{(2)}=\int_{0}^{\infty}dyy^{\kappa+j-1}\left(\frac{q_{i}}{\alpha y}\right)^{\frac{\kappa}{2}}K_{\kappa}\left(2\sqrt{\alpha q_{i} y}\right)
=\frac{\Gamma(\kappa+j)\Gamma(j)}{2\alpha^{\kappa+j}}q_{i}^{-j}\ , \ \ \mbox{for}\ \ i=1,\ldots,M,\ \ j=1,\ldots,N\,,
\end{equation}
which we obtain by using the formula \cite[Eq. 6.561.16]{GradshteynRyzhik}. We thus arrive at
\begin{equation}
Z_2=(N!)^2\left(\prod_{l=1}^{N}\Gamma(\kappa+l)\alpha^{-(\kappa+l)}\Gamma(l)\right)
\det\left[1,q_{i},\ldots,q_{i}^{\nu-1},q_{i}^{-1},\ldots,q_{i}^{-N}\right]_{i=1}^{M},
\end{equation}
which is equivalent to \eqref{ZGX}, after taking out factors of $q_i^{-N}$ and rearranging columns.
\end{proof}
\subsection{Joint probability density of the product of two correlated coupled matrices}\label{JPDcoupled}
Following the derivations from the previous two subsections we are now in the position to prove Theorem \ref{TheoremMainDensity}, the joint probability density of the squared singular values of the random matrix $X$ and of the product matrix $Y=GX$, distributed according to \eqref{JointDensityGX}:
\begin{equation}
\mathcal{P}(G,X)=\,c
\exp\left[-\alpha\Tr\left(GG^*\right)
+\Tr\left(\Omega GX+X^*G^*\Omega^*\right)-\Tr\left(QXX^*\right)\right].
\label{JointDensityGXb}
\end{equation}
Here, $X$ and $G$ are random, $Q$ is fixed, as given before in Subsection \ref{JPD2indep}, the second correlation matrix is $W=\alpha \eins_L$, with $\alpha>0$, and the fixed matrix $\Omega$ that provides the coupling is of size $N\times L$, with squared singular values $\delta_1,\ldots,\delta_N\geq 0$. The normalisation is $c=\pi^{-M(L+N)}\prod_{i=1}^M\prod_{j=1}^N(\alpha q_i-\delta_j)$.
\begin{proof}[Proof of Theorem \ref{TheoremMainDensity} and Corollary \ref{Mainjpdf}]
Following the parametrisations \eqref{Xdecomp}, \eqref{Gsplit} and the change of variables \eqref{changeGY} of the previous subsections, as well as the singular value decompositions \eqref{svdecomp} and \eqref{Ysv}, we immediately obtain that the probability measure from above is proportional to
\begin{align}
&\mathcal{P}(G,X)[dG][dX]\sim e^{-\alpha\Tr\left[\left(VP\right)^*\Lambda_y\left(VP\right)\Lambda_x^{-1}\right]}
e^{\Tr\left[\Omega\widetilde{U}\Lambda_y^{\frac{1}{2}}V +V^*\Lambda_y^{\frac{1}{2}}\widetilde{U}^*\Omega^*\right]}
e^{-\Tr\left[QU\left(\begin{array}{cc}
                       \mathfrak{U} \Lambda_x \mathfrak{U}^* & O \\
                       O & O
                     \end{array}
\right)U^*\right]}
\nonumber\\
&\times\prod_{j=1}^Nx_j^{\nu-L}
y_j^{\kappa}
\Delta_N(y_1,\ldots,y_N)^2
\Delta_N(x_1,\ldots,x_N)^2dx_1\ldots dx_Ndy_1\ldots dy_N
d\mu(U)d\mu(\widetilde{U})d\mu(V)d\mu(\mathfrak{U})d\mu(P).
\nonumber\\
\label{prePDF}
\end{align}
The unitary integrals over the third exponential factor in the first line obviously decouple. We can decouple also the first and second exponential factor by exploiting the invariance of the Haar measure $d\mu(P)$ under $VP\to P$. Furthermore, in the same way we can absorb the fixed unitary matrices from the diagonalisation of $Q=\tilde{V}\Lambda_q\tilde{V}^*$ and the singular value decomposition $\Omega=V_1\Lambda_\delta^{\frac12}V_2^*$, respectively. The new group integral over $\widetilde{U}$ and $V$ that we encounter compared to the previous two subsections, and that is due to the coupling  matrix $\Omega$, is called Berezin-Karpelevich integral \cite{BerezinKarpelevich}
\begin{equation}\label{IZExtended3}
\int_{\widetilde{U}^*\widetilde{U}=\eins_N}d\mu\left(\widetilde{U}\right)\int_{U(N)}d\mu(V) e^{\Tr\left(\Lambda_\delta^{\frac12}\widetilde{U}\Lambda_y^{\frac{1}{2}}V +V^*\Lambda_y^{\frac{1}{2}}\widetilde{U}\Lambda_\delta^{\frac12}\right)}
=\const\frac{\det\left[I_{\kappa}\left(2\sqrt{\delta_{k}y_{j}}\right)\right]_{j,k=1}^{N}\prod_{k=1}^{N}y_{k}^{-\frac{\kappa}{2}}}{\Delta_{N}\left(y_1,\ldots,y_N\right)\Delta_{N}\left(\delta_1,\ldots,\delta_N\right)},
\end{equation}
where the constant does not depend on $y_1$, $\ldots$, $y_N$. This integral is an analogue of the Harish-Chandra--Itzykson--Zuber integral. Such integrals were studied in Guhr and Wettig \cite{GuhrWettig}, and Jackson, \c{S}ener and Verbaarschot \cite{JacksonSenerVerbaarschot}. In Liu \cite{DZL} the same integral appears in the context of coupling uncorrelated Gaussian random matrices, see \cite[Eq. (2.16)]{DZL}. For a similar integral we refer to \cite[Proposition 11.6.2]{ForresterLogGases}. Integrating over the coset spaces in \eqref{prePDF}, we obtain from the HCIZ integrals \eqref{HCIZdeg} and \eqref{HCIZinv} together with this integral \eqref{IZExtended3} the following result for the joint probability density
\begin{eqnarray}
P(x_1,\ldots,x_N;y_1,\ldots,y_N) &=&\const
\prod_{j=1}^{N}y_{j}^{\frac{\kappa}{2}}
\det\left[I_{\kappa}\left(2\sqrt{\delta_{l}y_{j}}\right)\right]_{j,l=1}^{N}
\prod_{j=1}^N x_j^{-\kappa-1}
\det\left[e^{-\alpha\frac{y_j}{x_i}}\right]_{i,j=1}^N
\nonumber\\
&&\times
{\det}\left[1,q_i,\ldots, q_i^{\nu-1},e^{-q_ix_1},\ldots,e^{-q_ix_N}\right]_{i=1}^M\,.
\label{preP}
\end{eqnarray}
This is equivalent to Theorem \ref{TheoremMainDensity}, recalling \eqref{nukappadef}, up to the normalisation constant $Z$ that remains to be determined. We will combine its calculation with the proof of Corollary \ref{Mainjpdf}. For this purpose we apply the generalised Andr\'eief formula \eqref{gAndreief} twice to the joint probability distribution \eqref{preP}: First, integrating over the $x_j$ we obtain
\begin{eqnarray}
&\prod_{j=1}^N\int_0^\infty dx_j &P(x_1,\ldots,x_N;y_1,\ldots,y_N)=
\frac{N!\ 2^N}{Z}\det\left[y_{j}^{\frac{\kappa}{2}}I_{\kappa}\left(2\sqrt{\delta_{l}y_{j}}\right)\right]_{j,l=1}^{N}
\nonumber\\
&&\times\det\left[1,q_{i},\ldots,q_{i}^{\nu-1},
\left(\frac{q_i}{\alpha y_1}\right)^{\frac{\kappa}{2}}K_\kappa(2\sqrt{\alpha q_i y_1}),\ldots,
\left(\frac{q_i}{\alpha y_N}\right)^{\frac{\kappa}{2}}K_\kappa(2\sqrt{\alpha q_i y_N})
\right]_{i=1}^{M},
\label{intP}
\end{eqnarray}
after using \eqref{K-int} and properties of the determinant. This is the statement \eqref{biensemble} in Corollary \ref{Mainjpdf}, together with the identification \eqref{varphi,psi} that we repeat here:
\begin{equation}
\psi_i(y)=y^{\frac{\kappa}{2}}I_{\kappa}\left(2\sqrt{\delta_i y}\right)
\,,\quad \varphi_i(y)=\left(\frac{q_i}{\alpha y}\right)^{\frac{\kappa}{2}}K_{\kappa}\left(2\sqrt{\alpha q_i y}\right)\,.
\label{varpsi}
\end{equation}
For the determination of $Z$ given by the second intergration of \eqref{preP}, this time over the $y_j$, we have
\begin{align}\label{preZ}
Z=(N!)^{2}2^N{\det}\left[1,q_{i},\ldots,q_{i}^{\nu-1},I_{i,1},\ldots,I_{i,N}\right]_{i=1}^M\,.
\end{align}
We are left with the following integral
\begin{eqnarray}
\label{Idef}
I_{i,j}=
\int_{0}^{\infty} dy \left(\frac{q_i}{\alpha y}\right)^{\frac{\kappa}{2}}K_{\kappa}\left(2\sqrt{\alpha q_i y}\right)    y^{\frac{\kappa}{2}} I_{\kappa}(2\sqrt{\delta_{j}y})
=\frac{\delta_j^{\frac{\kappa}{2}}}{2\alpha^\kappa}\ \frac{1}{(\alpha q_i-\delta_j)}\ ,
\end{eqnarray}
for $i=1,\ldots,M$, and $j=1,\ldots,N$, which is obtained using \cite[Eq. 6.576.7]{GradshteynRyzhik}. Its convergence follows from \eqref{aqd}. Moreover, the determinant resulting from (\ref{preZ}) can be identified with the degenerate Cauchy determinant from \cite{Basor:1994}, cf. \eqref{genCauchy} where it was applied before. It reads
\begin{align}\label{DegCauchy}
{\det}\left[1,q_{i},\ldots,q_{i}^{\nu-1},\frac{1}{\alpha q_{i}-\delta_{1}},\ldots,\frac{1}{\alpha q_{i}-\delta_{N}}\right]_{i=1}^M
=(-\alpha)^{MN-\frac{1}{2}N(N+1)}\frac{\Delta_{M}(q_{1},...,q_{M})\Delta_{N}(\delta_{1},...,\delta_{N})}{\prod_{i=1}^{M}\prod_{j=1}^{N}(\alpha q_{i}-\delta_{j})}\,.
\end{align}
The last three equations together yield the normalisation constant $Z$ in \eqref{Z}.
\end{proof}
\section{Determinantal Point Process, Correlation Kernel and its Contour Integral Representation}\label{detPP}
In this section we will proceed in two steps. First, we will show that all our three ensembles are indeed representing determinantal point processes. Second, we use the inverse Gram matrix to explicitly construct complex contour integral representations for all three kernels in separate subsections.

We begin by recalling that the joint probability densities of all our ensembles \eqref{biensemble}, \eqref{biensemble2} and \eqref{P2yonly} are of the form
\begin{equation}
P\left(y_{1},\ldots,y_{N}\right)=\frac{1}{N!\ {\det}\left[A_{i,j}\right]_{i,j=1}^M}\det\left[\psi_{i}(y_{j})\right]_{i,j=1}^{N}
\det\left[1,q_{i},\ldots,q_{i}^{\nu-1},\varphi_{i}(y_{1}),\ldots,\varphi_{i}(y_{N})\right]_{i=1}^{M}\,.
\label{jpdf-gen}
\end{equation}
Here, the Gram type matrix $A$ of size $M\times M$ is defined as
\begin{equation}\label{MatrixA}
A=\left(\begin{array}{ccccccc}
          1 & q_1 & \ldots & q_1^{\nu-1} & I_{1,1} & \ldots & I_{1,N} \\
          1 & q_2 & \ldots & q_2^{\nu-1} & I_{2,1} & \ldots & I_{2,N} \\
          \vdots &  \vdots& &  \vdots&    \vdots&  &    \vdots\\
          1 & q_M & \ldots & q_M^{\nu-1} & I_{M,1} & \ldots & I_{M,N}
        \end{array}
\right),
\end{equation}
with
\begin{equation}
\label{Iij}
I_{i,j} = \int_0^\infty dy\ \varphi_i(y)\psi_j(y)\ ,\ \ \mbox{for}\ \ i=1,\ldots,M,\ \ j=1,\ldots,N\,.
\end{equation}
Part of showing that the class of joint densities \eqref{jpdf-gen} is determinantal includes to determine the $k$-point correlation functions, defined as
\begin{equation}
\label{k-point}
\rho_{k}\left(y_{1},\ldots,y_{k}\right)=\frac{N!}{(N-k)!}\int_{0}^{\infty}dy_{k+1}\ldots dy_{N}P\left(y_{1},\ldots,y_{N}\right)\,,
\end{equation}
in terms of the kernel of the point process. Note that for $k=N$ there is no integral and the $N$-point function is just $N!$ times the joint probability density \eqref{jpdf-gen} itself, cf. \eqref{ppKernel}. Our strategy is to first map the joint probability density \eqref{jpdf-gen} to the standard form of a biorthogonal ensemble of Borodin \cite{Borodin:1998}, having two determinants of equal size instead of \eqref{jpdf-gen}, which shows that this density is indeed determinantal. In a second step we rewrite the resulting kernel in an alternative form, involving directly (part of) the inverse of the Gram type matrix \eqref{MatrixA}, that will be more convenient for later use. This is stated by the following
\begin{prop}\label{PropPP}
Provided that the integrals in \eqref{Iij} of the two sets of functions $\varphi_i$ and $\psi_j$ from the joint probability density \eqref{jpdf-gen} exist, the $k$-point correlation functions are given by
\begin{equation}
\label{MehtaDyson}
\rho_{k}\left(y_{1},\ldots,y_{k}\right)=\det[K_N(y_i,y_j)]_{i,j=1}^k\,.
\end{equation}
The corresponding correlation kernel can be written as
\begin{equation}
\label{gen-kernel}
K_{N}(x,y)=\sum_{i=1}^N\sum_{j=1}^M \psi_{i}(x) C_{i+\nu,j}\varphi_j(y)\,,
\end{equation}
where we denote the inverse Gram type matrix with $C=A^{-1}$. In particular the joint probability density \eqref{jpdf-gen} itself is determinantal, with $k=N$ in \eqref{MehtaDyson}.
\end{prop}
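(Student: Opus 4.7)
The plan is to compute the $k$-point function \eqref{k-point} directly, using the generalised Andr\'eief identity \eqref{genAnd} to perform the $(N-k)$-fold integration and then a block Schur-complement manipulation to extract the inverse of the Gram-type matrix $A$.

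After inserting \eqref{jpdf-gen} into \eqref{k-point}, I will view the first determinant (after transposing so that rows are labelled by variables) as an $N\times N$ matrix with $k$ ``frozen'' rows depending on $y_1,\ldots,y_k$ and $N-k$ integrated rows, and I will view the $M\times M$ second determinant as having $\nu+k$ ``frozen'' columns (the $\nu$ monomial columns together with the $k$ columns $[\varphi_i(y_a)]_{a=1,\ldots,k}$) and $N-k$ integrated columns. A direct application of \eqref{genAnd} with integration dimension $N-k$ and constant-block sizes $k$ and $\nu+k$ then collapses the integral into a single $(k+M)\times(k+M)$ block determinant, whose bottom $M\times(M+k)$ block contains the Gram matrix $A$ together with an extra block $\Phi_{i,b}=\varphi_i(y_b)$ sitting in the middle. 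A column permutation moving this extra block to the far right (of sign $(-1)^{kN}$) brings the matrix into the form
\begin{equation*}
\begin{pmatrix}\mathcal{R}_{k\times M} & 0_{k\times k}\\ A_{M\times M} & \Phi_{M\times k}\end{pmatrix},
\end{equation*}
where $\mathcal{R}_{a,j}=0$ for $j\le\nu$ and $\mathcal{R}_{a,j}=\psi_{j-\nu}(y_a)$ for $j>\nu$.

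Subtracting $\mathcal{R}A^{-1}$ times the bottom block of rows from the top block leaves the determinant invariant and clears $\mathcal{R}$ while replacing the top-right block by $-\mathcal{R}A^{-1}\Phi$; a Laplace expansion along the now-sparse top $k$ rows then produces $\pm\det A\cdot\det(\mathcal{R}A^{-1}\Phi)$. Collecting the signs from \eqref{genAnd}, the column permutation, and the Laplace expansion gives $(-1)^{k(k+1)}=+1$ after using $M=N+\nu$, so the factor $\det A$ cancels the normalisation in \eqref{jpdf-gen} and leaves $\rho_k=\det\bigl[(\mathcal{R}A^{-1}\Phi)_{a,b}\bigr]_{a,b=1}^k$. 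A short computation then identifies each entry as
\begin{equation*}
(\mathcal{R}A^{-1}\Phi)_{a,b}=\sum_{j=\nu+1}^M\sum_{i=1}^M\psi_{j-\nu}(y_a)\,C_{j,i}\,\varphi_i(y_b)=\sum_{l=1}^N\sum_{i=1}^M\psi_l(y_a)\,C_{l+\nu,i}\,\varphi_i(y_b)=K_N(y_a,y_b),
\end{equation*}
which is exactly \eqref{gen-kernel}; specialising to $k=N$ recovers \eqref{jpdf-gen} in determinantal form.

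The main technical obstacle I expect is the precise tracking of signs through the three operations above; the underlying linear algebra is a routine Schur complement of the Gram matrix $A$. As an independent check, the reproducing property and the normalisation $\int K_N(y,y)\,dy=N$ both follow at once from $CA=\eins_M$ applied to entries of the form $C_{l+\nu,\,i}A_{i,\,l'+\nu}$, confirming that \eqref{gen-kernel} is indeed a correlation kernel for the process with density \eqref{jpdf-gen}.
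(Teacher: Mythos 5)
Your proof is correct and coincides, nearly step for step, with the \emph{alternative} proof that the paper gives in the Remark immediately following Proposition \ref{PropPP}: both apply the generalized Andr\'eief identity \eqref{genAnd} with $l=k+\nu$ and $N\to N-k$ to integrate out $N-k$ variables and then read off $\rho_k$ as a Schur complement of the Gram-type matrix $A$ inside the resulting $(M+k)\times(M+k)$ block determinant. Your bookkeeping differs slightly — you permute the $\Phi$-block to the far right and combine row reduction with a Laplace expansion, whereas the paper permutes the $O_{k\times k}$ block to the upper-left corner and applies $\det D=\det[d]\det[a-cd^{-1}b]$ directly — but the underlying linear algebra and the sign cancellation to $(-1)^{k(k+1)}=1$ are identical. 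The paper's \emph{main} proof (the one in the \verb|proof| environment) takes a different, more structural route: it first reduces the $M\times M$ determinant to an $N\times N$ one by taking the Schur complement of the $\nu\times\nu$ Vandermonde sub-block $(a)_{ij}=q_i^{j-1}$, maps the density to a standard biorthogonal ensemble in the sense of Borodin, invokes Borodin's theorem to obtain the kernel in terms of the reduced $N\times N$ Gram matrix $g$, and then recovers \eqref{gen-kernel} from the explicit block form \eqref{Cexplicit} of $C=A^{-1}$. That route exhibits the biorthogonal-ensemble structure explicitly but requires the additional hypothesis that $q_1,\ldots,q_\nu$ are pairwise distinct so that the Vandermonde block is invertible; your Andr\'eief route, like the Remark's, only needs $A$ itself invertible. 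Your closing verification of $\int K_N(y,y)\,dy=N$ and the reproducing property from $CA=\eins_M$ is a useful sanity check the paper does not spell out.
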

\begin{proof}
It is well known that for a block matrix $D=\left(\begin{matrix}a&c\\b&d\end{matrix}\right)$ with square blocks $a$ and $d$ the determinant of $D$ can be reduced to determinants of smaller size as follows,
\begin{equation}
\det[D]=\det\left[a\right]\det\left[d-b\,a^{-1}c\right]\,,
\label{Schur}
\end{equation}
provided that matrix $a$ is invertible. A similar formula exists for matrix $d$ being invertible, and the matrix $d-b\,a^{-1}c$ is called the Schur complement of matrix $a$ in $D$. Choosing $\left(a\right)_{i,j}=q_{i}^{j-1}$ as the $\nu\times \nu$ matrix from the upper left block of the last determinant in \eqref{jpdf-gen} we obtain
\begin{align}
\det\left[1,q_{i},\ldots,q_{i}^{\nu-1},\varphi_{i}(y_{1}),\ldots,\varphi_{i}(y_{N})\right]_{i=1}^{M}=&\,
\det[a]
\,\det\left[\widetilde{\varphi}_{i}\left(y_{j}\right)\right]_{i,j=1}^{N}\,.
\end{align}
For the Schur complement we obtain
\begin{align}\label{IntermadiateVarphi}
\widetilde{\varphi}_{i}\left(y\right)=\varphi_{i+\nu}\left(y\right)-
\sum\limits_{k,l=1}^{\nu} b_{i,k}
\left(a^{-1}\right)_{k,l}\varphi_{l}\left(y\right)\,,\quad\text{for }\; i=1,\ldots, N\,,
\end{align}
with $b_{i,k}=q_{i+\nu}^{k-1}$. Clearly, for all $q_{i=1,\ldots,\nu}$ being mutually distinct, matrix $a$ is invertible and in fact $\det[a]=\Delta_{\nu}\left(q_{1},\ldots,q_{\nu}\right)$. We can thus apply the result of Borodin \cite{Borodin:1998} for the biorthogonal ensemble obtained from  (\ref{jpdf-gen}),
\begin{equation}
P\left(y_{1},\ldots,y_{N}\right)=\frac{\Delta_{\nu}\left(q_{1},\ldots,q_{\nu}\right)}{N!\ {\det}\left[A_{i,j}\right]_{i,j=1}^M}\det\left[\psi_{i}(y_{j})\right]_{i,j=1}^{N}
\det\left[\widetilde{\varphi}_{i}\left(y_{j}\right)\right]_{i,j=1}^{N}\,,
\label{jpdf-gen-det}
\end{equation}
and conclude that it is indeed a determinantal point process, with its correlation kernel given by
\begin{align}\label{NNKernel}
K_{N}\left(x,y\right)=\sum\limits_{i,j=1}^{N}\psi_{i}(x)\left(g^{-1}\right)_{i,j}\widetilde{\varphi}_{j}(y)\,,\quad\text{with }\;g_{i,j}=\int_{0}^{\infty}dy\,\widetilde{\varphi}_{i}(y)\psi_{j}(y)\,.
\end{align}

It remains to show \eqref{gen-kernel}. For that we insert \eqref{IntermadiateVarphi} into our kernel \eqref{NNKernel} to obtain
\begin{equation}\label{NNKernel2}
K_{N}\left(x,y\right)=\sum\limits_{i=1}^{N}\psi_{i}(x)\left[
\sum\limits_{j=1}^{N}\left(g^{-1}\right)_{i,j}\varphi_{j+\nu}\left(y\right)
-\sum\limits_{l=1}^{\nu}\left(
\sum\limits_{j=1}^{N}\sum\limits_{k=1}^{\nu}
\left(g^{-1}\right)_{i,j}b_{j,k}\left(a^{-1}\right)_{k,l}
\right)\varphi_{l}\left(y\right)
\right]\,.
\end{equation}
Considering now matrix $A$ from \eqref{MatrixA} as a block matrix, $A=\left(\begin{matrix}a&J\\b&I\end{matrix}\right)$, with matrices $a$ and $b$ as defined before, we immediately realise that
\begin{align}
\det\left[A\right]=\det[a]\det[I_{i+\nu,j}-(b(a^{-1})J)_{i,j}]=
\det[a]\det\left[g_{i,j}\right]_{i,j=1}^{N}\,
\end{align}
holds for the corresponding Schur complement. By making use of this block decomposition of $A$, it is well known that its inverse, $C=A^{-1}$, can be written in the following block form, c.f. \cite[Section 3.1]{VP}:
\begin{equation}
C=\left(\begin{matrix}a^{-1}+a^{-1}Jg^{-1}ba^{-1}&-a^{-1}Jg^{-1}\\-g^{-1}ba^{-1}&g^{-1}\end{matrix}\right).
\label{Cexplicit}
\end{equation}
In particular its two lower blocks are given by
\begin{align}\label{LowerPartC}
C_{i+\nu,j}=\left\{\begin{array}{cl}-\sum\limits_{l=1}^{N}\sum\limits_{k=1}^{\nu}\left(g^{-1}\right)_{i,l}b_{l,k}
\left(a^{-1}\right)_{k,j}\,,&\text{for }\; j=1,\ldots,\nu\,,\\
\left(g^{-1}\right)_{i,j-\nu} \,,&\text{for }\; j=1+\nu,\ldots,N+\nu
\end{array}\right.
\end{align}
where $i=1,\ldots, N$\,. Together with \eqref{NNKernel2} this yields \eqref{gen-kernel}.
\end{proof}
\begin{rem}
As an alternative to the formulation of the kernel \eqref{NNKernel} in terms of  the inverse Gram matrix, in \cite{Borodin:1998} the two sets of functions constituing the joint density \eqref{jpdf-gen-det} can also be orthogonalised. For the example from the last subsection, \eqref{varpsi}, this seems to be challenging, as (for $\nu=0$) in each determinant these functions differ only by the parameters in the arguments. A biorthogonalisation can still be performed, see \cite{ACK} for a similar example.
\end{rem}
\begin{rem}
There exist alternative proofs of Proposition \ref{PropPP} without applying \cite{Borodin:1998}. While in Appendix \ref{B} we use simple ideas from functional analysis, we present here a short calculation applying the extended Andr\'eief formula \eqref{genAnd} \cite{Kieburg:2010}. Choosing $l=k+\nu$ and $N\to(N-k)$ in \eqref{genAnd}, we can directly perform the  integration of \eqref{jpdf-gen} over $(N-k)$ variables as prescribed in \eqref{k-point}:
\begin{eqnarray}
\rho_{k}(y_{1},\ldots,y_{k})&=&\frac{(-1)^{k(k+\nu)}}{{\det}[A]}\det\left[\begin{matrix}
O_{k\times \nu}&O_{k\times k}&\left.\psi_{i}(y_{j})\right|_{j=1,\ldots,k}^{i=1,\ldots, N}\\
&&\\
q_{j}^{i-1}|_{j=1,\ldots,M}^{i=1,\ldots,\nu}&\varphi_{j}(y_{i})|_{j=1,\ldots,M}^{i=1,\ldots,k}&\int_{0}^{\infty}dy\varphi_{j}(y)\psi_{i}(y)|_{j=1,\ldots,M}^{i=1,\ldots,N}\end{matrix}\right]\,
\nonumber\\
&=&\frac{(-1)^{k^2}}{{\det}[A]}\det\left[\begin{matrix}
O_{k\times k}&O_{k\times \nu}&\left.\psi_{i}(y_{j})\right|_{j=1,\ldots,k}^{i=1,\ldots, N}\\
&&\\
\left.\varphi_{j}(y_{i})\right|_{j=1,\ldots,M}^{i=1,\ldots,k}&q_{j}^{i-1}|_{j=1,\ldots,M}^{i=1,\ldots,\nu}&I_{j,i}|_{j=1,\ldots,M}^{i=1,\ldots,N}\end{matrix}\right]\,.
\end{eqnarray}
Here, the functions $\psi_i(y)$ and $\varphi_j(y)$ of unintegrated variables $y_1,\ldots,y_k$ are corresponding to the matrices $R$ and $S$ in \eqref{genAnd}, respectively. In the second step we have simply interchanged rows, such that the matrix $A$ from \eqref{MatrixA} is formed by the two lower right blocks. Using the equivalent formula to \eqref{Schur} for invertible $d=A$ this time, $\det[D]=\det\left[d\right]\det\left[a-cd^{-1}b\right]$, we can choose $a=O_{k\times k}$ here. After taking out all minus signs of the determinant and using that part of matrix $c$ is $O_{k\times\nu}$, we arrive at the statement of Proposition \ref{PropPP}:
\begin{align}
\rho_{k}\left(y_{1},\ldots,y_{k}\right)=
\det\left[\sum_{i=1+\nu}^{M}\sum_{j=1}^{M}\psi_{i-\nu}(y_{n})C_{i,j}\varphi_{j}(y_{m})\right]_{n,m=1}^{k}.
\end{align}
\end{rem}
\subsection{Kernel of the generalised Wishart ensemble}\label{kernelgW}
We begin by deriving an explicit form of the kernel of the generalised Wishart ensemble with joint probability density \eqref{biensemble2}. In the simplest case, when $M=N$ ($\nu=0$), its Gram type matrix \eqref{MatrixA} reads $A_{i,j}=I^{(1)}_{i,j}=(q_i+\sigma_j)^{-1}$, from \eqref{I1def}. For its inversion we use the following result of \cite[Lemma 3.1]{Borodin:1998} (cf. \cite{Schechter} for an earlier work)
\begin{lem}[Borodin]\label{LemmaCauchyInverse}
The inverse $C_{i,j}$ of matrix $A_{i,j}=(q_i+\sigma_j)^{-1}$ is given by
\begin{align}\label{Inverse}
C_{i,j}=\frac{1}{(q_j+\sigma_i)}\frac{\prod_{l=1}^N\left(q_l+\sigma_i\right)\left(q_j+\sigma_l\right)}{
{\prod_{k=1;k\neq i}^N}(\sigma_i-\sigma_k){\prod_{l=1;l\neq j}^N}(q_j-q_l)}
\,.
\end{align}
\end{lem}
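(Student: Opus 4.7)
The plan is to verify directly that $A C = \eins_N$, by rewriting $(AC)_{i,j}$ as a sum of residues of an auxiliary meromorphic function. Substituting the proposed formula for $C_{k,j}$ into $(AC)_{i,j}=\sum_k A_{i,k}C_{k,j}$ and pulling out the factors that do not depend on the summation index $k$, one is left with
\begin{equation*}
(AC)_{i,j} \;=\; \frac{\prod_{l=1}^N (q_j+\sigma_l)}{\prod_{l\neq j}(q_j-q_l)}\,\sum_{k=1}^N \frac{\prod_{l=1}^N(q_l+\sigma_k)}{(q_i+\sigma_k)(q_j+\sigma_k)\,\prod_{m\neq k}(\sigma_k-\sigma_m)}.
\end{equation*}
The inner sum is exactly $\sum_k \mathrm{Res}_{\sigma=\sigma_k} h(\sigma)$ for the rational function
\begin{equation*}
h(\sigma)\;=\;\frac{\prod_{l=1}^N(q_l+\sigma)}{(q_i+\sigma)(q_j+\sigma)\,\prod_{m=1}^N(\sigma-\sigma_m)}.
\end{equation*}

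Since the numerator of $h$ has degree $N$ and the denominator has degree $N+2$, $h(\sigma)=O(\sigma^{-2})$ at infinity, so the sum of all residues on the Riemann sphere vanishes. Hence $\sum_k \mathrm{Res}_{\sigma_k} h = -\mathrm{Res}_{-q_i}h-\mathrm{Res}_{-q_j}h$, where for $i=j$ this is interpreted as a single double-pole contribution.

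For the off-diagonal case $i\neq j$, both remaining poles are simple and the residue at $\sigma=-q_i$ equals $\prod_{l=1}^N(q_l-q_i)$ divided by a nonzero factor; the product vanishes because it contains the factor $q_i-q_i=0$. The same applies to the residue at $-q_j$, giving $(AC)_{i,j}=0$. For the diagonal case $i=j$, the factor $(q_i+\sigma)^2$ in the denominator of $h$ cancels one copy of $(q_i+\sigma)$ in $\prod_l(q_l+\sigma)$, reducing the apparent double pole to a simple pole with residue
\begin{equation*}
\mathrm{Res}_{\sigma=-q_i} h \;=\; \frac{\prod_{l\neq i}(q_l-q_i)}{(-1)^N\prod_{m=1}^N(q_i+\sigma_m)}.
\end{equation*}
Multiplying by the prefactor $-\prod_l(q_i+\sigma_l)/\prod_{l\neq i}(q_i-q_l)$ and using $\prod_{l\neq i}(q_l-q_i)=(-1)^{N-1}\prod_{l\neq i}(q_i-q_l)$ produces a net factor of $-(-1)^{N-1}/(-1)^N=1$, giving $(AC)_{i,i}=1$.

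The only place where care is required is the sign bookkeeping in the diagonal case, where two sign factors of $(-1)^{N-1}$ and $(-1)^N$ arise from reversing the order in $\prod_{l\neq i}(q_l-q_i)$ and from evaluating $\prod_m(-q_i-\sigma_m)$, and these must combine with the overall minus sign coming from moving the residue at $-q_i$ to the other side of the residue theorem. Apart from this, the argument is a direct application of the residue theorem applied to $h(\sigma)$; since we have assumed all $q_i$ mutually distinct, all $\sigma_j$ mutually distinct, and $q_i+\sigma_j\neq 0$, all poles are genuine and distinct (except for the coalescence at $i=j$ handled above), so the calculation is unambiguous.
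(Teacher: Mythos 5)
Your proof is correct. The paper itself does not prove this lemma; it is quoted as a known result (citing Lemma 3.1 of Borodin's 1998 paper and, earlier, Schechter 1959), so there is no internal argument to compare against. Your residue-theorem verification is a clean and self-contained route: pulling the $k$-independent factors out of $(AC)_{i,j}$, recognizing the remaining sum as $\sum_k \mathrm{Res}_{\sigma_k} h$, and noting that $h=O(\sigma^{-2})$ at infinity so all residues sum to zero, reduces the whole check to evaluating the (possibly cancelling) residues at $-q_i$ and $-q_j$. Your sign bookkeeping in the diagonal case, $-(-1)^{N-1}/(-1)^N = 1$, is also correct. One purely cosmetic remark: in the off-diagonal case $-q_i$ and $-q_j$ are not actually poles of $h$, since the factors $(q_i+\sigma)$, $(q_j+\sigma)$ in the numerator cancel them; saying their residues "vanish because the numerator contains $q_i - q_i=0$" is shorthand for this cancellation and arrives at the same conclusion.
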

From \eqref{gen-kernel} together with \eqref{psiphi1} this explicitly determines the kernel of the generalised Wishart ensemble for $\nu=0$:
\begin{eqnarray}
\label{kern1M=N}
K_N^{(1)}(x,y)&=& \sum_{i,j=1}^N  \frac{e^{-\sigma_i x-q_j y}}{(q_j+\sigma_i)}\frac{\prod_{l=1}^N\left(q_l+\sigma_i\right)\left(q_j+\sigma_l\right)}{
{\prod_{k=1;k\neq i}^N}(\sigma_i-\sigma_k){\prod_{l=1;l\neq j}^N}(q_j-q_l)}\\
&=&\oint_{\gamma_{\sigma}}\frac{d\eta}{2\pi i}
\oint_{\gamma_{q}} \frac{d\zeta }{2\pi i}
\frac{e^{x\eta-y\zeta}}{\eta-\zeta}
\prod_{l=1}^N\frac{\zeta+\sigma_l}{\eta+\sigma_l}
\prod_{l=1}^N\frac{\eta-  q_l}{\zeta- q_l}\ .\nonumber
\end{eqnarray}
In the second step we have used the Residue Theorem to express the double sum as a double contour integral. The contours are defined such that the closed contour $\gamma_\sigma$ includes the poles at $-\sigma_l$, $l=1,\ldots,N$, running in counter-clockwise direction, and likewise $\gamma_q$ includes the poles at $q_l$, $l=1,\ldots,N$ in counter-clockwise direction, such that the two contours do not intersect. Because of $q_i+\sigma_j>0$, $\forall i,j$, this is always possible. For different choices of integration contours see Figure \ref{gammaqs1} below. Note that the form of the kernel \eqref{kern1M=N} valid for $M=N$ can be found already in \cite{BP}, see also \cite{DF06} for the Multiple Laguerre kernel. It is very suggestive to expect that a similar form holds also for $M>N$, which is our main result of this subsection as stated below.
\begin{thm}\label{Thmker1} The correlation kernel $K_{N}^{(1)}(x,y)$ of the generalised Wishart ensemble \eqref{corrWishart} permits the following  double contour integral representation
\begin{equation}
K_N^{(1)}(x,y)=
\oint_{\gamma_{\sigma}}\frac{d\eta}{2\pi i}
\oint_{\gamma_{q}} \frac{d\zeta }{2\pi i}
\frac{e^{x\eta-y\zeta}}{\eta-\zeta}
\prod_{l=1}^N\frac{\zeta+\sigma_l}{\eta+\sigma_l}
\prod_{k=1}^M\frac{\eta-  q_k}{\zeta- q_k}\,,\label{kernelGeneralizedWishart}
\end{equation}
where $\gamma_{\sigma}$ is a closed contour encircling $-\sigma_1,\ldots,-\sigma_N$ counter-clockwise, and $\gamma_{q}$ is a closed contour encircling $q_1,\ldots,q_M$ counter-clockwise, without intersecting $\gamma_{\sigma}$, see Figure \ref{gammaqs1}.
\end{thm}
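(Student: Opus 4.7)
The plan is to combine the abstract kernel formula \eqref{gen-kernel} from Proposition \ref{PropPP} with an explicit inversion of the Gram matrix $A$. With $\psi_{i}(x)=e^{-\sigma_i x}$, $\varphi_{j}(y)=e^{-q_j y}$, and $I_{i,j}=(q_i+\sigma_j)^{-1}$ from \eqref{psiphi1}--\eqref{I1def}, the matrix $A$ in \eqref{MatrixA} has a first ``Vandermonde'' block of $\nu$ columns $q_i^{k-1}$ followed by a ``Cauchy'' block of $N$ columns $1/(q_i+\sigma_j)$. The case $\nu=0$ was already handled in \eqref{kern1M=N} via Borodin's Lemma \ref{LemmaCauchyInverse}; the substantive new task is the inversion when $\nu>0$.

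My strategy is to encode the $j$-th column of $C=A^{-1}$ by the rational function
\[
f_{j}(q)=\sum_{k=1}^{\nu}C_{k,j}\,q^{k-1}+\sum_{i=1}^{N}\frac{C_{\nu+i,j}}{q+\sigma_i},
\]
which, by $\sum_{k=1}^{M}A_{l,k}C_{k,j}=\delta_{l,j}$, is characterised by the $M$ interpolation conditions $f_{j}(q_l)=\delta_{j,l}$, $l=1,\ldots,M$. Writing $f_{j}(q)=P_{j}(q)/\prod_{l=1}^{N}(q+\sigma_l)$ with $P_{j}$ a polynomial of degree at most $M-1$, these $M$ conditions determine $P_{j}$ uniquely by Lagrange interpolation, giving
\[
f_{j}(q)=\prod_{m=1,\,m\neq j}^{M}\frac{q-q_m}{q_j-q_m}\,\prod_{l=1}^{N}\frac{q_j+\sigma_l}{q+\sigma_l}.
\]
The entries we need are then read off as simple residues, $C_{\nu+i,j}=\Res_{q=-\sigma_i}f_{j}(q)$, producing a closed-form expression for each $(i,j)$.

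Substituting $C_{\nu+i,j}$ into \eqref{gen-kernel} yields a double sum in $i$ and $j$ that factorises cleanly. The sum over $i$ is by construction the sum of residues at $\eta=-\sigma_i$ of a single meromorphic function of $\eta$, so the Residue Theorem packages it as a contour integral around $\gamma_{\sigma}$; likewise the sum over $j$ is a sum of residues at $\zeta=q_j$, which becomes a contour integral around $\gamma_{q}$. The convergence constraint \eqref{constraint3}, $q_i+\sigma_j>0$, places $\{-\sigma_l\}$ strictly to the left of $\{q_k\}$ on the real line, so $\gamma_{\sigma}$ and $\gamma_{q}$ can be chosen disjoint and the Cauchy factor $1/(\eta-\zeta)$ stays regular on both. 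Assembling the two Residue Theorem applications gives exactly \eqref{kernelGeneralizedWishart}.

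The main obstacle is inverting the mixed Vandermonde--Cauchy matrix $A$; the rational-interpolation viewpoint above makes this almost automatic by matching $M$ free coefficients of $P_j$ against $M$ interpolation constraints and invoking uniqueness. An alternative, more computational route would be to apply the Schur complement block formula to $A$ exactly as in the proof of Proposition \ref{PropPP}, reducing to Borodin's Lemma for a modified $N\times N$ Cauchy block; this arrives at the same $C_{\nu+i,j}$ but is heavier. Either way, once the relevant entries of $C$ are in hand, assembling the double contour integral is mechanical.
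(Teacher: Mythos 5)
Your proof is correct and follows essentially the same route as the paper: both encode each column of the inverse Gram matrix $C$ in the rational function $f_j(z)=\sum_{k=1}^{\nu}C_{k,j}z^{k-1}+\sum_{i=1}^{N}C_{\nu+i,j}/(z+\sigma_i)$, pin it down by the interpolation conditions $f_j(q_l)=\delta_{jl}$ together with its pole structure and degree bound, and then convert both remaining sums into contour integrals via the Residue Theorem. The only cosmetic difference is that the paper inserts the integral identity $e^{-\sigma_i x}=\oint_{\gamma_\sigma}\frac{d\eta}{2\pi i}\frac{e^{x\eta}}{\eta+\sigma_i}$ (plus the zero identity $\oint\eta^{i-1}e^{x\eta}d\eta=0$) directly into the kernel sum, whereas you read off $C_{\nu+i,j}=\mathrm{Res}_{q=-\sigma_i}f_j(q)$ and then repackage the $i$-sum as a contour integral — two presentations of the same residue calculus step.
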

\begin{proof}
The idea of the proof is to obtain the double contour  integral representation \eqref{kernelGeneralizedWishart} without explicitly computing the inverse matrix $C$ as we did for $M=N$. For that purpose we restate the orthogonality relation $AC=\eins_M$ for the Gram type matrix \eqref{MatrixA}, with $I^{(1)}_{i,j}=(q_j+\sigma_i)^{-1}$ for our ensemble from \eqref{I1def}:
\begin{equation}
\label{AC1}
\sum_{k=1}^{\nu}q_j^{k-1}C_{k,l}+ \sum_{k=1}^N \frac{1}{q_j+\sigma_k}
C_{k+\nu,l}=\delta_{j,l}\,,\quad \mbox{for} \quad 1\leq j,l\leq M\,.
\end{equation}
This leads us to define the following set of $l=1,\ldots,M$ meromorphic functions
\begin{equation}\label{fldef}
f_{l}(z)=\sum_{k=1}^{\nu}z^{k-1}C_{k,l}+ \sum_{k=1}^N   \frac{1}{z+\sigma_k} C_{k+\nu,l}\,.
\end{equation}
They are uniquely determined in the complex plane by specifying all their zeros, poles, and by providing the value of the function at one further point. Namely, without specifying the constant matrix $C$ on the right hand side, the functions $f_l(z)$ satisfy:
\begin{enumerate}
\item due to \eqref{AC1} each function $f_l$ has $M-1$ zeros, $f_l(q_i)=0$ for $i=1,\ldots,l-1,l+1,\ldots,M$\,,
\item because of definition \eqref{fldef}, each function $f_l$ has $N$ poles at $z=-\sigma_k$ for  $k=1,\ldots,N$\,,
\item the condition $f_l(q_l)=1$ from \eqref{AC1} uniquely fixes the remaining constant coefficient,
\end{enumerate}
leading to
\begin{equation}\label{flfinal}
f_{l}(z)=\prod_{i=1,i\neq l}^M\frac{z-  q_i}{q_l- q_i}
\prod_{k=1}^N\frac{q_l+\sigma_k}{z+\sigma_k}.
\end{equation}
The fact that these are all poles and zeros follows from the behaviour at infinity, $\lim_{|z|\to\infty}f_l(z)=\mathcal{O}(z^{\nu-1})$, as required from the definition \eqref{fldef}. The next step is to bring the kernel \eqref{gen-kernel} to a form containing \eqref{fldef}, such that we can apply \eqref{flfinal}, without determining $C$ explicitly. From \eqref{psiphi1} we can rewrite for ensemble \eqref{corrWishart}
\begin{equation}\label{psigw}
\psi^{(1)}_j(x)=e^{-\sigma_jx}=\oint_{\gamma_{\sigma}}
\frac{d\eta}{2\pi i}
 \frac{e^{x\eta}}{\eta+\sigma_j}\,, \quad j=1, \ldots,N .
\end{equation}
Here, $\gamma_{\sigma}$ denotes a closed contour encircling $-\sigma_j$ in counter-clockwise direction. For later we choose $\gamma_\sigma$ to contain already all $\sigma_l$, $l=1,\ldots,N$. Likewise, we may write zero in the form
\begin{equation}\label{0gw}
0= \oint_{\gamma_{\sigma}}
\frac{d\eta}{2\pi i}
\eta^{j-1}  e^{x\eta}\,, \quad j=1,2, \ldots\,,
\end{equation}
which holds trivially for any closed contour, due to the analyticity of the integrand. With these preparations, using the definitions \eqref{psiphi1} we can rewrite the kernel \eqref{gen-kernel} for our ensemble as
\begin{eqnarray}
K_{N}^{(1)}(x,y)&=&\sum_{i=1}^N\sum_{j=1}^M\psi_{i}^{(1)}(x)C_{i+\nu,j}\varphi_j^{(1)}(y)\nonumber\\
&=&\sum_{j=1}^M\left[
\sum_{i=1}^NC_{i+\nu,j}\oint_{\gamma_{\sigma}}\frac{d\eta}{2\pi i} \frac{e^{x\eta}}{\eta+\sigma_i}
+\sum_{i=1}^\nu C_{i,j}\oint_{\gamma_{\sigma}}\frac{d\eta}{2\pi i} \eta^{i-1}  e^{x\eta}
\right]\varphi_j^{(1)}(y)\nonumber\\
&=&\oint_{\gamma_{\sigma}}\frac{d\eta}{2\pi i}\sum_{j=1}^Me^{x\eta}\left[\sum_{i=1}^{\nu}
\eta^{i-1}C_{i,j}+\sum_{i=1}^N\frac{1}{\eta+\sigma_i}C_{i+\nu,j}\right]\varphi_j^{(1)}(y)\nonumber\\
&=& \oint_{\gamma_{\sigma}}\frac{d\eta}{2\pi i}\sum_{j=1}^M e^{x\eta}e^{-q_jy}\left[
\prod_{m=1,m\neq j}^M\frac{\eta-  q_m}{q_j- q_m}
\prod_{k=1}^N\frac{q_j+\sigma_k}{\eta+\sigma_k}
\right].
\label{ker1int}
\end{eqnarray}
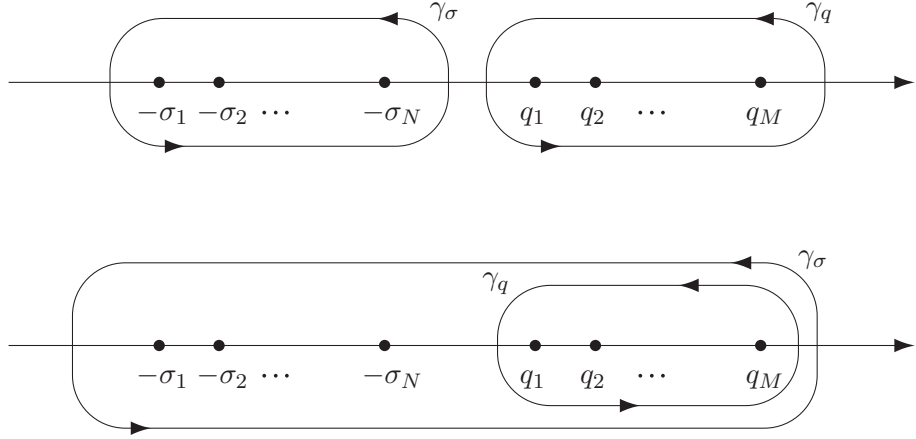
\begin{figure}[h]
\unitlength1cm
\begin{picture}(12.2,7.2)
\thinlines
\put(0,5.5){\line(1,0){12}}
\linethickness{0.4mm}
\put(11.95,5.5){\vector(1,0){0.1}}
\thinlines
\thinlines
\put(1.7,5){$-\sigma_{1}$}
\put(2,5.5){\circle*{0.15}}
\put(2.5,5){$-\sigma_{2}$}
\put(2.8,5.5){\circle*{0.15}}
\multiput(3.4,5.1)(0.15,0){3}{\circle*{0.05}}
\put(4.7,5){$-\sigma_{N}$}
\put(5,5.5){\circle*{0.15}}
\thinlines
\put(6.8,5){$q_{1}$}
\put(7,5.5){\circle*{0.15}}
\put(7.6,5){$q_{2}$}
\put(7.8,5.5){\circle*{0.15}}
\multiput(8.4,5.1)(0.15,0){3}{\circle*{0.05}}
\put(9.8,5){$ q_{M}$}
\put(10,5.5){\circle*{0.15}}
\put(5.6,6.4){$\gamma_{\sigma}$}
\put(10.6,6.4){$\gamma_{q}$}
\put(3.6,5.5){\oval(4.5,1.7)}
\put(8.6,5.5){\oval(4.5,1.7)}
\linethickness{0.4mm}
\multiput(5,6.35)(5,0){2}{\vector(-1,0){0.1}}
\multiput(2.25,4.65)(5,0){2}{\vector(1,0){0.1}}
\thinlines
\put(0,2){\line(1,0){12}}
\linethickness{0.4mm}
\put(11.95,2){\vector(1,0){0.1}}
\thinlines
\put(1.7,1.5){$-\sigma_{1}$}
\put(2,2){\circle*{0.15}}
\put(2.5,1.5){$-\sigma_{2}$}
\put(2.8,2){\circle*{0.15}}
\multiput(3.4,1.6)(0.15,0){3}{\circle*{0.05}}
\put(4.7,1.5){$-\sigma_{N}$}
\put(5,2){\circle*{0.15}}
\thinlines
\put(6.8,1.5){$q_{1}$}
\put(7,2){\circle*{0.15}}
\put(7.6,1.5){$q_{2}$}
\put(7.8,2){\circle*{0.15}}
\multiput(8.4,1.6)(0.15,0){3}{\circle*{0.05}}
\put(9.8,1.5){$ q_{M}$}
\put(10,2){\circle*{0.15}}
\put(6.3,2.8){$\gamma_{q}$}
\put(10.5,3.1){$\gamma_{\sigma}$}
\put(8.5,2){\oval(4.0,1.6)}
\put(5.8,2){\oval(9.9,2.2)}
\linethickness{0.4mm}
\put(9,2.8){\vector(-1,0){0.1}}
\put(9.7,3.1){\vector(-1,0){0.1}}
\put(8.3,1.2){\vector(1,0){0.1}}
\put(1.8,0.9){\vector(1,0){0.1}}
\end{picture}
\caption{Possible choices for the integration contours in Theorem \ref{Thmker1}: Independent non-intersecting coutours (top) and nested non-intersecting coutours (bottom). For simplicity we have ordered the parameters $\sigma_i$ and $q_j$ according to their index. We do not display a third possible choice that is also nested, where the inner contour encircles the $-\sigma_j$ and not the $q_j$.}
\label{gammaqs1}
\end{figure}
In the second step we have taken out the contour integrals, and in the third step we have inserted \eqref{flfinal} and the explicit representation $\varphi_j^{(1)}(y)=e^{-q_jy}$ from \eqref{psiphi1}. Finally a simple application of the Residue Theorem leads to \eqref{kernelGeneralizedWishart}, when choosing $\gamma_q$ as a closed contour that encircles all poles at $q_l$, $l=1,\ldots,M$ in counter-clockwise direction, and that does not intersect $\gamma_\sigma$. In view of the condition \eqref{constraint3}, $q_j+\sigma_i>0$ $\forall i,j$, this is always possible. Two possible choices of such contours are depicted in Figure \ref{gammaqs1}.
\end{proof}
At first sight the second, nested choice in Figure \ref{gammaqs1} bottom may not seem to be useful: It forces us to do the integral over the inner contour $\gamma_q$ {\it first}, before performing the second integral over $\gamma_\sigma$. However, when taking the large-$N$ limit in Section \ref{large-N} later, we will encounter the situation that two or more parameters $-\sigma_i$ and $q_j$ coalesce. In the case of non-nested contours the contours would touch then, which is not allowed. In the nested case there is no such problem, as all $-\sigma_i$ and $q_j$ remain enclosed by $\gamma_\sigma$, and none of the $-\sigma_i$ is a pole of the integral over $\gamma_q$. Of course there is a third choice, by letting $\gamma_q$ enclose the contour $\gamma_\sigma$. Then the integral over the inner contour $\gamma_\sigma$ has to be done first.

The double contour integral in \eqref{kernelGeneralizedWishart} can be factorised, at the expense of a further real integral. For this to be possible we have to choose the contours to be non-nested as in Figure \ref{gammaqs1} top, and thus the order of integration to be independent. In that case the non-intersection condition of the contours $\gamma_q$ and $\gamma_\sigma$ and the fact that $q_j+\sigma_i>0$, $\forall i,j$, implies that $\Re(\zeta-\eta)>0$. Therefore, we can rewrite the term coupling the two contour integrals as
\begin{equation}
\label{factor-int}
\frac{1}{\eta-\zeta}=-\int_0^{1}du\,u^{\zeta-\eta-1}\,,\quad\text{for }\quad \Re(\zeta-\eta)>0\,.
\end{equation}
This immediately leads to the following
\begin{cor} The kernel $K_{N}^{(1)}(x,y)$ given by (\ref{kernelGeneralizedWishart}), with integration contours chosen as in Figure \ref{gammaqs1} top, can be written as
\begin{equation}
K_{N}^{(1)}(x,y)=-\int_{0}^{1}\frac{du}{u}F_1^{(1)}(x;u)F_2^{(1)}(y;u)\,,
\end{equation}
where the functions $F_1^{(1)}(x;u)$ and $F_2^{(1)}(y;u)$ are defined by the formulae
\begin{equation}
F_1^{(1)}(x;u)=\oint_{\gamma_{\sigma}}\frac{d\eta}{2\pi i}
u^{-\eta} e^{x\eta}\
\frac{\prod_{l=1}^M\left(\eta- q_l\right)}{\prod_{l=1}^N\left(\eta+\sigma_l\right)}\,,\quad
F_2^{(1)}(y;u)=\oint_{\gamma_{q}}\frac{d\zeta}{2\pi i}
u^{\zeta}e^{-y\zeta}\
\frac{\prod_{l=1}^N\left(\zeta+\sigma_l\right)}{\prod_{l=1}^M\left(\zeta- q_l\right)}\,.
\end{equation}
\end{cor}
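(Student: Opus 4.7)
The plan is to substitute the integral identity \eqref{factor-int} directly into \eqref{kernelGeneralizedWishart} and then use Fubini's theorem to interchange the order of the three resulting integrations; once that exchange is justified, the factorisation into $F_1^{(1)}(x;u) F_2^{(1)}(y;u)$ is purely algebraic. First I would fix the non-nested contour choice of Figure \ref{gammaqs1} (top), and observe that since $-\sigma_i \leq 0 < q_j$ for all $i,j$, the two compact contours $\gamma_\sigma$ and $\gamma_q$ can be drawn so as to be strictly separated by a vertical line. Consequently there is a constant $c>0$ with $\Re(\zeta - \eta) \geq c$ uniformly for $\eta\in\gamma_\sigma$ and $\zeta\in\gamma_q$; this guarantees that the hypothesis of \eqref{factor-int} is satisfied pointwise on $\gamma_\sigma\times\gamma_q$, and in particular that $1/(\eta-\zeta)$ has no singularity on the domain of integration.

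Next I would insert the representation $\tfrac{1}{\eta-\zeta}=-\int_0^1 u^{\zeta-\eta-1}du$ into \eqref{kernelGeneralizedWishart}, yielding a triple integral over $\gamma_\sigma\times\gamma_q\times(0,1)$. The modulus of the integrand is bounded by $u^{\Re(\zeta-\eta)-1}\leq u^{c-1}$ times a continuous function on the compact set $\gamma_\sigma\times\gamma_q$; since $u^{c-1}$ is integrable on $(0,1]$, the triple integral is absolutely convergent and Fubini's theorem lets me perform the $\eta$- and $\zeta$-integrals in either order after integrating in $u$. Crucially, once the factor $u^{\zeta-\eta-1}$ is separated as $u^{-\eta}\cdot u^{\zeta}\cdot u^{-1}$, the integrand of the double contour integral decouples completely: all $\eta$-dependence sits in $u^{-\eta} e^{x\eta}\prod_{k}(\eta-q_k)/\prod_l(\eta+\sigma_l)$, and all $\zeta$-dependence in $u^{\zeta} e^{-y\zeta}\prod_l(\zeta+\sigma_l)/\prod_k(\zeta-q_k)$.

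Pulling the overall $u^{-1}$ outside, the two contour integrals are precisely the functions $F_1^{(1)}(x;u)$ and $F_2^{(1)}(y;u)$ defined in the statement, and the overall minus sign from \eqref{factor-int} is preserved, giving $K_N^{(1)}(x,y) = -\int_0^1 \frac{du}{u}\, F_1^{(1)}(x;u)\, F_2^{(1)}(y;u)$. The only subtle point in the argument is the uniform strict positivity $\Re(\zeta-\eta)\geq c>0$, which is indispensable both for the validity of \eqref{factor-int} and for the integrability near $u=0$ that licences Fubini; this is also exactly the reason the factorisation fails for the nested contour choices, where the two contours come arbitrarily close to one another.
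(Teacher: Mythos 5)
Your proposal is correct and follows essentially the same route as the paper: pick the non-nested contours, use $q_i+\sigma_j>0$ to get $\Re(\zeta-\eta)>0$ uniformly, insert \eqref{factor-int}, and split the integrand; you merely spell out the Fubini step that the paper leaves implicit.

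Two small imprecisions worth flagging, neither fatal. First, you write ``$-\sigma_i\leq 0< q_j$'', but the paper explicitly allows $\Sigma$ to have negative eigenvalues, so $-\sigma_i$ can be positive; what actually guarantees a separating vertical line is the constraint \eqref{constraint3}, $q_i+\sigma_j>0$ for all $i,j$, i.e. $\max_j(-\sigma_j)<\min_i q_i$. Second, your closing remark that the factorisation fails for nested contours ``where the two contours come arbitrarily close to one another'' misidentifies the obstruction: the nested contours are non-intersecting and can be kept a positive distance apart, but then $\Re(\zeta-\eta)$ does not have a definite sign along them, so the hypothesis of \eqref{factor-int} fails pointwise rather than at the boundary.
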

\subsection{Kernel of the product of two correlated coupled matrices}\label{kernelcoupled}
Next we immediately turn to the ensemble \eqref{JointDensityGX} of two correlated coupled random matrices. The reason is that the Gram type matrix is very similar to the previous subsection, $I_{i,j}=\delta_j^{\frac{\kappa}{2}}(2\alpha^\kappa(\alpha q_i-\delta_j))^{-1}$ from \eqref{Idef}, making it straightforward to generalise the results from the previous subsection. For $N=M$ we can apply Lemma \ref{LemmaCauchyInverse}, replacing $q_j\to\alpha q_j$ and $\sigma_i\to-\delta_i$, and, apart from a trivial factor, directly read off the inverse matrix $C_{i,j}$ of $A_{i,j}=I_{i,j}$:
\begin{equation}
C_{i,j}=\frac{2\alpha^\kappa}{\delta_j^{\frac{\kappa}{2}}}\frac{1}{(\alpha q_j-\delta_i)}\frac{\prod_{l=1}^N\left(\alpha q_l-\delta_i\right)\left(\alpha q_j-\delta_l\right)}{
{\prod_{k=1;k\neq i}^N}(\delta_k-\delta_i){\prod_{l=1;l\neq j}^N}(\alpha q_j-\alpha q_l)}
\,.
\end{equation}
Consequently, inserting this expression together with \eqref{varphi,psi} into \eqref{gen-kernel} we obtain the following explicit expression for the kernel at $\nu=0$:
\begin{eqnarray}
K_{N}(x,y)&=&
\left(\frac{x}{y}\right)^{\frac{\kappa}{2}}
\sum_{i=1}^N\sum_{j=1}^M I_\kappa(2\sqrt{\delta_i x})K_\kappa(2\sqrt{\alpha q_j y})
\frac{2\left(\frac{\alpha q_j}{\delta_j }\right)^{\frac{\kappa}{2}}}{(\alpha q_j-\delta_i)}\frac{\prod_{l=1}^N\left(\alpha q_l-\delta_i\right)\left(\alpha q_j-\delta_l\right)}{
{\prod_{k=1;k\neq i}^N}(\delta_k-\delta_i){\prod_{l=1;l\neq j}^N}(\alpha q_j-\alpha q_l)}
\nonumber\\
&=&
\left(\frac{x}{y}\right)^{\frac{\kappa}{2}}2
\oint_{\gamma_{\delta}}\frac{d\eta}{2\pi i}\oint_{\gamma_{q}}\frac{d\zeta}{2\pi i}\left(\frac{\zeta}{\eta}\right)^{\frac{\kappa}{2}}
\frac{I_{\kappa}\left(2\sqrt{\eta x}\right)K_{\kappa}\left(2\sqrt{\zeta y}\right)}{\eta-\zeta}\frac{\prod_{l=1}^N\left(\zeta-\delta_l\right)\left(\eta-\alpha q_l\right)}{\prod_{l=1}^N\left(\eta-\delta_l\right)\left(\zeta-\alpha q_l\right)}\,.
\label{kernelN=M}
\end{eqnarray}
Here, the contours are defined analogously to Theorem \ref{Thmker1}, with the difference that only the {\it two} choices are possible that are depicted in Fig \ref{gammaqs1}. After the replacement $-\sigma_l\to\delta_l$ and $q_l\to \alpha q_l$, the closed contour $\gamma_{\delta}$ is encircling $\delta_1,\ldots,\delta_N\geq 0$ counter-clockwise, including or excluding all $q_l$. Note that $\eta^{-\frac{\kappa}{2}}I_{\kappa}\left(2\sqrt{\eta x}\right)$ does not have a branch cut in $\eta$, cf. \eqref{BesselI}. The closed contour $\gamma_{q}$ is encircling $\alpha q_1,\ldots,\alpha q_N>0$ counter-clockwise and, in contrast, excludes the origin,
because $K_{\kappa}\left(2\sqrt{\zeta y}\right)$ has a logarithmic singularity there. Thus the contour $\gamma_q$ may {\it not} include all $\delta_l\geq 0$. The requirement of non-intersecting contours is always possible, due to the condition \eqref{aqd}
that $\alpha q_i-\delta_j>0$ $\forall i,j$. The last equality in \eqref{kernelN=M} is easy to see with the help of the Residue Theorem, where the order of integration may depend on the nesting of the contours. The prefactor $(x/y)^{\frac{\kappa}{2}}$ before the two integrals can be dropped as it cancels out in the determinant \eqref{MehtaDyson}, leading to an equivalent kernel (see also the remark after Theorem \ref{MainKernelContourIntegral}). Let us present the proof of this theorem now for general $M\geq N$.
\begin{proof}[Proof of Theorem \ref{MainKernelContourIntegral}]
In view of the Gram matrix $I_{i,j}$ \eqref{Idef}, it is advantageous for $\nu>0$ to slightly modify the Gram type matrix \eqref{MatrixA} by including the appropriate powers of $\alpha$:
\begin{equation}\label{MatrixAtilde}
\tilde{A}=\left(\begin{array}{ccccccc}
          1 & \alpha q_1 & \ldots & (\alpha q_1)^{\nu-1} & I_{1,1} & \ldots & I_{1,N} \\
          1 & \alpha q_2 & \ldots & (\alpha q_2)^{\nu-1} & I_{2,1} & \ldots & I_{2,N} \\
          \vdots &  \vdots& &  \vdots&    \vdots&  &    \vdots\\
          1 & \alpha q_M & \ldots & (\alpha q_M)^{\nu-1} & I_{M,1} & \ldots & I_{M,N}
        \end{array}
\right).
\end{equation}
This can be trivially achieved by multiplying numerator and denominator of \eqref{jpdf-gen} by $\alpha^{\nu(\nu-1)/2}$. Its inverse is now denoted by $\tilde{C}$, with $\tilde{A}\tilde{C}=\eins_M$. Following the ideas of the proof of Theorem \ref{Thmker1} from the previous subsection, it is then not difficult to relate the inversion of the corresponding full Gram type matrix \eqref{MatrixAtilde},
\begin{equation}
\label{AC1gen}
\sum_{k=1}^{\nu}(\alpha q_j)^{k-1}\tilde{C}_{k,l}+ \sum_{k=1}^N \frac{\delta_j^{\frac{\kappa}{2}}}{2\alpha^\kappa}\frac{1}{(\alpha q_j-\delta_k)}
\tilde{C}_{k+\nu,l}=\delta_{j,l}\,,\quad \mbox{for} \quad 1\leq j,l\leq M\,,
\end{equation}
to a set of $l=1,\ldots,M$ meromorphic functions
\begin{equation}
f_{l}(z)=\sum_{k=1}^{\nu}z^{k-1}\tilde{C}_{k,l}+ \sum_{k=1}^N
\frac{\delta_j^{\frac{\kappa}{2}}}{2\alpha^\kappa}\frac{1}{(z-\delta_k)}
\tilde{C}_{k+\nu,l}\
=\prod_{i=1,i\neq l}^M\frac{z-  \alpha q_i}{\alpha q_l- \alpha q_i}
\prod_{k=1}^N\frac{\alpha q_l-\delta_k}{z-\delta_k}\,.
\end{equation}
Its zeros at $z=\alpha q_{i\neq l}$, poles at $z=\delta_j$, the condition $f_l(\alpha q_l)=1$ and checking its correct behaviour at infinity completely fixes the right hand side. With only little more thought we can also write the analogue of the conditions \eqref{psigw} and \eqref{0gw} for the respective function $\psi_i$ from \eqref{varphi,psi}:
\begin{equation}
\label{psi-int}
\psi_i(x)=x^{\frac{\kappa}{2}}I_\kappa(2\sqrt{\delta_i x})=
2\oint_{\gamma_\delta}\frac{d\eta}{2\pi i} \frac{1}{\eta-\delta_i}  \left(\frac{\delta_i x}{\eta}\right)^{\frac{\kappa}{2}}\frac12 I_{\kappa}\left(2\sqrt{\eta x}\right)\,,
\quad i=1, \ldots, N\,,
\end{equation}
and, due to \eqref{BesselI}
\begin{equation}\label{0-int}
0= 2\oint_{\gamma_\delta}\frac{d\eta}{2\pi i}
\eta^{j-1} \alpha^\kappa \left(\frac{x}{\eta}\right)^{\frac{\kappa}{2}}I_{\kappa}\left(2\sqrt{\eta x}\right), \quad j=1,2, \ldots\,.
\end{equation}
We can then rewrite the kernel \eqref{gen-kernel} as in the previous subsection:
\begin{eqnarray}
K_{N}(x,y)
&=&\sum_{j=1}^M\left[
\sum_{i=1}^N\tilde{C}_{i+\nu,j}2\oint_{\gamma_\delta}\frac{d\eta}{2\pi i} \frac{1}{\eta-\delta_i}
\left(\frac{\delta_i x}{\eta}\right)^{\frac{\kappa}{2}}\frac12 I_{\kappa}\left(2\sqrt{\eta x}\right)\right.\nonumber\\
&&\left.\quad\quad
+\sum_{i=1}^\nu \tilde{C}_{i,j}
2\oint_{\gamma_\delta}\frac{d\eta}{2\pi i}
\eta^{i-1}\alpha^\kappa  \left(\frac{x}{\eta}\right)^{\frac{\kappa}{2}}I_{\kappa}\left(2\sqrt{ \eta x}\right)
\right]
\varphi_j(y)\nonumber\\
&=&\oint_{\gamma_{\delta}}\frac{d\eta}{2\pi i} 2\alpha^\kappa  \left(\frac{x}{\eta}\right)^{\frac{\kappa}{2}}
\sum_{j=1}^MI_{\kappa}\left(2\sqrt{\eta x}\right)
\left[\sum_{i=1}^{\nu}
\eta^{i-1}\tilde{C}_{i,j}+\sum_{i=1}^N\frac{\delta_j^{\frac{\kappa}{2}}}{2\alpha^\kappa}\frac{1}{(\eta-\delta_i)}\tilde{C}_{i+\nu,j}
\right]
\varphi_j(y)
\nonumber\\
&=&
\left(\frac{x}{y}\right)^{\frac{\kappa}{2}}
2\oint_{\gamma_{\delta}}\frac{d\eta}{2\pi i}
\sum_{j=1}^M I_{\kappa}\left(2\sqrt{\eta x}\right)
\left[
\prod_{m=1,m\neq j}^M\frac{\eta-  \alpha q_m}{\alpha q_j- \alpha q_m}
\prod_{k=1}^N\frac{\alpha q_j-\delta_k}{\eta-\delta_k}
\right]\left(\frac{\alpha q_j}{\eta}\right)^{\frac{\kappa}{2}}
\nonumber\\
&&
\times
K_{\kappa}\left(2\sqrt{\alpha q_j y}\right)
\,.
\label{kerint}
\end{eqnarray}
We have inserted $\varphi_j(y)$ from \eqref{varphi,psi} and in the last step used the Residue Theorem, leading to a kernel equivalent to \eqref{MainKerneloint}. As discussed previously we have two choices for the contours not to intersect, being either nested or separated. In case they are nested the inner integration has to be done first.
\end{proof}
Along the same lines as in the previous subsection we can derive the following equivalent factorised form of the kernel, using the identity \eqref{factor-int}. For this factorised form we have to choose again the contours to be non-nested (cf. Figure \ref{gammaqs1} top), for the integrals to factorise and become independent. Here we also removed the prefactor $(x/y)^{\frac{\kappa}{2}}$ in \eqref{kerint}.
\begin{cor}\label{kernelRealDefiniteIntegral} The kernel $K_{N}(x,y)$ given by Theorem \ref{MainKernelContourIntegral} is equal
to the following kernel
\begin{equation}
K_{N}(x,y)=-\int_{0}^{1}\frac{du}{u}F_1(x;u)F_2(y;u)\,.
\end{equation}
The functions $F_1(x;u)$ and $F_2(y;u)$ are defined by the formulae
\begin{equation}
F_1(x;u)=\oint_{\gamma_{\delta}}
\frac{d\eta}{2\pi i}  u^{-\eta}{\eta^{-\frac{\kappa}{2}}}I_{\kappa}\left(2\sqrt{\eta x}\right)
\frac{\prod_{l=1}^M\left(\eta-\alpha q_l\right)}{\prod_{l=1}^N\left(\eta-\delta_l\right)}\,,
\end{equation}
where $\gamma_\delta$ encloses all $\delta_l$ in a counter-clockwise way, and
\begin{equation}
F_2(y;u)=2\oint_{\gamma_{q}}
\frac{d\zeta}{2\pi i}  u^{\zeta}\zeta^{\frac{\kappa}{2}}K_{\kappa}\left(2\sqrt{\zeta y}\right)
\frac{\prod_{l=1}^N\left(\zeta-\delta_l\right)}{\prod_{l=1}^M\left(\zeta-\alpha q_l\right)}\,.
\end{equation}
The contour $\gamma_q$ encloses all $\alpha q_l$ in a counter-clockwise way, excludes the origin and all $\delta_l$..
\end{cor}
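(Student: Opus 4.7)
The plan is to apply the same factorisation trick that produced the previous corollary (for $K_N^{(1)}$) to the kernel of Theorem \ref{MainKernelContourIntegral}. First, I would specialise to the non-nested configuration of contours shown in Figure \ref{gammaqs1} (top): $\gamma_{\delta}$ a small counter-clockwise loop around $\{\delta_1,\ldots,\delta_N\}$ and $\gamma_{q}$ a small counter-clockwise loop around $\{\alpha q_1,\ldots,\alpha q_M\}$, with $\gamma_{q}$ also excluding the origin to respect the logarithmic singularity of $K_\kappa$. The convergence condition $\alpha q_i-\delta_j>0$ of \eqref{aqd} lets us place a vertical line separating the two parameter sets, so the contours can be chosen to satisfy $\Re(\zeta)-\Re(\eta)\geq c>0$ uniformly for $\eta\in\gamma_{\delta}$, $\zeta\in\gamma_{q}$. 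In particular $\Re(\zeta-\eta)>0$, so identity \eqref{factor-int} applies.

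Second, I would start from \eqref{MainKerneloint}, discard the prefactor $(y_1/y_2)^{\kappa/2}$ (this yields an \emph{equivalent} kernel by the remark following Theorem \ref{MainKernelContourIntegral}), and insert
\[
\frac{1}{\eta-\zeta}\,=\,-\int_{0}^{1} du\, u^{\zeta-\eta-1}.
\]
The integrand of the triple integral then becomes a product of a function of $(u,\eta)$ and a function of $(u,\zeta)$, since the splitting $u^{\zeta-\eta-1}=u^{-1}u^{-\eta}u^{\zeta}$ sends all $\eta$-dependence to one factor and all $\zeta$-dependence to the other. Interchanging the order of integration by Fubini, which I justify in a moment, one reads off precisely the definitions of $F_1(x;u)$ and $F_2(y;u)$ stated in the corollary; the overall factor of $2$ in \eqref{MainKerneloint} is absorbed into $F_2$.

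Third, I would justify the exchange. The contours $\gamma_{\delta}$ and $\gamma_{q}$ are compact, and the Bessel factors $\eta^{-\kappa/2}I_\kappa(2\sqrt{\eta x})$ and $\zeta^{\kappa/2}K_\kappa(2\sqrt{\zeta y})$ together with the rational factors are continuous and hence bounded on them. The only issue is integrability in $u$ at the endpoint $u=0$: but $|u^{\zeta-\eta-1}|=u^{\Re(\zeta-\eta)-1}\leq u^{c-1}$ on the contours, which is integrable on $(0,1)$. Hence Fubini applies, the interchange is valid, and the three pieces assemble into the claimed form $K_N(x,y)=-\int_0^1\frac{du}{u}F_1(x;u)F_2(y;u)$.

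The only delicate point is ensuring that \eqref{factor-int} is truly applicable, which is why the non-nested contour choice is essential: in the nested configuration the real parts of $\eta$ and $\zeta$ need not be separated, and in general $\Re(\zeta-\eta)>0$ would fail. This is not really an obstacle but rather the one design choice that makes the decoupling work, and it is enabled precisely by the convergence condition \eqref{aqd} on $\mathcal{P}(G,X)$.
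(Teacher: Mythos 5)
Your proposal is correct and matches the paper's approach: the corollary is obtained by inserting the identity $\tfrac{1}{\eta-\zeta}=-\int_0^1 du\,u^{\zeta-\eta-1}$ (valid since the non-nested contours and the condition $\alpha q_i-\delta_j>0$ from \eqref{aqd} force $\Re(\zeta-\eta)>0$) into \eqref{MainKerneloint} and applying Fubini to separate the $\eta$- and $\zeta$-integrals into $F_1$ and $F_2$. Your careful justification of Fubini via the uniform bound $u^{c-1}$ on the compact contours is a nice addition; the paper itself merely states the result is obtained "along the same lines" as the analogous corollary for $K_N^{(1)}$.
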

\subsection{Kernel of the product of two independent correlated matrices}\label{kernelindep}
We turn to the kernel of the ensemble \eqref{JointDensityindepGX} of two independent matrices, one of which has correlated entries. Rather than trying to first invert the Gram matrix for $N=M$, we immediately turn to the procedure from the previous two subsections, that directly leads to the following double contour integral representation.
\begin{thm}\label{Thmker2} The correlation kernel $K_{N}^{(2)}(x,y)$ of the ensemble \eqref{JointDensityindepGX} permits the following double contour integral representation
\begin{equation}
K_N^{(2)}(x,y)= \left(\frac{x}{y}\right)^{\frac{\kappa}{2}}
2\oint_{\gamma_0}\frac{d\eta}{2\pi i} \oint_{\gamma_q}\frac{d\zeta}{2\pi i}
\left(\frac{\zeta}{\eta}\right)^{N+\frac{\kappa}{2}}\frac{I_{\kappa}\left(2\sqrt{\eta x}\right)K_{\kappa}\left(2\sqrt{\zeta y}\right)}{\eta-\zeta}
\prod_{l=1}^{M}\frac{\eta-\alpha q_{l}}{\zeta-\alpha q_{l}}\,,
\label{kernel-indep}
\end{equation}
where $\gamma_0$ is a closed contour encircling the origin in counter-clockwise direction, and $\gamma_{q}$ is a closed contour encircling $\alpha q_1,\ldots,\alpha q_M>0$ counter-clockwise, excluding the origin and not intersecting $\gamma_0$.
\end{thm}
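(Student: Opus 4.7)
The plan is to follow the same template as the proofs of Theorem \ref{Thmker1} and Theorem \ref{MainKernelContourIntegral}: I express the kernel from Proposition \ref{PropPP} in terms of a family of meromorphic functions that encode the inverse Gram type matrix without ever inverting it explicitly, and then collapse the resulting double sum into a double contour integral by the Residue Theorem.

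First I would start from
\begin{equation*}
K_N^{(2)}(x,y) = \sum_{i=1}^{N}\sum_{j=1}^{M}\psi_i^{(2)}(x)\, C_{i+\nu,j}\, \varphi_j^{(2)}(y),
\end{equation*}
with $\psi_i^{(2)}(y)=y^{\kappa+i-1}$ from \eqref{Psi2} and $\varphi_j^{(2)}(y)$ from \eqref{varphi2}; the Gram type matrix \eqref{MatrixA} has entries $I_{i,j}^{(2)} = \Gamma(\kappa+j)\Gamma(j)\,q_i^{-j}/(2\alpha^{\kappa+j})$ of Cauchy-like structure in $\alpha q_i$. To absorb the awkward $j$-dependent constants I would rescale the last $N$ columns of $A$ by $2\alpha^{\kappa}/(\Gamma(\kappa+k)\Gamma(k))$ and its first $\nu$ columns by $\alpha^{k-1}$, producing a new matrix $\hat A$ whose $i$-th row is $(1,\alpha q_i,\ldots,(\alpha q_i)^{\nu-1},(\alpha q_i)^{-1},\ldots,(\alpha q_i)^{-N})$. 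The inverse $\hat C$ then differs from $C$ only by the corresponding row-scaling.

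Second, for $l=1,\ldots,M$ I would introduce the meromorphic functions
\begin{equation*}
\hat f_l(\eta) = \sum_{k=1}^{\nu}\eta^{k-1}\hat C_{k,l} + \sum_{k=1}^{N}\eta^{-k}\hat C_{k+\nu,l},
\end{equation*}
and note that $\hat A\hat C = \eins_M$ is exactly the set of conditions $\hat f_l(\alpha q_j)=\delta_{j,l}$. Counting the $M-1$ forced zeros at $\alpha q_{j\neq l}$, the pole of order $N$ at the origin, the asymptotics $\hat f_l(\eta)\sim\eta^{\nu-1}$ at infinity, and the normalisation $\hat f_l(\alpha q_l)=1$, this uniquely forces
\begin{equation*}
\hat f_l(\eta) = \frac{(\alpha q_l)^{N}}{\eta^{N}}\prod_{m\neq l}\frac{\eta-\alpha q_m}{\alpha q_l-\alpha q_m}.
\end{equation*}
Next I would expand the monomial $\psi_i^{(2)}(x)=x^{\kappa+i-1}$ as a contour integral obtained from the series of the modified Bessel function:
\begin{equation*}
x^{\kappa+i-1} = x^{\kappa/2}\,\Gamma(i)\Gamma(\kappa+i)\oint_{\gamma_0}\frac{d\eta}{2\pi i}\,\eta^{-i-\kappa/2}I_\kappa(2\sqrt{\eta x}),\qquad i=1,\ldots,N,
\end{equation*}
which follows by reading off the residue at $\eta=0$. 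Since $\eta^{k-1-\kappa/2}I_\kappa(2\sqrt{\eta x})$ is analytic at the origin for $k\geq 1$, the analogous contour integral vanishes, which lets me incorporate the polynomial piece $\sum_{k=1}^{\nu}\eta^{k-1}\hat C_{k,j}$ of $\hat f_j(\eta)$ as a trivial addition. Substituting everything back into the kernel yields
\begin{equation*}
K_N^{(2)}(x,y) = 2\alpha^{\kappa}x^{\kappa/2}\oint_{\gamma_0}\frac{d\eta}{2\pi i}\,\eta^{-\kappa/2}I_\kappa(2\sqrt{\eta x})\sum_{j=1}^{M}\varphi_j^{(2)}(y)\,\hat f_j(\eta),
\end{equation*}
and the remaining sum over $j$ is of Lagrange-interpolation type: writing $\prod_{m\neq l}(\eta-\alpha q_m)=\prod_{m}(\eta-\alpha q_m)/(\eta-\alpha q_l)$ turns it into a residue sum at $\zeta=\alpha q_l$, which I convert into $\oint_{\gamma_q}\frac{d\zeta}{2\pi i}$ around a contour enclosing $\alpha q_1,\ldots,\alpha q_M$. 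All powers of $\alpha$ cancel cleanly and produce the claimed \eqref{kernel-indep}, with $\gamma_0$ chosen small enough not to intersect $\gamma_q$.

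The main obstacle I expect is the bookkeeping of the $\alpha$- and $\Gamma$-function scaling factors: the rescaling of the Gram type matrix must be matched precisely to the Bessel contour identity for $\psi_i^{(2)}$ so that the residue at the origin in $\eta$ and the Lagrange residues at $\alpha q_j$ in $\zeta$ combine exactly into the advertised integrand, with no leftover constants in front. As an alternative one could specialise Theorem \ref{MainKernelContourIntegral} in the limit $\delta_1,\ldots,\delta_N\to 0$, where $\gamma_\delta$ collapses to $\gamma_0$ and $\prod_{l=1}^{N}(\zeta-\delta_l)/(\eta-\delta_l)\to (\zeta/\eta)^{N}$; however, this shortcut still requires justifying that the kernel limit reproduces the kernel of the limiting ensemble \eqref{JointDensityindepGX}, so the direct derivation above appears preferable and parallels the structure of the previous two subsections.
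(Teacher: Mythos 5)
Your proposal is correct and follows essentially the same strategy as the paper's own proof: it constructs the meromorphic functions from the relation $\hat A\hat C=\eins_M$, determines them uniquely from their zeros at $\alpha q_{j\neq l}$, the pole of order $N$ at the origin, the behaviour at infinity, and the normalisation $\hat f_l(\alpha q_l)=1$, and then reassembles the kernel via the Bessel contour representation of $\psi_i^{(2)}$ together with a Lagrange-type residue sum. The only cosmetic difference is that you rescale both the first $\nu$ columns and the last $N$ columns of the Gram matrix so that all $\Gamma$- and $\alpha$-factors are absorbed into $\hat A$, whereas the paper rescales only the first $\nu$ columns and keeps those constants in the definition of $f_l$; both lead to the identical $f_l$ and the identical final residue computation.
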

\begin{figure}[h]
\unitlength1cm
\begin{picture}(12.2,3.5)
\thinlines
\put(0,2){\line(1,0){12}}
\linethickness{0.4mm}
\put(11.95,2){\vector(1,0){0.1}}
\thinlines
\put(3.6,1.5){$0$}
\linethickness{0.4mm}
\put(3.7,1.85){\line(0,1){0.3}}
\thinlines
\thinlines
\put(6.8,1.5){$\alpha q_{1}$}
\put(7,2){\circle*{0.15}}
\put(7.6,1.5){$\alpha q_{2}$}
\put(7.8,2){\circle*{0.15}}
\multiput(8.4,1.6)(0.15,0){3}{\circle*{0.05}}
\put(9.8,1.5){$\alpha q_{M}$}
\put(10,2){\circle*{0.15}}
\put(6.1,2.8){$\gamma_{q}$}
\put(2.8,3.1){$\gamma_{0}$}
\put(8.7,2){\oval(4.2,1.6)}
\put(7,2){\oval(8,2.2)}
\linethickness{0.4mm}
\put(9.4,2.8){\vector(-1,0){0.1}}
\put(10.2,3.1){\vector(-1,0){0.1}}
\put(7.3,1.2){\vector(1,0){0.1}}
\put(4.3,0.9){\vector(1,0){0.1}}
\end{picture}
\caption{Nested choice of the contours $\gamma_{0}$ and $\gamma_{q}$. For the second choice of non-intersecting contours that is non-nested $\gamma_0$ is including only the origin, but none of the $\alpha q_l$. Because $\gamma_q$ is excluding the origin we don't have a third choice here, where $\gamma_0$ is lying inside $\gamma_q$.}\label{Fig2}
\end{figure}
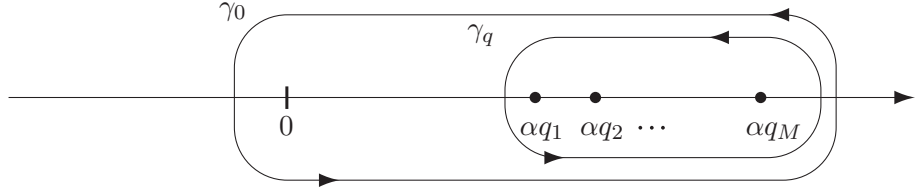
\begin{proof}
In view of the Gram matrix \eqref{I2def}, $I^{(2)}_{i,j}=\Gamma(\kappa+j)\Gamma(j)/(2\alpha^{\kappa+j}q_i^j)$, it is again useful to start with the modified Gram type matrix \eqref{MatrixAtilde} as in the previous subsection. The corresponding equation $\tilde{A}\tilde{C}=\eins_M$ thus reads:
 \begin{equation}
\label{A2gen}
\sum_{k=1}^{\nu}(\alpha q_j)^{k-1}\tilde{C}_{k,l}+ \sum_{m=1}^N \frac{\Gamma(\kappa+m)\Gamma(m)}{2\alpha^{\kappa+m}q_j^m}
\tilde{C}_{m+\nu,l}=\delta_{j,l}\,,\quad \mbox{for} \quad 1\leq j,l\leq M\,.
\end{equation}
Once again this can be used to define the following set of $l=1,\ldots,M$ meromorphic functions
\begin{equation}
f_{l}(z)=\sum_{k=1}^{\nu}z^{k-1}\tilde{C}_{k,l}
+ \sum_{m=1}^N \frac{\Gamma(\kappa+m)\Gamma(m)}{2\alpha^{\kappa}}z^{-m}
\tilde{C}_{m+\nu,l}
=\left(\frac{\alpha q_l}{z}\right)^{N}
\prod_{i=1,i\neq l}^M\frac{z-  \alpha q_i}{\alpha q_l- \alpha q_i}\,.
\end{equation}
These functions are determined by their $M-1$ zeros at $z=\alpha q_{i\neq l}$, the poles of up to order $N$ at $z=0$, the condition $f_l(\alpha q_l)=1$, and by checking its behaviour at infinity. Given that in this ensemble we have \eqref{Psi2}, we can rewrite (cf. \eqref{BesselI})
\begin{equation}
\label{psi2int}
\psi_i^{(2)}(x)=x^{\kappa+i-1}=\frac{1}{(2\pi i)^2}\oint_{\gamma_0}d\eta\eta^{-i}\left( \frac{x}{\eta}\right)^{\frac{\kappa}{2}}I_\kappa(2\sqrt{\eta x}) \Gamma(\kappa+i)\Gamma(i)\ ,\ \ i=1,2,\ldots,
\end{equation}
where $\gamma_0$ is a closed contour encircling the origin counter-clockwise. For the additional condition we can reuse \eqref{0-int} which is true also for $\gamma_0$, due to the analyticity of the integrand. We thus obtain from \eqref{gen-kernel} in our case
\begin{eqnarray}
K_{N}^{(2)}(x,y)
&=&\sum_{j=1}^M\left[
\sum_{i=1}^N\tilde{C}_{i+\nu,j}\oint_{\gamma_0}\frac{d\eta}{2\pi i} \eta^{-i}
\left(\frac{x}{\eta}\right)^{\frac{\kappa}{2}}I_{\kappa}\left(2\sqrt{\eta x}\right)\Gamma(\kappa+i)\Gamma(i)
\right.\nonumber\\
&&\left.\quad\quad
+\sum_{i=1}^\nu \tilde{C}_{i,j}
\oint_{\gamma_0}\frac{d\eta}{2\pi i}
\eta^{i-1}2\alpha^\kappa  \left(\frac{x}{\eta}\right)^{\frac{\kappa}{2}}I_{\kappa}\left(2\sqrt{ \eta x}\right)
\right]
\varphi_j^{(2)}(y)\nonumber\\
&=&\oint_{\gamma_0}\frac{d\eta}{2\pi i} \,2\alpha^\kappa  \left(\frac{x}{\eta}\right)^{\frac{\kappa}{2}}
\sum_{j=1}^M I_{\kappa}\left(2\sqrt{\eta x}\right)
\left[\sum_{i=1}^{\nu}
\eta^{i-1}\tilde{C}_{i,j}+\sum_{i=1}^N\frac{\Gamma(\kappa+i)\Gamma(i)}{2\alpha^\kappa}\eta^{-i}\tilde{C}_{i+\nu,j}
\right]
\varphi_j^{(2)}(y)
\nonumber\\
&=&
\left(\frac{x}{y}\right)^{\frac{\kappa}{2}}
2\oint_{\gamma_0}\frac{d\eta}{2\pi i}
\sum_{j=1}^M
I_{\kappa}\left(2\sqrt{\eta x}\right)
\left[
\frac{(\alpha q_j)^N}{\eta^N}
\prod_{m=1,m\neq j}^M\frac{\eta-  \alpha q_m}{\alpha q_j- \alpha q_m}
\right]
\left(\frac{\alpha q_j}{\eta}\right)^{\frac{\kappa}{2}}
K_{\kappa}\left(2\sqrt{\alpha q_j y}\right)
\,.
\label{ker2int}
\end{eqnarray}
Here, we have inserted $\varphi_j^{(2)}(y)$ from \eqref{varphi2} and used the Residue Theorem, with two possible choices of contours, cf. Figure \ref{Fig2}.
\end{proof}
Note that as a check \eqref{kernel-indep} agrees with the kernel of the coupled ensemble \eqref{MainKerneloint}, when setting all $\delta_{k=1,\ldots,N}=0$ there.

Using again the identity \eqref{factor-int}, from \eqref{kernel-indep} together with the choice of non-nested contours we obtain the following factorised integral representation of a kernel equivalent to that in Theorem \ref{Thmker2}.
\begin{cor}\label{kernelRealDefiniteIntegral2} The kernel $K_{N}^{(2)}(x,y)$ given by Theorem \ref{Thmker2} is equivalent to the following kernel
\begin{equation}
K_{N}^{(2)}(x,y)=-\int_{0}^{1}\frac{du}{u}F_1^{(2)}(x;u)F_2^{(2)}(y;u)\,.
\end{equation}
The functions $F_1^{(2)}(x;u)$ and $F_2^{(2)}(y;u)$ are defined by the formulae
\begin{equation}
F_1^{(2)}(x;u)=\oint_{\gamma_{0}}
\frac{d\eta}{2\pi i}  u^{-\eta}{\eta^{-\frac{\kappa}{2}-N}}I_{\kappa}\left(2\sqrt{\eta x}\right)
{\prod_{l=1}^M\left(\eta-\alpha q_l\right)}\,,
\end{equation}
with $\gamma_0$ encircling the origin counter-clockwise, and
\begin{equation}
F_2^{(2)}(y;u)={2}\oint_{\gamma_q}
\frac{d\zeta}{2\pi i}  u^{\zeta}\zeta^{\frac{\kappa}{2}+N}K_{\kappa}\left(2\sqrt{\zeta y}\right)
\prod_{l=1}^M\frac{1}{\zeta-\alpha q_l}\,,
\end{equation}
with $\gamma_q$ including all $\alpha q_l$ counter-clockwise and excluding the origin.
\end{cor}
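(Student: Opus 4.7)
The plan is to derive this factorisation exactly along the lines used in Corollary \ref{kernelRealDefiniteIntegral} for the coupled model, starting from the contour integral representation \eqref{kernel-indep} of Theorem \ref{Thmker2}. First I would choose the non-nested option for the contours described in Figure \ref{Fig2}: $\gamma_{0}$ a small loop around the origin not encircling any $\alpha q_{l}$, and $\gamma_{q}$ a loop enclosing $\alpha q_{1},\ldots,\alpha q_{M}$ and not intersecting $\gamma_{0}$. Because the two loops are disjoint and $\alpha q_{l}>0$, the configuration can be arranged so that $\Re(\zeta-\eta)>0$ uniformly for $(\eta,\zeta)\in\gamma_{0}\times\gamma_{q}$.

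Next I would invoke the identity \eqref{factor-int}, namely
\begin{equation*}
\frac{1}{\eta-\zeta}=-\int_{0}^{1}u^{\zeta-\eta-1}\,du,\qquad \Re(\zeta-\eta)>0,
\end{equation*}
to split the coupling factor in \eqref{kernel-indep}. Since $u^{\zeta-\eta-1}$ is bounded on the compact product of contours uniformly in $u\in(0,1)$ away from the endpoints, and the remaining integrand is continuous there, Fubini's theorem lets me interchange the $u$-integral with the two contour integrals. The exponential coupling splits as $u^{\zeta-\eta}=u^{-\eta}\cdot u^{\zeta}$, so after absorbing the prefactor $(x/y)^{\kappa/2}$ by the equivalence $K_{N}(y_{1},y_{2})\sim K_{N}(y_{1},y_{2})f(y_{1})/f(y_{2})$ (noted after Theorem \ref{MainKernelContourIntegral}), the double integral factorises.

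Then I would collect the pieces: the factor $(\zeta/\eta)^{N+\kappa/2}$ splits as $\eta^{-N-\kappa/2}\cdot \zeta^{N+\kappa/2}$, the product $\prod_{l=1}^{M}(\eta-\alpha q_{l})/(\zeta-\alpha q_{l})$ splits as $\prod(\eta-\alpha q_{l})\cdot\prod(\zeta-\alpha q_{l})^{-1}$, and the Bessel factors $I_{\kappa}(2\sqrt{\eta x})$ and $K_{\kappa}(2\sqrt{\zeta y})$ are already separated. The factor of $2$ from \eqref{kernel-indep} is attached to $F_{2}^{(2)}$, and the overall minus sign from \eqref{factor-int} gives the $-\int_{0}^{1}du/u$ in front. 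This yields exactly the representation
\begin{equation*}
K_{N}^{(2)}(x,y)=-\int_{0}^{1}\frac{du}{u}\,F_{1}^{(2)}(x;u)\,F_{2}^{(2)}(y;u)
\end{equation*}
with $F_{1}^{(2)}$ and $F_{2}^{(2)}$ as stated.

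The only mildly delicate point is the interchange of the $u$-integration with the contour integrals near $u=0$ and $u=1$, because $u^{\zeta-\eta-1}$ becomes singular at $u=0$. However, $\Re(\zeta-\eta)>0$ is uniform in $(\eta,\zeta)$ on the compact contours, so the $u$-integrand is dominated by $u^{\delta-1}$ for some $\delta>0$, which is integrable on $(0,1)$; together with the continuity of the rest of the integrand this justifies Fubini. The remaining manipulations are algebraic and identical in spirit to the corresponding step in Corollary \ref{kernelRealDefiniteIntegral}.
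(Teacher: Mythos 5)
Your proof is correct and follows exactly the route the paper intends: choose the non-nested contours so that $\Re(\zeta-\eta)>0$, insert the integral identity \eqref{factor-int}, interchange by Fubini, and factorise the integrand, dropping the overall $(x/y)^{\kappa/2}$ by kernel equivalence. This matches the paper's (implicit) argument, which simply points to \eqref{factor-int} and the analogous Corollary \ref{kernelRealDefiniteIntegral}.
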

This corollary agrees with Corollary \ref{kernelRealDefiniteIntegral} of the coupled ensemble, after setting all $\delta_{k=1,\ldots,N}=0$ therein.
\section{Large-$N$ Limit at the Origin and Integrability}\label{large-N}
In this section we will study the limit of large matrix size $N\to\infty$ at the origin of the spectrum, in all three ensembles
separately. It turns out that the kernel of the generalised Wishart ensemble \eqref{corrWishart} will lead to the generalised Bessel kernel $\mathbb{K}_{\rm III}$ \eqref{kernelsup} in the large-$N$ limit. This will be shown in the first Subsection \ref{Lim-kernelgW}. The kernel of the second ensemble \eqref{JointDensityindepGX} of independent matrices with correlated entries leads to the limiting kernel $\mathbb{K}_{\rm I}$ \eqref{kernelsub}. It generalises the limiting Meijer $G$-kernel \cite{KZ14} obtained for the product of two independent random matrices by adding finite rank perturbations, as will be shown in Subsection \ref{Lim-kernelindep}. In the last Subsection \ref{Lim-kernelcoupled} we will show that the kernel of the product of two correlated coupled random matrices leads to three different limiting kernels, depending on the coupling parameter $\mu=\mu(N)$ as a function of $N$:
The kernel $\mathbb{K}_{\rm III}$ follows in limit (III) $\mu(N) N\to 0$, and kernel $\mathbb{K}_{\rm I}$  in limit (I) $\mu(N) N\to\infty$. A third kernel $\mathbb{K}_{\rm II}$ given in \eqref{kernelcrit} follows in limit (II) when $\mu(N) N\to\tau/4$,
with $\tau>0$, and interpolates between the kernels obtained in limits (I) and (III). It generalises the interpolating kernel of \cite{AStr16b} and of \cite{DZL}, by adding further finite rank perturbations. In that sense all three limiting kernels $\mathbb{K}_{\rm I, II, III}$ are universal as they follow from different ensembles. For all three kernels we provide their corresponding integrable form, in the sense of \cite{IIKS}.
\subsection{Bessel kernel with finite rank perturbations from the generalised Wishart ensemble}\label{Lim-kernelgW}
We begin by recalling the generalised Wishart ensemble \eqref{corrWishart}
\begin{equation}
\mathcal{P}_1(X)=\,c_1
\exp\left[-
\Tr\left(X\Sigma X^*\right)-\Tr\left(QXX^*\right)\right]\ ,
\nonumber
\end{equation}
with eigenvalues $\sigma_1,\ldots,\sigma_N$ of $\Sigma$ and $q_1,\ldots,q_M$ of $Q$, respectively. In the following we will consider finite rank perturbations of the fully degenerate case, $\Sigma=\sigma\eins_N$ and $Q=q\eins_M$, by setting
\begin{equation}
\sigma_{n+1}=\cdots=\sigma_{N}=\sigma\quad\text{and }\;q_{m+1}=\cdots=q_{M}=q\,,
\label{sqdegenerate}
\end{equation}
with $n$ and $m$ independent of $N$. Thus we consider a perturbation around the standard Wishart-Laguerre ensemble  $\mathcal{P}_1(X)=c_1\exp[-(q+\sigma)\Tr(XX^*)]$, with $q+\sigma>0$.

If we want to compare to our most general ensemble \eqref{JointDensityGX} later, e.g. by integrating out random matrix $G$ there, we would have to identify $-\Sigma = \Omega\Omega^*/\alpha$, or $-\sigma_j=\delta_j/\alpha$ for $j=1,\ldots,N$. In Subsection \ref{Lim-kernelcoupled} we will make the parameters $\alpha$ and $\delta_j$ there $\mu$- and thus $N$-dependent, which would lead to identify $q+\sigma=2/(1+\mu)$. In this subsection, however, there is no need to introduce such an extra parameter $\mu=\mu(N)$, as the large-$N$ limit at the origin that we will take here does not depend on it.

Inserting the degeneracy \eqref{sqdegenerate} into the kernel at finite-$N$ \eqref{kernelGeneralizedWishart} from Theorem \ref{Thmker1}, we obtain
\begin{eqnarray}
K_{N}^{(1)}(x,y)&=&\oint_{\gamma_{\sigma}}\frac{d\eta}{2\pi i}\oint_{\gamma_{q}}\frac{d\zeta}{2\pi i}\frac{e^{x\eta-y\zeta}}{\eta-\zeta}\left(\frac{\zeta+\sigma}{\eta+\sigma}\right)^{N-n}\left(\frac{\eta-q}{\zeta-q}\right)^{M-m}\prod_{l=1}^{n}\frac{\zeta+\sigma_{l}}{\eta+\sigma_{l}}\prod_{k=1}^{m}\frac{\eta-q_{k}}{\zeta-
q_{k}}\,.
\nonumber\\
&=&{e^{q(x-y)}}\left(q+\sigma\right)\oint_{\Gamma_{\pi}}\frac{dv}{2\pi i}\oint_{\Gamma_{\theta}}\frac{du}{2\pi i}\frac{e^{(q+\sigma)(yu-xv)}}{u-v}\left(\frac{1-\frac{1}{u}}{1-\frac{1}{v}}\right)^{N-n}\left(\frac{v}{u}\right)^{\nu-m+n}\nonumber\\
&&\times
\prod_{l=1}^{n}\frac{u-\frac{q+\sigma_{l}}{q+\sigma}}{v-\frac{q+\sigma_{l}}{q+\sigma}}\prod_{k=1}^{m}\frac{v-\frac{q-q_{k}}{q+\sigma}}{u-\frac{q-q_{k}}{q+\sigma}}\,.
\label{K1start}
\end{eqnarray}
In the second line we have made the following substitution:
\begin{equation}
\zeta=q-\left(q+\sigma\right)u\quad\text{and }\;\eta=q-\left(q+\sigma\right)v\,.
\label{sub1}
\end{equation}
Starting from the nested contours as in Figure \ref{gammaqs1} bottom, the integration contours resulting from this substitution are given in Figure \ref{gWcontours2}. Due to $q+\sigma>0$ the substitution is non-singular.
\begin{figure}[h]
\unitlength1cm
\begin{picture}(12.2,3.5)
\thinlines
\put(0,2){\line(1,0){12}}
\linethickness{0.4mm}
\put(11.95,2){\vector(1,0){0.1}}
\thinlines
\put(1.9,1.5){$\theta_{m}$}
\put(2,2){\circle*{0.15}}
\multiput(2.5,1.6)(0.15,0){3}{\circle*{0.05}}
\put(3.6,1.5){$0$}
\linethickness{0.4mm}
\put(3.7,1.85){\line(0,1){0.3}}
\thinlines
\put(4.9,1.5){$\theta_{1}$}
\put(5,2){\circle*{0.15}}
\put(6.9,1.5){$\pi_{n}$}
\put(7,2){\circle*{0.15}}
\multiput(7.5,1.6)(0.15,0){3}{\circle*{0.05}}
\put(8.6,1.5){$1$}
\linethickness{0.4mm}
\put(8.7,1.85){\line(0,1){0.3}}
\thinlines
\put(9.9,1.5){$\pi_{1}$}
\put(10,2){\circle*{0.15}}
\put(5.6,2.7){$\Gamma_{\theta}$}
\put(10.8,3.1){$\Gamma_{\pi}$}
\put(3.6,2){\oval(4.2,1.6)}
\put(6,2){\oval(9.6,2.2)}
\linethickness{0.4mm}
\put(4.7,2.8){\vector(-1,0){0.1}}
\put(10,3.1){\vector(-1,0){0.1}}
\put(2.5,1.2){\vector(1,0){0.1}}
\put(2.3,0.9){\vector(1,0){0.1}}
\end{picture}
\caption{The contours $\Gamma_{\theta}$ and $\Gamma_{\pi}$ resulting from the substitution \eqref{sub1} are shown. From \eqref{pithetadef1} the poles $\theta_k$ are centred around the degenerate value $\sigma$ that has been mapped to the origin, and likewise the poles $\pi_j$ are centred around the degenerate value $q$ mapped to unity.}
\label{gWcontours2}
\end{figure}
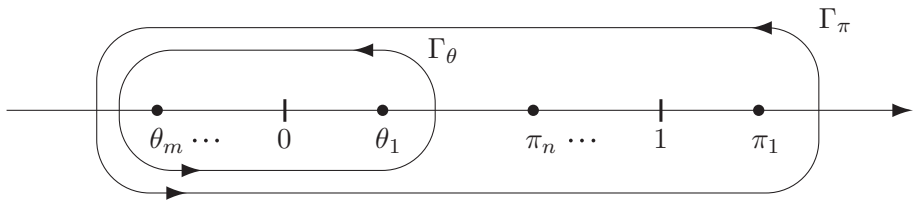

In order to take the large-$N$ limit let us introduce the following notation for the locations of the nontrivial poles of the integrand \eqref{K1start}:
\begin{equation}
\pi_{l}=\frac{q+\sigma_{l}}{q+\sigma}\,,\ \ l=1,\ldots,n\,, \quad\text{and }\;\theta_{k}=\frac{q-q_{k}}{q+\sigma}\,,\ \ k=1,\ldots,m\,.
\label{pithetadef1}
\end{equation}
These will be rescaled as
\begin{equation}
\hat{\pi}_{l}=\lim_{N\to\infty}\frac{{\pi}_{l}}{N}\,,\quad \hat{\theta}_{k}=\lim_{N\to\infty}\frac{{\theta}_{k}}{N}\,,  \quad  \hat{\pi}_{l}-\hat{\theta}_{k} \geq 0,  \quad \hat{\pi}_{l} \geq 0, \forall k,l,
\label{ptscale1}
\end{equation}
as well as the integration variables
\begin{equation}
v=N\hat{v}\,,\quad u= N\hat{u}\,,
\label{intscale1}
\end{equation}
and the arguments of the kernel
\begin{equation}
x=\frac{\hat{x}}{(q+\sigma)N}\,,\quad y=\frac{\hat{y}}{(q+\sigma)N}\,.
\label{xyscale1}
\end{equation}
Note that after this rescaling in Figure \ref{gWcontours2} unity is mapped to $1/N$ that moves to the origin when $N\to \infty$. Consequently, the limiting variables $\hat{\pi}_j$ and $\hat{\theta}_k$ may no longer be separated in the large-$N$ limit. Because of the choice of nested contour integrals this is not a problem. This leads to the following result for the limiting kernel at the hard edge.
\begin{thm}\label{ThmLimker1}
Consider the finite rank perturbations \eqref{sqdegenerate} and define the parameters $\hat{\pi}_l$, $l=1,\ldots,n$ and  $\hat{\theta}_k$, $k=1,\ldots,m$ as in \eqref{pithetadef1} and \eqref{ptscale1}. Then the following limit of the kernel \eqref{kernelGeneralizedWishart} leads to a kernel equivalent to
\begin{eqnarray}
\mathbb{K}_{\mathrm{III}}^{(n,m)}(\hat{x},\hat{y})&=&\lim_{N\to\infty}\frac{1}{(q+\sigma)N} K_N^{(1)}\left(\frac{\hat{x}}{(q+\sigma)N},\frac{\hat{y}}{(q+\sigma)N}\right)
e^{-\frac{q}{(q+\sigma)N}(\hat{x}-\hat{y})}
\nonumber\\
&=&\oint_{{\Gamma}_{\rm out}}\frac{d\hat{v}}{2\pi i}\oint_{{\Gamma}_{\rm in}}\frac{d\hat{u}}{2\pi i}
\frac{\exp\left[-\hat{x}\hat{v} +\hat{y}\hat{u}\right]}{\hat{u}-\hat{v}}
\ e^{-\frac{1}{\hat{u}}+\frac{1}{\hat{v}}}
\left(\frac{\hat{v}}{\hat{u}}\right)^{\nu+n-m}
\prod_{l=1}^n\frac{\hat{u}-\hat{\pi}_l}{\hat{v}-\hat{\pi}_l}
\prod_{k=1}^m\frac{\hat{v}-\hat{\theta}_k}{\hat{u}-\hat{\theta}_k}\,,
\label{KIIIfinal}
\end{eqnarray}
which is a Bessel kernel with finite rank perturbations. The closed integration contour ${\Gamma}_{\rm in}$ encircles the $\hat{\theta}_{j=1,\ldots,m}$ including the origin counter-clockwise, and the closed contour ${\Gamma}_{\rm out}$ contains the $\hat{\pi}_{l=1,\ldots,n}$ and encircles them counter-clockwise. It also contains the contour ${\Gamma}_{\rm in}$ without intersecting it.
\end{thm}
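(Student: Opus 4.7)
The plan is to apply a straightforward dominated-convergence argument to the double contour representation \eqref{K1start}, after a single scaling transformation. I would start from the second line of \eqref{K1start}, which already incorporates the degeneracies \eqref{sqdegenerate} and the substitution \eqref{sub1}, and rescale simultaneously the integration variables and the poles by $N$ via $u=N\hat u$, $v=N\hat v$, $\pi_l=N\hat\pi_l$, $\theta_k=N\hat\theta_k$, together with $x=\hat x/((q+\sigma)N)$ and $y=\hat y/((q+\sigma)N)$. In the new variables the nested contours of Figure \ref{gWcontours2} may be chosen as two fixed ($N$-independent) simple closed curves $\hat\Gamma_{\rm in}\subset\hat\Gamma_{\rm out}$, where $\hat\Gamma_{\rm in}$ encircles $\{0,\hat\theta_1,\ldots,\hat\theta_m\}$ and $\hat\Gamma_{\rm out}$ additionally encircles $\{\hat\pi_1,\ldots,\hat\pi_n\}$. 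For $N$ sufficiently large the rescaled images of $0$ and of $1$ (the latter mapping to $1/N$) both lie inside $\hat\Gamma_{\rm in}$, so all singularities of the original finite-$N$ integrand are enclosed as required.

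Next I would verify the pointwise limit of the integrand factor by factor. The exponential $e^{(q+\sigma)(yu-xv)}$, the power $(v/u)^{\nu+n-m}$ and the two finite products over $\hat\pi_l$ and $\hat\theta_k$ translate directly to the corresponding factors in $\mathbb{K}_{\rm III}^{(n,m)}$. The crucial non-trivial limit is
\[
\left(\frac{1-\tfrac{1}{N\hat u}}{1-\tfrac{1}{N\hat v}}\right)^{N-n}=\exp\!\left((N-n)\bigl[\log\!\bigl(1-\tfrac{1}{N\hat u}\bigr)-\log\!\bigl(1-\tfrac{1}{N\hat v}\bigr)\bigr]\right)\;\longrightarrow\;\exp\!\left(-\tfrac{1}{\hat u}+\tfrac{1}{\hat v}\right),
\]
which follows from the Taylor expansion $\log(1-t)=-t+O(t^{2})$, the quadratic remainder contributing only $O(1/N)$ after multiplication by $N-n$. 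The Jacobian bookkeeping is clean: $du\,dv=N^{2}\,d\hat u\,d\hat v$, the kernel factor $1/(u-v)=1/(N(\hat u-\hat v))$ supplies a factor $1/N$, and the prefactor $(q+\sigma)$ in \eqref{K1start} combines with the $1/((q+\sigma)N)$ on the left-hand side of the theorem to give another $1/N$; these contributions collapse to unity. The explicit exponential $e^{q(x-y)}=e^{q(\hat x-\hat y)/((q+\sigma)N)}$ in \eqref{K1start} is precisely cancelled by the gauge factor prescribed in the theorem, which is an instance of the equivalence $K_N\to K_N\,f(x)/f(y)$ noted after Theorem \ref{MainKernelContourIntegral}.

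The main obstacle, as in any such hard-edge limit, is the rigorous exchange of limit and integration. I would settle this by dominated convergence on the fixed compact contours $\hat\Gamma_{\rm in},\hat\Gamma_{\rm out}$ described above. On these contours $\hat u,\hat v$ are uniformly bounded away from $0$, from each other, and from the finite sets $\{\hat\theta_k\},\{\hat\pi_l\}$; consequently the rational and exponential factors are uniformly bounded in $N$, and the only delicate $(N-n)$-th power is controlled by the estimate $|(N-n)\log(1-1/(N\hat u))|\le C$ obtained from the same Taylor expansion. Dominated convergence then brings the pointwise limit inside the integral and identifies it with $\mathbb{K}_{\rm III}^{(n,m)}(\hat x,\hat y)$, completing the argument up to the gauge factor already absorbed on the left-hand side.
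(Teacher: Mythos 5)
Your proof follows essentially the same route as the paper: start from \eqref{K1start}, rescale via \eqref{intscale1}–\eqref{xyscale1}, observe that the only nontrivial $N$-dependence is the factor in \eqref{explim}, remove $e^{q(x-y)}$ as a gauge equivalence, and invoke dominated convergence on fixed nested contours to pass the limit inside the double integral. You simply spell out more explicitly the Jacobian bookkeeping, the Taylor estimate behind \eqref{explim}, and the uniform bound needed for dominated convergence, all of which the paper leaves implicit.
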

\begin{proof}
We take \eqref{K1start} as starting point, being equal to \eqref{kernelGeneralizedWishart}, and insert the finite rank conditions \eqref{sqdegenerate}. The prefactor $\exp[q(x-y)]$ can be removed, as it leads to an equivalent kernel. The scaling variables are defined in \eqref{intscale1} and \eqref{xyscale1}. Clearly the $N$-dependence drops out everywhere, except in the factor
\begin{equation}
\left(\frac{1-\frac{1}{\hat{u}N}}{1-\frac{1}{\hat{v}N}}\right)^{N-n}
\sim \exp\left[-\frac{1}{\hat{u}}+\frac{1}{\hat{v}}\right]\,,
\quad \mbox{as} \quad N\to\infty\,.
\label{explim}
\end{equation}
Due to Lebesgue's Dominated Convergence Theorem we can interchange the limit $N\to\infty$  and the double contour integral to apply this limit. Recalling that $n,m$ and $\nu=M-N$ are fixed in this limit we arrive at \eqref{KIIIfinal}.
\end{proof}
Note that in this ensemble the domains of the parameters $\hat{\theta}_{k=1,\ldots,m}\leq\hat{\pi}_{l=1,\ldots,n}$ are not restricted and can be the entire real line. As mentioned already, without parameters $\hat{\theta}_{k}$ the kernel \eqref{KIIIfinal} at $m=0$ was found in \cite{DZL} for the ensemble \eqref{JointDensityGX} with $Q\sim\eins_M$. The very same kernel was found previously in \cite[Theorem~15]{DF06} for the Wishart ensemble with an external field. There, it was also shown that without any finite rank perturbations, that is when $n=0$ (and $m=0$ here), it agrees with the Bessel kernel
\begin{equation}
\mathbb{K}_{\rm III}^{(0,0)}(x,y)=
\mathbb{K}_{\rm Bessel}(x,y)=\oint_{{\Gamma}_{\rm out}}\frac{d\hat{v}}{2\pi i}\oint_{{\Gamma}_{\rm in}}\frac{d\hat{u}}{2\pi i}
\frac{\exp\left[-x\hat{v} +y\hat{u}\right]}{\hat{u}-\hat{v}}
\ e^{-\frac{1}{\hat{u}}+\frac{1}{\hat{v}}}
\left(\frac{\hat{v}}{\hat{u}}\right)^{\nu} \,,
\label{BesselC}
\end{equation}
with ${\Gamma}_{\rm out}$ and ${\Gamma}_{\rm in}$ two nested, non-intersecting contours that both enclose the origin in counter-clockwise direction. The relation to the Bessel kernel can be spelled out more explicitly by bringing the kernel \eqref{KIIIfinal} to a form that is called integrable. Namely, a kernel is called integrable  in the sense of \cite{IIKS}, if it can be written as
\begin{equation}
K(x,y)=\sum_{l=1}^L \frac{F_l(x)G_l(y)}{x-y}\,,\quad \mbox{with}\quad \sum_{l=1}^L {F_l(x)G_l(x)}=0\,,
\end{equation}
holding for some given functions $F_l(x)$ and $G_l(x)$ and fixed value of $L$.
\begin{cor}
\label{Int1}
The Bessel kernel with finite rank perturbations \eqref{KIIIfinal} is integrable and can be written in the following form:
\begin{equation}
\mathbb{K}_{\mathrm{III}}^{(n,m)}(x,y)=\mathbb{K}_{\rm Bessel}(x,y)\big|_{\nu\to \nu+n-m}- \sum_{i=1}^m\tilde{\Lambda}^{(i)}_{\mathrm{III}}(x)\tilde{\Xi}^{(i)}_{\mathrm{III}}(y)
+\sum_{j=1}^n{\Lambda}^{(j)}_{\mathrm{III}}(x){\Xi}^{(j)}_{\mathrm{III}}(y),
\label{K1integrab}
\end{equation}
where we introduce four functions
\begin{eqnarray}
\tilde{\Lambda}^{(i)}_{\mathrm{III}}(x)&=&\oint_{{\Gamma}_{\rm \hat{\pi}}}\frac{d\hat{v}}{2\pi i}
{\exp\left[-x\hat{v} +\frac{1}{\hat{v}}\right]}\hat{v}^{\nu+n-m} \prod_{k=1}^{i-1}(\hat{v}-\hat{\theta}_k)\,,
\nonumber\\
{\Lambda}^{(j)}_{\mathrm{III}}(x)&=&\oint_{{\Gamma}_{\rm \hat{\pi}}}\frac{d\hat{v}}{2\pi i}
{\exp\left[-x\hat{v} +\frac{1}{\hat{v}}\right]}\hat{v}^{\nu+n-m} \frac{\prod_{k=1}^{m}(\hat{v}-\hat{\theta}_k)}{\prod_{l=1}^{j}(\hat{v}-\hat{\pi}_l)}\,,
\nonumber\\
\tilde{\Xi}^{(i)}_{\mathrm{III}}(y)&=&\oint_{{\Gamma}_{\rm \hat{\theta}}}\frac{d\hat{u}}{2\pi i}
{\exp\left[y\hat{u} -\frac{1}{\hat{u}}\right]}\hat{u}^{-\nu-n+m} \prod_{k=1}^i\frac{1}{\hat{u}-\hat{\theta}_k}\,,
\nonumber\\
{\Xi}^{(j)}_{\mathrm{III}}(y)&=&\oint_{{\Gamma}_{\rm \hat{\theta}}}\frac{d\hat{u}}{2\pi i}
{\exp\left[y\hat{u} -\frac{1}{\hat{u}}\right]}\hat{u}^{-\nu-n+m} \frac{\prod_{l=1}^{j-1}(\hat{u}-\hat{\pi}_l)}{\prod_{k=1}^{m}(\hat{u}-\hat{\theta}_k)}\,.
\label{LXdef}
\end{eqnarray}
Here the closed contour ${\Gamma}_{\rm \hat{\theta}}$ contains the poles at $\hat{\theta}_{k=1,\ldots,m}$ and the origin, encircling them counter-clockwise, and the closed contour ${\Gamma}_{\hat{\pi}}$ contains the poles at $\hat{\pi}_{l=1,\ldots,n}$ and the origin, encircling them counter-clockwise.
\end{cor}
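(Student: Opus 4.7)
The plan is to decompose the double contour integral representation \eqref{KIIIfinal} of $\mathbb{K}_{\mathrm{III}}^{(n,m)}$ into a pure Bessel kernel contribution plus a finite sum of rank-one pieces, by means of an algebraic identity for the rational factor
\begin{equation*}
R(\hat u, \hat v) := \prod_{l=1}^n\frac{\hat u - \hat\pi_l}{\hat v - \hat\pi_l}\prod_{k=1}^m\frac{\hat v - \hat\theta_k}{\hat u - \hat\theta_k}.
\end{equation*}
Since $R(\hat v, \hat v) = 1$, the difference $R - 1$ is divisible by $(\hat u - \hat v)$; this divisibility is precisely what will cancel the Cauchy pole $1/(\hat u - \hat v)$ in the kernel and decouple the remaining double integral into products of single-variable integrals.

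To make this explicit I would add and subtract the cross term $\prod_l(\hat v - \hat\pi_l)\prod_k(\hat v - \hat\theta_k)$ in the numerator of $R - 1$ and apply the elementary telescoping identity $\prod_i a_i - \prod_i b_i = \sum_i (a_i-b_i)\prod_{j<i}a_j\prod_{j>i}b_j$ to each of the two resulting differences separately. The point of organising the telescoping this way is that in the sum indexed by $i=1,\dots,m$ all $\hat\pi$-factors cancel outright, while in the sum indexed by $j=1,\dots,n$ the $\hat\theta$-products come out cleanly factorised between $\hat u$- and $\hat v$-dependence. After tidying one arrives at
\begin{equation*}
R(\hat u, \hat v) = 1 + (\hat u - \hat v)\left[\sum_{j=1}^n \frac{\prod_{l=1}^{j-1}(\hat u - \hat\pi_l)}{\prod_{k=1}^m(\hat u - \hat\theta_k)}\,\frac{\prod_{k=1}^m(\hat v - \hat\theta_k)}{\prod_{l=1}^j(\hat v - \hat\pi_l)} - \sum_{i=1}^m \frac{\prod_{k=1}^{i-1}(\hat v - \hat\theta_k)}{\prod_{k=1}^i(\hat u - \hat\theta_k)}\right],
\end{equation*}
which I would verify by a direct check for small $n,m$ (e.g.\ $n=m=1$ reduces to $(\hat\pi_1-\hat\theta_1)(\hat u-\hat v)/[(\hat u-\hat\theta_1)(\hat v-\hat\pi_1)]$).

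Substituting this into \eqref{KIIIfinal}, the ``$1$'' term reproduces $\mathbb{K}_{\rm Bessel}|_{\nu\to\nu+n-m}$ after comparison with \eqref{BesselC}, while in each summand the factor $(\hat u - \hat v)$ cancels $1/(\hat u - \hat v)$ and the integrals decouple into a product of a single $\hat v$-integral and a single $\hat u$-integral. Matching these against \eqref{LXdef} with $g(\hat v) = e^{-x\hat v + 1/\hat v}\hat v^{\nu+n-m}$ and $h(\hat u) = e^{y\hat u - 1/\hat u}\hat u^{-(\nu+n-m)}$ identifies the sums as $\sum_{j=1}^n \Lambda^{(j)}(x)\Xi^{(j)}(y)$ and $-\sum_{i=1}^m \tilde\Lambda^{(i)}(x)\tilde\Xi^{(i)}(y)$, yielding \eqref{K1integrab}.

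The one subtlety to handle carefully is the choice of contours after decoupling: I would check that each single-variable integrand has singularities only at the origin (an essential singularity from $e^{\pm 1/\hat v}$ or $e^{\pm 1/\hat u}$) and at either the $\hat\pi_l$ or the $\hat\theta_k$ (simple poles), so that the original nested contours $\Gamma_{\rm out}, \Gamma_{\rm in}$ of \eqref{KIIIfinal} may be freely deformed into $\Gamma_{\hat\pi}$ and $\Gamma_{\hat\theta}$ as declared without crossing poles. Integrability in the sense of \cite{IIKS} is then immediate: the Bessel kernel is integrable by an analogous double-integral argument, and every rank-one correction $F(x)G(y)$ admits the trivial integrable form $F(x)G(y) = [xF(x)\,G(y) - F(x)\,yG(y)]/(x-y)$, for which the diagonal vanishing condition $\sum_l F_l(x)G_l(x)=0$ is automatic. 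The principal obstacle I anticipate is bookkeeping: keeping signs, index ranges, and surviving products of $\hat\pi$- and $\hat\theta$-factors aligned with the precise definitions in \eqref{LXdef}, which is why I would first verify the key identity on a small example before pushing through the general case.
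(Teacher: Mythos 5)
Your proof is correct and follows essentially the same strategy as the paper: decompose the rational factor $R(\hat u,\hat v)/(\hat u-\hat v)$ into the Cauchy kernel plus terms in which the $1/(\hat u-\hat v)$ pole is cancelled, identify the pure Cauchy term with $\mathbb{K}_{\mathrm{Bessel}}|_{\nu\to\nu+n-m}$ via \eqref{BesselC}, decouple the double integral into rank-one pieces, deform the contours, and invoke the standard factorisation $F(x)G(y)=[xF(x)G(y)-F(x)\,yG(y)]/(x-y)$ for the integrability claim. The only cosmetic difference is how you derive the key rational identity: you telescope $\prod a_i-\prod b_i$ directly, whereas the paper applies the one-product identity of Desrosiers--Forrester (their Eq.~(5.12)) twice in succession, but both routes lead to precisely the same decomposition, your displayed formula being identical to the paper's \eqref{DFid2}.
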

This corollary solves an open problem stated in \cite[Section 7.2]{DF06} to find such an integrable representation. Note that
as mentioned in the introduction the generalised Bessel kernel \eqref{KIIIfinal} enjoys a formal duality relation, interchanging the parameters $\hat{\pi}_l\leftrightarrow-\hat{\theta}_l$ and $\nu\to-\nu$, which ultimately amounts to interchange matrices $\Sigma$ and $Q$, and $N$ and $M$ in \eqref{corrWishart}. Thus this duality holds already for the kernel at finite-$N$, see \eqref{kernelGeneralizedWishart}.
\begin{proof}
The crucial step for the integrability is the following identity \cite[Eq. (5.12)]{DF06}
\begin{equation}
\frac{1}{\hat{u}-\hat{v}}\prod_{l=1}^n\frac{\hat{u}-\hat{\pi}_l}{\hat{v}-\hat{\pi}_l}=\frac{1}{\hat{u}-\hat{v}}+\sum_{k=1}^n\frac{1}{\hat{v}-\hat{\pi}_k}
\prod_{l=1}^{k-1}\frac{\hat{u}-\hat{\pi}_l}{\hat{v}-\hat{\pi}_l}\,,
\label{DFid}
\end{equation}
which we have to apply twice, in view of the two products in \eqref{KIIIfinal}:
\begin{equation}
\frac{1}{\hat{u}-\hat{v}}\prod_{l=1}^n\frac{\hat{u}-\hat{\pi}_l}{\hat{v}-\hat{\pi}_l}
\prod_{k=1}^m\frac{\hat{v}-\hat{\theta}_k}{\hat{u}-\hat{\theta}_k}
=\frac{1}{\hat{u}-\hat{v}}-
\sum_{i=1}^m\frac{\prod_{k=1}^{i-1}(\hat{v}-\hat{\theta}_k)}{\prod_{k=1}^i(\hat{u}-\hat{\theta}_k)}
+
\sum_{j=1}^n\frac{\prod_{l=1}^{j-1}(\hat{u}-\hat{\pi}_l)}{\prod_{k=1}^{m}(\hat{u}-\hat{\theta}_k)}
\frac{\prod_{k=1}^{m}(\hat{v}-\hat{\theta}_k)}{\prod_{l=1}^{j}(\hat{v}-\hat{\pi}_l)}
\,.
\label{DFid2}
\end{equation}
Inserting this identity into \eqref{KIIIfinal} the right-hand sides of Equations \eqref{BesselC} and \eqref{LXdef} can be read off. Regarding contours, only in the integral \eqref{BesselC}, where the pole $\frac{1}{\hat{u}-\hat{v}}$ remains present, the condition of non-intersection contours remains. In all other integrals the contour ${\Gamma}_{\rm out}$ can be deformed to ${\Gamma}_{\hat{\theta}}$ to contain the poles at $\hat{\theta}_{k=1,\ldots,m}$ and the origin, and the contour ${\Gamma}_{\rm in}$ to ${\Gamma}_{\hat{\pi}}$ to contain the poles at $\hat{\pi}_{l=1,\ldots,n}$ and the origin.

The kernel is integrable due to two observations. First, as shown in \cite{DF06} the contour integral \eqref{BesselC} is equivalent to the more common representation of the Bessel kernel, see e.g. \cite{ForresterLogGases}
\begin{equation}
\mathbb{K}_{\rm Bessel}(x,y)=
\frac{-\sqrt{x}J_{\nu+1}(\sqrt{x})J_\nu(\sqrt{y})+\sqrt{y}J_{\nu+1}(\sqrt{y})J_\nu(\sqrt{x})}{2(x-y)}\ ,
\label{TheBessel}
\end{equation}
making the first term in \eqref{K1integrab} integrable. The simple observation in \cite{DZL} states that any factorising sum, such as the two sums on the right-hand side of \eqref{K1integrab}, can be brought to an integrable form,
\begin{equation}
\sum_{l=1}^L f_l(x)g_l(y)=\frac{1}{x-y}\sum_{l=1}^L (xf_l(x)g_l(y)-f_l(x)yg_l(y))\,.
\label{factor}
\end{equation}
Finally we remark that an alternative representation to \eqref{K1integrab} could be obtained, by applying the identity \eqref{DFid} first to the product containing $\hat{\theta}_k$'s, and then to the product containing $\hat{\pi}_l$'s. This leads to an alternative identity to \eqref{DFid2} and different functions in \eqref{LXdef} that we do not display.
\end{proof}
It is well known that the Bessel kernel \eqref{TheBessel} is universal for various deformations of the Wishart-Laguerre ensemble, see  e.g. \cite{KV} for invariant ensembles and \cite{DF06} for external fields. Theorem \ref{Lim-kernelgW} adds a further ensemble \eqref{corrWishart} to this list, namely when the finite rank perturbations are chosen such that their values vanish, $\hat{\theta}_k,\hat{\pi}_l\to0$ $\forall\, k,l$. This leads from \eqref{KIIIfinal} to \eqref{BesselC}.
\subsection{Meijer $G$-kernel with finite rank perturbations from   two independent correlated matrices}\label{Lim-kernelindep}
The ensemble \eqref{JointDensityindepGX} of two independent random matrices where one has correlated entries reads
\begin{equation}
\mathcal{P}_2(G,X)=\,c_2
\exp\left[-\alpha
\Tr\left(GG^*\right)-\Tr\left(QXX^*\right)\right]\ .
\nonumber
\end{equation}
The Hermitian matrix $Q$ has positive eigenvalues $q_1,\ldots,q_M>0$, and $\alpha>0$ is a constant (that does not depend on $\mu$ here). We will consider finite rank perturbations of $Q=q\eins_M$ by setting
\begin{equation}
\;q_{m+1}=\cdots=q_{M}=q>0\,,
\label{qdegenerate}
\end{equation}
with $m$ independent of $N$.

Putting the degeneracy \eqref{qdegenerate} inside the kernel \eqref{kernel-indep} for finite-$N$ from Theorem \ref{Thmker2}, we obtain
\begin{eqnarray}
K_{N}^{(2)}(x,y)&=&\left(\frac{x}{y}\right)^{\frac{\kappa}{2}}2\oint_{\gamma_{0}}\frac{d\eta}{2\pi i}\oint_{\gamma_{q}}\frac{d\zeta}{2\pi i}\left(\frac{\zeta}{\eta}\right)^{N+\frac{\kappa}{2}}\frac{I_{\kappa}\left(2\sqrt{\eta x}\right)K_{\kappa}\left(2\sqrt{\zeta y}\right)}{\eta-\zeta}\left(\frac{\eta-\alpha q}{\zeta-\alpha q}\right)^{M-m}\prod_{l=1}^{m}\frac{\eta-\alpha q_{l}}{\zeta-\alpha q_{l}}
\nonumber\\
&=&\left(\frac{x}{y}\right)^{\frac{\kappa}{2}}2\alpha q\oint_{\Gamma_{1}}\frac{dv}{2\pi i}\oint_{\Gamma_{\theta}}\frac{du}{2\pi i}\frac{I_{\kappa}\left(2\sqrt{\alpha q(1-v) x}\right)K_{\kappa}\left(2\sqrt{\alpha q\left(1-u\right) y}\right)}{u-v}\nonumber\\
&&\times\left(\frac{1-\frac{1}{u}}{1-\frac{1}{v}}\right)^{N+\frac{\kappa}{2}}\left(\frac{v}{u}\right)^{\nu-m-\frac{\kappa}{2}}\prod_{k=1}^{m}\frac{v-\left(1-\frac{q_{k}}{q}\right)}{u-\left(1-\frac{q_{k}}{q}\right)}\,,
\label{K2start}
\end{eqnarray}
with the following substitution
\begin{align}
\zeta=\alpha q\left(1-u\right)\quad\text{and }\; \eta=\alpha q(1-v)\,.
\label{sub2indep}
\end{align}
It is nonsingular due to $\alpha q>0$, and we define for later the poles in the new integration variables
\begin{align}
\theta_{k}
=1-\frac{q_{k}}{q}\,,\ \ k=1,\ldots,m\,.
\label{thetadef}
\end{align}
The integration contours in the new variables obtained from Figure \ref{Fig2} that we take to be nested here are given in Figure \ref{Fig2indep}.
\begin{figure}[h]
\unitlength1cm
\begin{picture}(12.2,3.5)
\thinlines
\put(0,2){\line(1,0){12}}
\linethickness{0.4mm}
\put(11.95,2){\vector(1,0){0.1}}
\thinlines
\put(1.9,1.5){$\theta_{m}$}
\put(2,2){\circle*{0.15}}
\multiput(2.5,1.6)(0.15,0){3}{\circle*{0.05}}
\put(3.6,1.5){$0$}
\linethickness{0.4mm}
\put(3.7,1.85){\line(0,1){0.3}}
\thinlines
\put(4.9,1.5){$\theta_{1}$}
\put(5,2){\circle*{0.15}}
\put(8.6,1.5){$1$}
\linethickness{0.4mm}
\put(8.7,1.85){\line(0,1){0.3}}
\thinlines
\put(5.6,2.7){$\Gamma_{\theta}$}
\put(8.9,3.1){$\Gamma_{1}$}
\put(3.6,2){\oval(4.2,1.6)}
\put(5.3,2){\oval(8,2.2)}
\linethickness{0.4mm}
\put(4.7,2.8){\vector(-1,0){0.1}}
\put(8.2,3.1){\vector(-1,0){0.1}}
\put(2.5,1.2){\vector(1,0){0.1}}
\put(2.3,0.9){\vector(1,0){0.1}}
\end{picture}
\caption{The contours $\Gamma_{\theta}$ and $\Gamma_{1}$ resulting from the substitution \eqref{sub2indep}, the $\theta_k$ are now centred around the origin and the origin has been mapped to unity.}
\label{Fig2indep}
\end{figure}
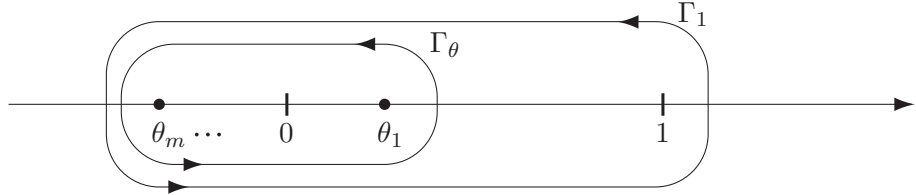

The integration variables will be rescaled, and the poles take the following limiting values:
\begin{equation}
v=N\hat{v}\,,\quad u= N\hat{u}\,,\quad\hat{\theta}_{k}= \lim_{N\to\infty}\frac{1}{N} {\theta}_{k}\,,\quad k=1, \ldots, m,
\label{scale2}
\end{equation}
mapping the identity in Figure \ref{Fig2indep} to $1/N$, and thus to the origin in the limit $N\to\infty$. This is not a problem, due to the nesting of the contours. Consequently, the $\hat{\theta}_{k=1,\ldots,m}$ become non-positive in the limit $N\to\infty$. The arguments of the kernel will be rescaled as
\begin{equation}
x=\frac{\hat{x}}{\alpha qN}\,,\quad y=\frac{\hat{y}}{\alpha qN}\,.
\label{xyscale2}
\end{equation}
Putting all together yields the following result for the limiting kernel at the hard edge.
\begin{thm}\label{ThmLimker2}
Define the finite rank perturbations \eqref{qdegenerate} and the parameters $\hat{\theta}_k$, $k=1,\ldots,m$, as in \eqref{thetadef} and \eqref{scale2}. The following limit of the kernel \eqref{kernel-indep} leads to a kernel equivalent to
\begin{eqnarray}
\mathbb{K}_{\mathrm{I}}^{(m)}(\hat{x},\hat{y})&=&\lim_{N\to\infty}\frac{1}{\alpha qN} K_N^{(2)}\left(\frac{\hat{x}}{\alpha qN},\frac{\hat{y}}{\alpha qN}\right)
\left(\frac{\hat{x}}{\hat{y}}\right)^{-{\kappa}}
\nonumber\\
&=&\oint_{{\gamma}_0}\frac{d{s}}{2\pi i}\int_0^\infty dt\,s^{-\kappa-1}t^{\kappa-1}e^{s-t}\
\mathbb{K}_{\mathrm{III}}^{(n=0,m)}\left(\frac{\hat{x}}{s},\frac{\hat{y}}{t}\right).
\label{KIfinal}
\end{eqnarray}
The limiting kernel is a Meijer $G$-kernel with finite rank perturbations.
\end{thm}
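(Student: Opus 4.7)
The plan is to push the limit $N\to\infty$ inside the double contour integral \eqref{K2start} and then unfold the resulting Bessel factors via classical integral representations so as to match the stated iterated form of $\mathbb{K}_{\mathrm{I}}^{(m)}$.

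Inserting the rescalings $v=N\hat v$, $u=N\hat u$, $\theta_k=N\hat\theta_k$ from \eqref{scale2} together with $x=\hat x/(\alpha qN)$, $y=\hat y/(\alpha qN)$ from \eqref{xyscale2} into \eqref{K2start}, the Jacobian contributes $N^2$ and $(u-v)^{-1}$ contributes $N^{-1}$; combined with the overall $(\alpha qN)^{-1}$ and with $(x/y)^{\kappa/2}(\hat x/\hat y)^{-\kappa}=(\hat y/\hat x)^{\kappa/2}$ this leaves the prefactor $2(\hat y/\hat x)^{\kappa/2}$. All remaining $N$-dependence is confined to
\[
\left(\frac{1-1/u}{1-1/v}\right)^{N+\kappa/2}\xrightarrow[N\to\infty]{}e^{-1/\hat u+1/\hat v}
\]
and to the Bessel arguments $I_\kappa(2\sqrt{(1/N-\hat v)\hat x})\to I_\kappa(2\sqrt{-\hat v\hat x})$ and $K_\kappa(2\sqrt{(1/N-\hat u)\hat y})\to K_\kappa(2\sqrt{-\hat u\hat y})$, while the scale-invariant factors become $(\hat v/\hat u)^{\nu-m-\kappa/2}\prod_k(\hat v-\hat\theta_k)/(\hat u-\hat\theta_k)$. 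After the rescaling both contours are compact and may be chosen to stay a fixed positive distance from the origin, so every factor is uniformly bounded in $N$ and the Dominated Convergence Theorem allows the limit to be taken under the double integral.

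To identify the resulting double integral with the iterated four-fold integral in the theorem, insert the classical representations
\[
I_\kappa(2\sqrt{w})=w^{\kappa/2}\!\oint_{\gamma_0}\!\frac{ds}{2\pi i}\,s^{-\kappa-1}e^{s+w/s},\qquad
K_\kappa(2\sqrt{w})=\tfrac{1}{2}\,w^{-\kappa/2}\!\int_0^\infty\!dt\,t^{\kappa-1}e^{-t-w/t},
\]
with $w=-\hat v\hat x$ and $w=-\hat u\hat y$. The half-integer prefactors $(-\hat v\hat x)^{\kappa/2}(-\hat u\hat y)^{-\kappa/2}=(\hat v\hat x/(\hat u\hat y))^{\kappa/2}$ combine with the $(\hat y/\hat x)^{\kappa/2}$ already in front to produce $(\hat v/\hat u)^{\kappa/2}$, which absorbs the offset $-\kappa/2$ in the exponent $\nu-m-\kappa/2$, leaving a clean $(\hat v/\hat u)^{\nu-m}$; the numerical factors $2$ and $\tfrac12$ also cancel. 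Exchanging the $(s,t)$- and $(\hat u,\hat v)$-integrations via Fubini, the inner $(\hat u,\hat v)$-double contour integral is precisely $\mathbb{K}_{\mathrm{III}}^{(0,m)}(\hat x/s,\hat y/t)$ from \eqref{kernelsup}, and one obtains the iterated representation displayed in the theorem.

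The principal technical obstacle is the validity of the $K_\kappa$ integral representation along the full contour $\Gamma_{\mathrm{in}}$: the formula converges absolutely only for $\Re(-\hat u\hat y)>0$, whereas $\Gamma_{\mathrm{in}}$ encloses the origin and must therefore traverse a portion of $\{\Re(\hat u)>0\}$. The standard remedy is to perform the unfolding first on the sub-arc of $\Gamma_{\mathrm{in}}$ with $\Re(\hat u)<0$, where absolute convergence and Fubini are automatic, and then to extend to the full contour by analytic continuation in $\hat u$, using that the inner $(\hat u,\hat v)$-integral of the four-fold expression is a meromorphic function of $\hat u$ with singularities only at $\{0,\hat\theta_1,\ldots,\hat\theta_m\}$ by the residue analysis underlying $\mathbb{K}_{\mathrm{III}}^{(0,m)}$.
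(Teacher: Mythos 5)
Your computation reproduces the statement correctly, and the algebra (prefactor bookkeeping, Bessel unfolding via \eqref{Besselint}, comparison with $\mathbb{K}_{\mathrm{III}}^{(0,m)}$) is the same as the paper's. The one structural difference is the \emph{order} of the two main operations: the paper first substitutes the integral representations of $I_\kappa$ and $K_\kappa$ into the finite-$N$ kernel \eqref{K2start}, uses Fubini to rearrange into the four-fold iterated form \eqref{K2start2}, and only then passes to the limit under the integrals; you instead take the $N\to\infty$ limit of the two-fold contour integral first and unfold the Bessel factors afterwards. This is a reordering rather than a genuinely different route, but the reordering is not entirely cosmetic, for precisely the reason you flag at the end. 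Once the limit has been taken, the $\hat u$-integrand carries $K_\kappa(2\sqrt{-\hat u\hat y})$ together with the non-integer power $\hat u^{\kappa/2}$, and although these two branch-ambiguities largely compensate each other, the product still retains a logarithmic branch cut along $\hat u\in(0,\infty)$, which any fixed closed contour encircling the origin must cross; the post-limit two-fold integral you write down therefore needs extra interpretation before one can apply the $K_\kappa$ representation on it. The paper's ordering sidesteps exactly this: after the Bessel unfolding all half-integer powers of $\hat u$ combine into $(\hat u-1/N)^{N}\hat u^{-M+m}$, a genuinely meromorphic expression, so a single fixed $\hat u$-contour encircling $\{0,\theta_1/N,\ldots,\theta_m/N\}$ works uniformly in $N$ and dominated convergence applies cleanly. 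Your proposed remedy (restrict to $\Re\hat u<0$ and analytically continue) points in the right direction, but as written it is imprecise -- $\hat u$ is an integration variable, not a free parameter, and the $t$-integral in the unfolded form actually diverges pointwise for $\Re\hat u>0$, so the continuation must be implemented by rotating the $t$-ray with $\hat u$ rather than by continuing a non-convergent integral. To be fair, the paper is itself silent about the absolute-integrability hypothesis underlying its own Fubini step, so the two arguments end up at a comparable level of rigor; the unfold-then-limit order is simply the cleaner way to organize the same calculation.
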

\begin{proof}
In order to take the large-$N$ limit we use the following integral representations of the modified Bessel functions of the second kind in \eqref{K2start}:
\begin{eqnarray}
I_\kappa(z)&=&\left(\frac{z}{2}\right)^{\kappa}\oint_{\gamma_0}\frac{ds}{2\pi i}\,s^{-\kappa-1}\exp\left[s+\frac{z^2}{4s}\right]\,,
\nonumber\\
K_\kappa(z)=K_{-\kappa}(z)&=&\frac{z^{-\kappa}}{2^{-\kappa+1}}\int_0^\infty dt\,t^{\kappa-1}\exp\left[-t-\frac{z^2}{4t}\right]\,.
\label{Besselint}
\end{eqnarray}
Here, $\gamma_0$ is a closed contour encircling the origin in counter-clockwise way. We use Fubini's Theorem to rewrite the rescaled kernel from \eqref{K2start} as
\begin{eqnarray}
\frac{1}{\alpha qN} K_N^{(2)}\left(\frac{\hat{x}}{\alpha qN},\frac{\hat{y}}{\alpha qN}\right)
\left(\frac{\hat{x}}{\hat{y}}\right)^{-{\kappa}}&=&
\oint_{{\gamma}_0}\frac{d{s}}{2\pi i}\int_0^\infty dt\frac{t^{\kappa-1}}{s^{\kappa+1}}
\oint_{\Gamma_{1}}\frac{d\hat{v}}{2\pi i}\oint_{\Gamma_{\theta}}\frac{d\hat{u}}{2\pi i}
\frac{e^{\,s+\frac1s(\frac1N-\hat{v})\hat{x}-t-\frac1t(\frac1N-\hat{u})\hat{y}}}{\hat{u}-\hat{v}}
\nonumber\\
&&\times
\left(\frac{\frac1N-\hat{v}}{\frac1N-\hat{u}}\right)^{\frac{\kappa}{2}}
\left(\frac{1-\frac{1}{N\hat{u}}}{1-\frac{1}{N\hat{v}}}\right)^{N+\frac{\kappa}{2}}
\left(\frac{\hat{v}}{\hat{u}}\right)^{\nu-m-\frac{\kappa}{2}}
\prod_{k=1}^{m}\frac{\hat{v}-\frac{{\theta}_k}{N}}{\hat{u}-\frac{{\theta}_k}{N}}\,.
\label{K2start2}
\end{eqnarray}
The application of Lebesgue's Dominated Convergence Theorem together with the limits of each individual factor inside the integrand (such as \eqref{explim}) leads to \eqref{KIfinal}, after comparing with \eqref{KIIIfinal} at $n=0$. We recall here that $\kappa=L-N$ is kept fixed in the large-$N$ limit.

Notice that in this ensemble the domain of the paramters  $\hat{\theta}_{k=1,\ldots,m}$ is non-positive. Furthermore, in \cite[Prop. 5.1]{DZL} it was shown that the kernel \eqref{KIfinal} at $m=0$ equals the Meijer $G$-kernel \cite{KZ14} for two independent matrices,
\begin{equation}
\mathbb{K}_{\mathrm{I}}^{(m=0)}({x},{y})=\mathbb{K}_{\rm Meijer}(x,y)=
\int_0^1du\
G^{1,0}_{0,3}\left(\begin{array}{c}
                       -   \\
                      0, -\nu, -\kappa
                    \end{array}
\biggr| uy\right)
G^{2,0}_{0,3}\left(\begin{array}{c}
                       -   \\
                      \nu,\kappa,0
                    \end{array}
\biggr| ux\right).
\label{KMeijer}
\end{equation}
Therefore, the kernel \eqref{KIfinal} represents a finite rank perturbation of the Meijer $G$-kernel. In the absence of finite rank perturbations the kernel was also shown to agree with \eqref{KMeijer} in \cite{AStr16b}.
\end{proof}
We note that using Fubini's Theorem and \eqref{Besselint} the integrals over $s$ and $t$ in \eqref{KIfinal} can be done. Using the relations \cite[Eqs. 8.406-7]{GradshteynRyzhik}
\begin{equation}
I_{\kappa}(iz)=i^{\kappa}J_{\kappa}(z)\,,
\quad\text{and }\; K_{-\kappa}(iz)=-\frac{\pi}{2}i^{\kappa+1}H_{\kappa}^{(2)}(z)
\end{equation}
in terms of the Bessel functions of first and third kind (also called Hankel functions), we obtain
\begin{eqnarray}
\mathbb{K}_{\mathrm{I}}^{(m)}(\hat{x},\hat{y})
&=&(-1)^{\kappa+1}i\pi\left(\frac{\hat{y}}{\hat{x}}\right)^{\frac{\kappa}{2}}
\oint_{\Gamma_{\rm out}}\frac{d\hat{v}}{2\pi i}\oint_{\Gamma_{\rm in}}\frac{d\hat{u}}{2\pi i}\frac{J_{\kappa}\left(2\sqrt{\hat{v}\hat{x}}\right)H_{\kappa}^{(2)}\left(2\sqrt{\hat{u} \hat{y}}\right)}{u-v} e^{-\frac{1}{\hat{u}}+\frac{1}{\hat{v}}}\left(\frac{\hat{v}}{\hat{u}}\right)^{\nu-m-\frac{\kappa}{2}}
\nonumber\\
&&\times\prod_{k=1}^{m}\frac{\hat{v}-\hat{\theta}_{k}}{\hat{u}-\hat{\theta}_{k}}.
\label{KI2intonly}
\end{eqnarray}

It remains to show that the kernel \eqref{KIfinal} is integrable.
\begin{cor}
\label{Int2}
The Meijer $G$-kernel with finite rank perturbations \eqref{KIfinal} is integrable and can be written as follows
\begin{equation}
\mathbb{K}_{\mathrm{I}}^{(m)}(x,y)=\mathbb{K}_{\rm Meijer}(x,y)\big|_{\nu\to \nu-m} - \sum_{i=1}^m\tilde{\Lambda}^{(i)}_{\mathrm{I}}(x)\tilde{\Xi}^{(i)}_{\mathrm{I}}(y)\,,
\label{K2integrab}
\end{equation}
where the remaining functions read
\begin{eqnarray}
\tilde{\Lambda}^{(i)}_{\mathrm{I}}(x)&=&
\oint_{{\gamma}_0}\frac{d\hat{s}}{2\pi i}\,s^{-\kappa-1}e^{s}\left. \tilde{\Lambda}^{(i)}_{\mathrm{III}}\left(\frac{x}{s}\right)\right|_{n=0},
\nonumber\\
\tilde{\Xi}^{(i)}_{\mathrm{I}}(y)&=&\int_0^\infty dt\,t^{\kappa-1}e^{-t}\left. \tilde{\Xi}^{(i)}_{\mathrm{III}}\left(\frac{y}{t}\right)\right|_{n=0},
\label{LX2def}
\end{eqnarray}
together with \eqref{LXdef} at $n=0$.
\end{cor}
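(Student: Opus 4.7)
The plan is to reduce the integrability of $\mathbb{K}_{\mathrm{I}}^{(m)}$ to the already established integrability of $\mathbb{K}_{\mathrm{III}}^{(0,m)}$ in Corollary \ref{Int1}, by simply transporting the decomposition through the integral representation \eqref{KIfinal}. No new contour manipulation is needed beyond the one produced in the proof of Theorem \ref{ThmLimker2}.

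First, I would start from \eqref{KIfinal}, which expresses $\mathbb{K}_{\mathrm{I}}^{(m)}(x,y)$ as an integral transform of $\mathbb{K}_{\mathrm{III}}^{(0,m)}(x/s, y/t)$ against the kernel $s^{-\kappa-1}t^{\kappa-1}e^{s-t}$. Substituting the already proved decomposition \eqref{K1integrab} (at $n=0$) of $\mathbb{K}_{\mathrm{III}}^{(0,m)}$ yields
\begin{equation*}
\mathbb{K}_{\mathrm{I}}^{(m)}(x,y) = T\!\left[\mathbb{K}_{\rm Bessel}\big|_{\nu\to\nu-m}\right] - \sum_{i=1}^m T\!\left[\tilde{\Lambda}^{(i)}_{\mathrm{III}}\cdot\tilde{\Xi}^{(i)}_{\mathrm{III}}\right],
\end{equation*}
where $T[F](x,y)=\oint_{\gamma_0}\frac{ds}{2\pi i}\int_0^\infty dt\, s^{-\kappa-1}t^{\kappa-1}e^{s-t} F(x/s,y/t)$. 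By Fubini's theorem, whose hypotheses are met because the $s$-contour is compact and the $t$-integrand decays exponentially (and the inner $(u,v)$-integrals in $\tilde{\Lambda}^{(i)}_{\mathrm{III}}, \tilde{\Xi}^{(i)}_{\mathrm{III}}$ are likewise absolutely convergent on compact contours), the integrals can be interchanged freely.

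For the first term, $T\![\mathbb{K}_{\rm Bessel}|_{\nu\to\nu-m}](x,y)$ is nothing but the Meijer $G$-kernel $\mathbb{K}_{\rm Meijer}(x,y)|_{\nu\to\nu-m}$: indeed this is precisely the identity \eqref{KIfinal} applied to the unperturbed case $m=0$ with parameter $\nu$ replaced by $\nu-m$, recalling $\mathbb{K}_{\mathrm{I}}^{(0)}=\mathbb{K}_{\rm Meijer}$ and $\mathbb{K}_{\mathrm{III}}^{(0,0)}=\mathbb{K}_{\rm Bessel}$ as noted after \eqref{KMeijer} and in \eqref{BesselC}. For each term in the sum, the integrand factorises: $\tilde{\Lambda}^{(i)}_{\mathrm{III}}(x/s)$ depends only on $s$ (together with the variable $\hat v$ of its own contour integral), and $\tilde{\Xi}^{(i)}_{\mathrm{III}}(y/t)|_{n=0}$ depends only on $t$. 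Hence Fubini separates the integral into
\begin{equation*}
T\!\left[\tilde{\Lambda}^{(i)}_{\mathrm{III}}\cdot\tilde{\Xi}^{(i)}_{\mathrm{III}}\right](x,y)
=\left(\oint_{\gamma_0}\tfrac{ds}{2\pi i}\,s^{-\kappa-1}e^{s}\tilde{\Lambda}^{(i)}_{\mathrm{III}}(x/s)\big|_{n=0}\right)
\left(\int_0^\infty dt\,t^{\kappa-1}e^{-t}\tilde{\Xi}^{(i)}_{\mathrm{III}}(y/t)\big|_{n=0}\right),
\end{equation*}
which is exactly $\tilde{\Lambda}^{(i)}_{\mathrm{I}}(x)\,\tilde{\Xi}^{(i)}_{\mathrm{I}}(y)$ by \eqref{LX2def}. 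This establishes the decomposition \eqref{K2integrab}.

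It remains to argue integrability in the sense of \cite{IIKS}. The first summand $\mathbb{K}_{\rm Meijer}(x,y)|_{\nu\to\nu-m}$ is integrable by the same argument used in the proof of Corollary \ref{Int1}, since the Meijer $G$-kernel is built from the Bessel kernel by the transform $T$ and any kernel of the form \eqref{TheBessel} is integrable; alternatively one appeals to the representation of $\mathbb{K}_{\rm Meijer}$ in \cite{DZL} or the elementary identity \eqref{factor}. The finite sum of products over $i=1,\ldots,m$ is immediately cast into integrable form by \eqref{factor}, which is the only mechanism at play in the sum terms of Corollary \ref{Int1} as well. The only non-routine point I anticipate is the justification of interchanging the inner contour integrals defining $\tilde{\Lambda}^{(i)}_{\mathrm{III}},\tilde{\Xi}^{(i)}_{\mathrm{III}}$ with the outer $s,t$-integrals; this is handled by noting that in $\tilde{\Lambda}^{(i)}_{\mathrm{III}}(x/s)$ the contour $\Gamma_{\hat\pi}$ may be kept bounded away from the origin uniformly in $s\in\gamma_0$, and similarly for $\tilde{\Xi}^{(i)}_{\mathrm{III}}(y/t)$ where the exponential factor $e^{-t}$ in $t$ provides absolute integrability.
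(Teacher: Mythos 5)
Your proposal is correct and follows essentially the same route as the paper: insert the decomposition \eqref{K1integrab} at $n=0$ into the transform \eqref{KIfinal}, let the factorised pieces separate by Fubini into $\tilde{\Lambda}^{(i)}_{\mathrm{I}}\tilde{\Xi}^{(i)}_{\mathrm{I}}$, identify the remaining term as $\mathbb{K}_{\rm Meijer}|_{\nu\to\nu-m}$, and obtain integrability of the sum from \eqref{factor}. The extra justification you give for the interchange of integrals is a minor amplification of what the paper leaves implicit.
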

\begin{proof}
Equation \eqref{K2integrab} together with \eqref{LX2def} immediately follow from Corollary \ref{Int1}, by inserting \eqref{K1integrab} at $n=0$ into \eqref{KIfinal}. While the integrability of the Meijer $G$-kernel \eqref{KMeijer} was shown in \cite{KZ14}, the integrability of the sum in \eqref{K2integrab} follows as in the proof of Corollary \ref{Int1}, using \eqref{factor}. The integrals over $s$ and $t$ in \eqref{LX2def} can be done in analogy to \eqref{KI2intonly} but will not be displayed here.
\end{proof}
As mentioned already in the introduction relatively little was known so far about the universality of the Meijer $G$-kernel. In the case when the finite rank perturbations are such that their limiting values vanish, $\hat{\theta}_k\to0$ $\forall k$, this leads from \eqref{KIfinal} to \eqref{KMeijer}. Thus in this particular case the ensemble of two independent random matrices where one has correlated entries \eqref{JointDensityindepGX} leads to the same Meijer $G$-kernel as the product of two independent matrices without correlations, showing its universality.
\subsection{Universal limiting kernels from  two correlated coupled matrices}\label{Lim-kernelcoupled}
Our most general ensemble of two coupled matrices where one has correlated entries \eqref{JointDensityGX} is given by
\begin{equation}
\mathcal{P}(G,X)=\,c
\exp\left[-\alpha\Tr\left(GG^*\right)
+\Tr\left(\Omega GX+X^*G^*\Omega^*\right)-\Tr\left(QXX^*\right)\right]\,.
\nonumber
\end{equation}
Here $\alpha>0$, the matrix $\Omega$ has squared singular values $\delta_1,\ldots,\delta_N$, and the Hermitian matrix $Q$ has positive eigenvalues $q_1,\ldots,q_M>0$. In this subsection we will consider the following parameter dependent finite rank perturbations around $\Omega\Omega^*=
\frac{(1-\mu)^2}{4\mu^2}
\eins_N$ and $Q=
\frac{1+\mu}{2\mu}
\eins_M$, by setting
\begin{equation}
\delta_{n+1}=\cdots=\delta_N=
\frac{(1-\mu)^2}{4\mu^2}, \qquad
q_{m+1}=\cdots=
q_M=
\frac{1+\mu}{2\mu},
\label{finiterank}
\end{equation}
together with
\begin{equation}
\alpha=\frac{1+\mu}{2\mu}\ ,
\label{alpha2}
\end{equation}
with $n,m$ independent of $N$. The ensemble with $m=n=0$ of two coupled random matrices was studied in \cite{AStr16a,AStr16b}. In contrast to the previous two subsections we have the parameter $\mu=\mu(N)\in(0,1]$ here, that can vary with $N$. As it was already found in \cite{AStr16a,AStr16b}, depending on the scaling of $\mu(N)$ we will obtain three different limits $N\to\infty$: (I) $\mu(N) N\to \infty$, (II) $\mu(N) N\to\tau/4$ and (III) $\mu(N) N\to0$.

Because it was already discussed in detail in the Introduction after \eqref{nondeg-nondeg}, what is the range that the various parameters take due to the condition \eqref{aqd}, in particular after setting some of these equal as in \eqref{finiterank}, we only summarise the findings that we need below. Let us recall the following quantities:
\begin{eqnarray}
\pi_l&=&\frac{(1+\mu)^2}{4\mu } \left(1-\frac{4\mu^2\delta_l}{(1+\mu)^2}\right),\ \ l=1,\ldots n,
\label{pi-def2}\\
\theta_j&=&  \frac{(1+\mu)^2}{4\mu } \left( 1-\frac{2\mu q_j}{1+\mu}\right) ,\ \ j=1,\ldots,m,
\label{theta-def2}
\end{eqnarray}
which were shown to satisfy
\begin{equation}
0<\pi_l \ ,\ \  \theta_j < \pi_l\ ,\ \ \mbox{and} \ \ \theta_j<1 \ ,\ \ \mbox{for}\ \  j=1,\ldots,m,\ \ l=1,\ldots n.
\label{allbounds}
\end{equation}
Compared to Subsection \ref{Lim-kernelgW}, the natural scale is here $\frac{(1+\mu)^2}{4\mu^2}-\frac{(1-\mu)^2}{4\mu^2}=\frac{1}{\mu}$. We therefore make the following substitutions in the integrals in the kernel at finite-$N$ for this ensemble \eqref{MainKerneloint}, which are non-singular:
\begin{equation}
\zeta=
\frac{(1+\mu)^2}{4\mu^2}-\frac{1}{\mu}u
\ ,\quad \eta=
\frac{(1+\mu)^2}{4\mu^2}-\frac{1}{\mu}v
\,.
\label{substitute}
\end{equation}
Together with equations \eqref{finiterank}, \eqref{alpha2}, \eqref{pi-def2}, and \eqref{theta-def2}, the kernel \eqref{MainKerneloint} results into
\begin{eqnarray}
K_{N}(x,y)&=& \frac{2}{\mu}\oint_{\Gamma_{\pi}}\frac{dv}{2\pi i}\oint_{\Gamma_{\theta}}\frac{du}{2\pi i}
\frac{I_{\kappa}\left(2
\sqrt{((1+\mu)^2-4\mu v)\frac{x}{4\mu^2}}
\right)K_{\kappa}\left(2
\sqrt{((1+\mu)^2-4\mu u)\frac{y}{4\mu^2}}
\right)}{u-v}\nonumber\\
&&\times \left(\frac{x}{y}\right)^{\frac{\kappa}{2}}\left(\frac{u-
\frac{(1+\mu)^2}{4\mu}
}{v-
\frac{(1+\mu)^2}{4\mu}
}\right)^{\frac{\kappa}{2}}
\left(\frac{1-\frac{1}{u}}{1-\frac{1}{v}}\right)^{N-n}\left(\frac{v}{u}\right)^{\nu-m+n}\prod_{l=1}^{n}\frac{u-\pi_{l}}{v-\pi_{l}}\prod_{k=1}^{m}\frac{v-\theta_{k}}{u-\theta_{k}}\,,
\label{Kstart}
\end{eqnarray}
For simplicity here and from now on we will suppress the $N$-dependence of all parameters. Taking \eqref{Kstart} as a starting point, together with the scaling
\begin{equation}
u=N\hat{u}\ ,\ \ v=N\hat{v}\ ,
\end{equation}
and
\begin{equation}
\hat{\pi}_{l}=\lim_{N\to\infty}\frac1N{\pi}_{l}\,,\quad \hat{\theta}_{j}= \lim_{N\to\infty}\frac1N{\theta}_{j}\,,\quad  \hat{\pi}_{l}-\hat{\theta}_{j} \geq 0,  \quad \hat{\pi}_{l} \geq 0,\quad  \forall j,l,
\label{ptscale}
\end{equation}
we are in the position to present the proof of Theorem \ref{hardlimits}.
\begin{proof}[Proof of Theorem \ref{hardlimits} (I)]
In limit (I) leading to the Meijer $G$-kernel with finite rank perturbations we will let $\mu N\to\infty$, and rescale the arguments of the kernel as
\begin{equation}
x=\frac{\mu\hat{x}}{N}\,,\quad y=\frac{\mu\hat{y}}{N}\,.
\label{xyscale}
\end{equation}
From the definition \eqref{pi-def2} and the bounds \eqref{allbounds} we have that in the limit \eqref{ptscale} the following parameters vanish, $\hat{\pi}_l=
0$, for all $l=1,\ldots,n$. Furthermore, from \eqref{theta-def2} we have that the
parameters $\hat{\theta}_k
\leq0$, for all $k=1,\ldots,m$, will become negative (or zero) in the large-$N$ limit. The contours in the integral \eqref{Kstart} are thus as in Figure \ref{Fig2indep}, with all ${\theta}_{k=1,\ldots,m}$ to the left of the origin.

Following the previous Subsection \ref{Lim-kernelindep}, we apply the integral representations of the modified Bessel functions \eqref{Besselint}, use Fubini's Theorem to interchange integrals and Lebesgue's Dominated Convergence Theorem to exchange the limit with the integrations. We obtain the following for the kernel in terms of rescaled variables:
\begin{eqnarray}
\frac{\mu}{N}
K_{N}\left(\frac{\mu\hat{x}}{N},\frac{\mu\hat{y}}{N}\right)\frac{\hat{y}^{\kappa}}{\hat{x}^{\kappa}}
&=&\oint_{{\gamma}_0}\frac{d{s}}{2\pi i}\int_0^\infty dt
\oint_{\Gamma_{1}}\frac{d\hat{v}}{2\pi i}\oint_{\Gamma_{\theta}}\frac{d\hat{u}}{2\pi i}
\frac{t^{\kappa-1}}{s^{\kappa+1}}
\frac{1}{\hat{u}-\hat{v}}\left(\frac{\hat{v}}{\hat{u}}\right)^{\nu-m+n}
\prod_{k=1}^{m}\frac{\hat{v}-\frac{{\theta}_{k}}{N}}{\hat{u}-\frac{{\theta}_{k}}{N}}
\prod_{l=1}^{n}\frac{\hat{u}-\frac{{\pi}_{k}}{N}}{\hat{v}-\frac{{\pi}_{k}}{N}}
\nonumber\\
&&\times
\frac{\left(
((1+\mu)^2-4\mu N \hat{v})\frac{1}{4\mu N}
\right)^{\frac{\kappa}{2}}}{\left(
((1+\mu)^2-4\mu N \hat{u})\frac{1}{4\mu N}
\right)^{\frac{\kappa}{2}}}
\ e^{s+\frac1s
((1+\mu)^2-4\mu N \hat{v})\frac{\hat{x}}{4\mu N}
-t-\frac1t
((1+\mu)^2-4\mu N \hat{u})\frac{\hat{y}}{4\mu N}
}
\nonumber\\
&&\times\left(\frac{\hat{u}-\frac{(1+\mu)^2}{4\mu N}}{\hat{v}-\frac{(1+\mu)^2}{4\mu N}}\right)^{\frac{\kappa}{2}}\left(\frac{1-\frac{1}{\hat{u}N}}{1-\frac{1}{\hat{v}N}}\right)^{N-n}.
\label{Klastlim}
\end{eqnarray}
The large-$N$ limit $\mu N\to\infty$ of the second and third line are easily taken, 
and after cancelling several factors we arrive at
\begin{equation}
\lim_{N\to\infty}\frac{\mu}{N}
K_{N}\left(\frac{\mu\hat{x}}{N},\frac{\mu\hat{y}}{N}\right)\frac{\hat{y}^{\kappa}}{\hat{x}^{\kappa}}=
\oint_{{\gamma}_0}\frac{d\hat{s}}{2\pi i}\int_0^\infty dt\,s^{-\kappa-1}t^{\kappa-1}e^{s-t}\
\mathbb{K}_{\mathrm{III}}^{(n=0,m)}\left(\frac{\hat{x}}{s},\frac{\hat{y}}{t}\right),
\end{equation}
the right hand side of \eqref{KIfinal}. This is the limiting kernel as stated in the theorem. Obviously, it is true uniformly for any $\hat{x}$, $\hat{y}$ in a compact subset of $(0,\infty)$.
\end{proof}
The fact that this kernel is integrable has already been stated in Corollary \ref{Int2}.
\begin{proof}[Proof of Theorem \ref{hardlimits} (III)] In limit (III) leading to the Bessel kernel with finite rank perturbations we will let $\mu N\to0$ and rescale the arguments of the kernel as
\begin{equation}
x=\frac{\hat{x}}{4N^2}\,,\quad y=\frac{\hat{y}}{4N^2}\,.
\label{xyscaleIII}
\end{equation}
Following \eqref{pi-def2}, the positivity of the parameters $\hat{\pi}_{l=1,\ldots,n}$ \eqref{allbounds} leads to their limiting domain to be $[0,\infty)$. The domain of the limiting parameters $\hat{\theta}_{k=1,\ldots,m}$ becomes $\cap_{l=1}^{n}(-\infty,\hat{\pi}_l ]$. In the scaling limit the arguments of the modified Bessel functions in \eqref{Kstart} become large,
\begin{eqnarray}
2\sqrt{((1+\mu)^2-4\mu N \hat{v})\frac{\hat{x}}{(4\mu N)^2}}
=\frac{(1+\mu)\sqrt{\hat{x}}}{2\mu N}\left(1-\frac{4\mu N}{(1+\mu)^2}\hat{v}\right)^{\frac12}=\frac{\sqrt{\hat{x}}}{2\mu N}\left(1-2\mu N\hat{v}+\mathcal{O}((\mu N)^2)\right),
\nonumber
\end{eqnarray}
and likewise for the other argument with $\hat{u}$ and $\hat{y}$. For that reason we need to make use of asymptotic formulas of modified Bessel functions as $z\ra \infty$
\begin{equation} I_{\kappa}(z)\sim \frac{e^z}{\sqrt{2\pi z}} \ \mathrm{for} \ |\mathrm{arg}(z)|\leq \frac{1}{2}\pi -\beta \quad \mathrm{and} \quad K_{\kappa}(z)\sim   \sqrt{\frac{\pi}{2z}} e^{-z}  \  \mathrm{for} \  |\mathrm{arg}(z)|\leq \frac{3}{2}\pi-\beta, \label{asymptoticsIK}
\end{equation}
uniformly for arbitrary $0<\beta<\pi/2$, from which simple calculations give us
\begin{equation} I_{\kappa}\left(2
\sqrt{((1+\mu)^2-4\mu N \hat{v})\frac{\hat{x}}{(4\mu N)^2}}
\right) \sim \frac{\sqrt{\mu N}}{\sqrt{\pi}}
\hat{x}^{-\frac{1}{4}}
e^{\frac{1}{2\mu N}\sqrt{\hat{x}}  -\hat{v}\sqrt{\hat{x}}},\quad \mathrm{as}\quad  \mu N\ra 0,
\end{equation}
and
\begin{equation}
K_{\kappa}\left(2
\sqrt{((1+\mu)^2-4\mu N \hat{u})\frac{\hat{y}}{(4\mu N)^2}}
\right) \sim \sqrt{\mu N\pi}\
\hat{y}^{-\frac{1}{4}}
e^{-\frac{1}{2\mu N} \sqrt{\hat{y}} +\hat{u}\sqrt{\hat{y}}},\quad \mathrm{as}\quad  \mu N\ra 0.
\end{equation}

Apart from the asymptotic result \eqref{explim} we also need that
\begin{equation}
\left(\frac{4\mu N\hat{u}-(1+\mu)^2}{4\mu N\hat{v}-(1+\mu)^2}\right)^{\frac{\kappa}{2}}\sim 1 \quad \mbox{as}\quad N\to\infty\,.
\end{equation}
Finally, we use Lebesgue's Dominated Convergence Theorem to exchange the limit with the integrations in \eqref{Kstart}. Collecting all the above expansions results into the following limit
\begin{eqnarray}
\lim_{N\to\infty}\frac{1}{4N^2}
K_{N}\left(\frac{\hat{x}}{4N^2},\frac{\hat{y}}{4N^2}\right)\frac{\hat{y}^{\frac{\kappa}{2}}}{\hat{x}^{\frac{\kappa}{2}}}\ e^{\frac{1}{2\mu N}(\sqrt{\hat{y}}-\sqrt{\hat{x}})}=
\frac12 (\hat{x} \hat{y})^{-\frac{1}{4}}\mathbb{K}_{\rm III}^{(n,m)}\left(\sqrt{\hat{x}},\sqrt{\hat{y}}\right)\,,
\end{eqnarray}
as stated in the theorem.
\end{proof}
Corollary \ref{Int1} implies that this kernel is integrable.
\begin{proof}[Proof of Theorem \ref{hardlimits} (II)]
In limit (II) leading to an interpolating kernel we will let $\mu N\to\tau/4$ with fixed $\tau>0$ of $\mathcal{O}(1)$. The scaling of the arguments of the kernel is given as in \eqref{xyscaleIII}. In this limit, due to \eqref{allbounds} the domain of
$\hat{\pi}_{l=1,\ldots,n}
$ \eqref{ptscale} becomes $[0,1/\tau)$, while the limiting $\hat{\theta}_{k=1,\ldots,m}$ remain in the interval $\cap_{l=1}^{n}(-\infty,\hat{\pi}_l]$. Let us write the kernel \eqref{Kstart} in terms of these scaling variables:
\begin{eqnarray}
\frac{1}{4N^2}
K_{N}\left(\frac{\hat{x}}{4N^2},\frac{\hat{y}}{4N^2}\right)\frac{\hat{y}^{\frac{\kappa}{2}}}{\hat{x}^{\frac{\kappa}{2}}}
&=& \frac{2}{4\mu N}\oint_{\Gamma_{1}}\frac{d\hat{v}}{2\pi i}\oint_{\Gamma_{\theta}}\frac{d\hat{u}}{2\pi i} \frac{1}{\hat{u}-\hat{v}}
\left(\frac{\hat{v}}{\hat{u}}\right)^{\nu-m+n}
\prod_{l=1}^{n}\frac{\hat{u}-\frac{{\pi}_{l}}{N}}{\hat{v}-\frac{{\pi}_{l}}{N}}\prod_{k=1}^{m}\frac{\hat{v}-\frac{{\theta}_{k}}{N}}{\hat{u}-\frac{{\theta}_{k}}{N}}\nonumber\\
&&\times I_\kappa\left(2\sqrt{((1+\mu)^2-4\mu N\hat{v})\frac{\hat{x}}{(4\mu N)^2}}\right)\left(\frac{4\mu N\hat{u}-(1+\mu)^2}{4\mu N\hat{v}-(1+\mu)^2}\right)^{\frac{\kappa}{2}}
\nonumber\\
&&\times K_\kappa\left(2\sqrt{((1+\mu)^2-4\mu N\hat{u})\frac{\hat{y}}{(4\mu N)^2}}\right)\left(\frac{1-\frac{1}{\hat{u}N}}{1-\frac{1}{\hat{v}N}}\right)^{N-n}.
\end{eqnarray}
Using once again Lebesgue's Dominated Convergence Theorem, the limit of the two last lines is easily taken, and we obtain
\begin{eqnarray}
\lim_{N\to\infty}
\frac{1}{4N^2}
K_{N}\left(\frac{\hat{x}}{4N^2},\frac{\hat{y}}{4N^2}\right)\frac{\hat{y}^{\frac{\kappa}{2}}}{\hat{x}^{\frac{\kappa}{2}}}
&=& \frac{2}{\tau}\oint_{\Gamma_{\rm out}}\frac{d\hat{v}}{2\pi i}\oint_{\Gamma_{\rm in}}\frac{d\hat{u}}{2\pi i} \frac{1}{\hat{u}-\hat{v}}
\left(\frac{\hat{v}}{\hat{u}}\right)^{\nu-m+n}
\prod_{l=1}^{n}\frac{\hat{u}-\hat{\pi}_{l}}{\hat{v}-\hat{\pi}_{l}}\prod_{k=1}^{m}\frac{\hat{v}-\hat{\theta}_{k}}{\hat{u}-\hat{\theta}_{k}}
\nonumber\\
&&\times I_\kappa\left(2\sqrt{\frac{\hat{x}}{\tau^2}(1-\tau\hat{v})}\right)K_\kappa\left(2\sqrt{\frac{\hat{y}}{\tau^2}(1-\tau\hat{u})}\right)
\left(\frac{1-\tau\hat{u}}{1-\tau\hat{v}}\right)^{\frac{\kappa}{2}}e^{-\frac{1}{\hat{u}}+\frac{1}{\hat{v}}}
\nonumber\\
&=&\mathbb{K}_{\rm II}^{(n,m)}\left({\hat{x}},{\hat{y}};\tau\right)\,.
\label{KIIfinal}
\end{eqnarray}
\end{proof}
We note that the interpolating kernel can also be written as a double integral of the generalised Bessel kernel with finite rank perturbations. Using Fubini's Therorem and the integral representations \eqref{Besselint} we obtain
\begin{eqnarray}
\mathbb{K}_{\rm II}^{(n,m)}\left({\hat{x}},{\hat{y}};\tau\right)&=& \frac{1}{\tau}
\left(\frac{\hat{x}}{\hat{y}}\right)^{\frac\kappa2}
\oint_{{\gamma}_0}\frac{d\hat{s}}{2\pi i}\int_0^\infty dt\,s^{-\kappa-1}t^{\kappa-1}e^{s-t}e^{\frac{\hat{x}}{s\tau^2}-\frac{\hat{y}}{t\tau^2}}
\mathbb{K}_{\mathrm{III}}^{(n,m)}\left(\frac{\hat{x}}{s\tau},\frac{\hat{y}}{t\tau}\right),
\label{KII4int}
\end{eqnarray}
We use a  different definition here for the interpolating kernel, compared to \cite{DZL} where $m=0$. There, the scaling \eqref{xyscaleIII} was made $\mu$-dependent, leading to
$\mathbb{K}_{\rm II}^{(n,0)}\left({\hat{x}},{\hat{y}};\tau\right)\to\frac{1}{\tau^2}\mathbb{K}_{\rm II}^{(n,0)}\left(\frac{\hat{x}}{\tau^2},\frac{\hat{y}}{\tau^2};\tau\right)$. When the finite rank parameters $\hat{\theta}_k\to0$, for all $k=1,\ldots, m$, we obtain the same kernel as in \cite{DZL} (up to rescaling), which is thus universal. The question of universality under further deformations is open.

In analogy to the previous two subsections we can show that the interpolating kernel is integrable.
\begin{cor}
\label{KII-integrable}
The family of limiting kernels with finite rank perturbations \eqref{KIIfinal} is integrable and can be written in the following form:
\begin{equation}
\mathbb{K}_{\mathrm{II}}^{(n,m)}(x,y)=\mathbb{K}_{\mathrm{II}}^{(0,0)}(x,y)\big|_{\nu\to \nu+n-m}
- \sum_{i=1}^m\tilde{\Lambda}^{(i)}_{\mathrm{II}}(x)\tilde{\Xi}^{(i)}_{\mathrm{II}}(y)
+\sum_{j=1}^n{\Lambda}^{(j)}_{\mathrm{II}}(x){\Xi}^{(j)}_{\mathrm{II}}(y)\,,
\label{Kintegrab}
\end{equation}
where
\begin{eqnarray}
\tilde{\Lambda}^{(i)}_{\mathrm{II}}(x)&=&\oint_{{\Gamma}_{\rm \hat{\pi}}}\frac{d\hat{v}}{2\pi i}I_\kappa\left(2\sqrt{\frac{x}{\tau^2}(1-\tau\hat{v})}\right)
{\exp\left[\frac{1}{\hat{v}}\right]}
\left(1-\tau\hat{v}\right)^{-\frac{\kappa}{2}}
\hat{v}^{\nu+n-m} \prod_{k=1}^{i-1}(\hat{v}-\hat{\theta}_k)\,,
\nonumber\\
{\Lambda}^{(j)}_{\mathrm{II}}(x)&=&\oint_{{\Gamma}_{\rm \hat{\pi}}}\frac{d\hat{v}}{2\pi i}I_\kappa\left(2\sqrt{\frac{x}{\tau^2}(1-\tau\hat{v})}\right)
{\exp\left[\frac{1}{\hat{v}}\right]}
\left(1-\tau\hat{v}\right)^{-\frac{\kappa}{2}}
\hat{v}^{\nu+n-m} \frac{\prod_{k=1}^{m}(\hat{v}-\hat{\theta}_k)}{\prod_{l=1}^{j}(\hat{v}-\hat{\pi}_l)}\,,
\nonumber\\
\tilde{\Xi}^{(i)}_{\mathrm{II}}(y)&=&\frac2\tau\oint_{{\Gamma}_{\rm \hat{\theta}}}\frac{d\hat{u}}{2\pi i}K_\kappa\left(2\sqrt{\frac{y}{\tau^2}(1-\tau\hat{u})}\right)
{\exp\left[-\frac{1}{\hat{u}}\right]}
\left(1-\tau\hat{u}\right)^{\frac{\kappa}{2}}
\hat{u}^{-\nu-n+m} \prod_{k=1}^i\frac{1}{\hat{u}-\hat{\theta}_k}\,,
\nonumber\\
\ \ {\Xi}^{(j)}_{\mathrm{II}}(y)&=&\frac2\tau\oint_{{\Gamma}_{\rm \hat{\theta}}}\frac{d\hat{u}}{2\pi i}K_\kappa\left(2\sqrt{\frac{y}{\tau^2}(1-\tau\hat{u})}\right)
{\exp\left[-\frac{1}{\hat{u}}\right]}
\left(1-\tau\hat{u}\right)^{\frac{\kappa}{2}}
\hat{u}^{-\nu-n+m} \frac{\prod_{l=1}^{j-1}(\hat{u}-\hat{\pi}_l)}{\prod_{k=1}^{m}(\hat{u}-\hat{\theta}_k)}\,.
\label{LXdefII}
\end{eqnarray}
Here, the closed contour ${\Gamma}_{\hat{\theta}}$ contains the poles at $\hat{\theta}_{l=1,\ldots,m}$ and the origin, encircling them counter-clockwise, and the closed contour ${\Gamma}_{\hat{\pi}}$ contains the poles at $\hat{\pi}_{k=1,\ldots,n}$ and the origin, encircling them counter-clockwise.
\end{cor}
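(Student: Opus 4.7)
The plan is to mirror the strategy used for Corollary \ref{Int1}: apply the two-sided telescoping identity \eqref{DFid2} to the rational factor in the integrand of \eqref{kernelcrit}, then separate the double contour integral into a single leading piece (reproducing the unperturbed kernel with shifted $\nu$) and two factorising sums, and finally deform contours and invoke \eqref{factor} to conclude integrability.

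First I would take the kernel in the form \eqref{kernelcrit} and insert the identity
\begin{equation*}
\frac{1}{u-v}\prod_{k=1}^{n}\frac{u-\hat{\pi}_{k}}{v-\hat{\pi}_{k}}\prod_{l=1}^{m}\frac{v-\hat{\theta}_{l}}{u-\hat{\theta}_{l}}
=\frac{1}{u-v}-\sum_{i=1}^{m}\frac{\prod_{l=1}^{i-1}(v-\hat{\theta}_{l})}{\prod_{l=1}^{i}(u-\hat{\theta}_{l})}
+\sum_{j=1}^{n}\frac{\prod_{k=1}^{j-1}(u-\hat{\pi}_{k})}{\prod_{k=1}^{j}(v-\hat{\pi}_{k})}\frac{\prod_{l=1}^{m}(v-\hat{\theta}_{l})}{\prod_{l=1}^{m}(u-\hat{\theta}_{l})},
\end{equation*}
which is the same rearrangement used in \eqref{DFid2}. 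The first term, together with the Bessel functions, the Gaussian-type exponentials $e^{-1/u+1/v}$, the power $\left(\tfrac{1-\tau u}{1-\tau v}\right)^{\kappa/2}$ and the factor $(v/u)^{\nu+n-m}$, reproduces exactly the unperturbed interpolating kernel with $\nu$ shifted to $\nu+n-m$ — this is the first contribution $\mathbb{K}_{\mathrm{II}}^{(0,0)}(x,y)|_{\nu\to\nu+n-m}$ on the right-hand side of \eqref{Kintegrab}.

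The remaining two sums are finite, and in each summand the integrand is now a product of a function of $u$ alone and a function of $v$ alone. Thus the double contour integral factorises into a product of single contour integrals, and I can read off the identifications with the functions in \eqref{LXdefII}. At this stage the original nested contours $\Gamma_{\rm out},\Gamma_{\rm in}$ can be simplified: since the only remaining pole structure in $v$ is at the $\hat{\pi}_k$'s (and, via the $(v/u)^{\nu+n-m}$ factor, possibly at $v=0$), the $v$-contour can be deformed into $\Gamma_{\hat{\pi}}$ enclosing $\{0,\hat{\pi}_1,\ldots,\hat{\pi}_n\}$; similarly the $u$-contour can be deformed to $\Gamma_{\hat{\theta}}$ enclosing $\{0,\hat{\theta}_1,\ldots,\hat{\theta}_m\}$. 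In particular the non-intersection requirement drops out once $\tfrac{1}{u-v}$ has disappeared.

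Finally, integrability follows in the same way as in Corollary \ref{Int1}: the leading term $\mathbb{K}_{\mathrm{II}}^{(0,0)}|_{\nu\to\nu+n-m}$ is of Christoffel--Darboux type and thus integrable in the IIKS sense (this is the content of the analysis in \cite{AStr16b,DZL}), while every factorising sum $\sum_l f_l(x)g_l(y)$ is made integrable by the elementary trick \eqref{factor}. The main obstacle I anticipate is bookkeeping of the contour deformations when the cut of $(1-\tau v)^{-\kappa/2}$ and $(1-\tau u)^{\kappa/2}$ is present: one must verify, using $\Re(u)<1/\tau$ on $\Gamma_{\rm in}$ together with the range $\hat{\pi}_k\in[0,1/\tau)$, that the deformed contours $\Gamma_{\hat{\pi}}$ and $\Gamma_{\hat{\theta}}$ can be chosen to lie in the half-plane $\Re(\cdot)<1/\tau$, so that no spurious branch-cut contributions appear. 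Once this is checked, collecting the pieces yields \eqref{Kintegrab} with the explicit functions \eqref{LXdefII}.
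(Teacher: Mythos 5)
Your proposal matches the paper's proof essentially step for step: apply the two-sided telescoping identity \eqref{DFid2} to the rational factor in \eqref{kernelcrit}, recognise the $\frac{1}{u-v}$ piece as $\mathbb{K}_{\mathrm{II}}^{(0,0)}|_{\nu\to\nu+n-m}$, observe that the remaining two sums have separated $u$- and $v$-dependence so the double contour integral factorises into the products $\tilde{\Lambda}^{(i)}_{\mathrm{II}}\tilde{\Xi}^{(i)}_{\mathrm{II}}$ and ${\Lambda}^{(j)}_{\mathrm{II}}{\Xi}^{(j)}_{\mathrm{II}}$, cite \cite[Proposition 5.3]{DZL} for integrability of the unperturbed interpolating kernel, and invoke \eqref{factor} for the additional sums. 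Your extra remarks on deforming the nested contours to $\Gamma_{\hat{\pi}}$, $\Gamma_{\hat{\theta}}$ once the $\frac{1}{u-v}$ singularity is gone, and on keeping both contours in the half-plane $\Re(\cdot)<1/\tau$ to avoid the branch cut of $(1-\tau\,\cdot\,)^{\pm\kappa/2}$, are correct and in fact slightly more explicit than the paper's terse argument.
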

\begin{proof}
The fact that the kernel without finite rank perturbations, $\mathbb{K}_{\mathrm{II}}^{(0,0)}(x,y)$, is integrable was already shown in \cite[Proposition 5.3]{DZL}. We thus can use the identity \eqref{DFid2} to split off the remaining parts as written in \eqref{Kintegrab} and arrive at \eqref{LXdefII}. The integrability of these additional terms follows immediately from \eqref{factor}.
\end{proof}
We move to the interpolating property of the kernel as stated in Theorem \ref{Interpolate}.
\begin{proof}[Proof of Theorem \ref{Interpolate}]
Part a) In this limit $\tau\to\infty$ the $\tau$-dependent domain $[0,1/\tau)$ of the finite rank perturbation parameters $\hat{\pi}_{l=1,\ldots,n}$ shrinks to the origin, while the $\hat{\theta}_{k=1,\ldots,m}$ remain non-positive. This explains why in this limit we obtain a kernel that contains only the second set of parameters, setting all $\hat{\pi}_l=0$ (or $n=0$). In order to show that the limit of the kernel $\mathbb{K}_{\rm II}^{(n,m)}\left({\hat{x}},{\hat{y}};\tau\right)$ is interpolating to the Meijer $G$-kernel with finite rank perturbations, $\mathbb{K}_{\rm I}^{(m)}\left({\hat{x}},{\hat{y}}\right)$, when $\tau\to\infty$, we need again the integral representations of the modified Bessel functions \eqref{Besselint}, yielding
\begin{eqnarray}
I_\kappa\left(2\sqrt{\frac{\hat{x}}{\tau}(1-\tau\hat{v})}\right) &=& \left(\frac{\hat{x}}{\tau}(1-\tau\hat{v})\right)^{\frac{\kappa}{2}}
\oint_{{\gamma}_0}\frac{d{s}}{2\pi i}{s^{-\kappa-1}}\exp\left[s+\frac1s\left(\frac{\hat{x}}{\tau}-\hat{x}\hat{v}\right)\right],
\nonumber\\
K_{-\kappa}\left(2\sqrt{\frac{\hat{y}}{\tau}(1-\tau\hat{u})}\right) &=&\frac12 \left(\frac{\hat{y}}{\tau}(1-\tau\hat{u})\right)^{-\frac{\kappa}{2}}
\int_0^\infty dt t^{\kappa-1}\exp\left[-t-\frac1t\left(\frac{\hat{y}}{\tau} -\hat{y}\hat{u}\right)\right].
\end{eqnarray}
Note that compared to \eqref{KIIfinal} we have already used the rescaled arguments $\tau\hat{x}$ and $\tau\hat{y}$ of the kernel as in Theorem \ref{Interpolate} Part a). With the same arguments as before we exchange integrals and limits to obtain
\begin{eqnarray}
\lim_{\tau\to\infty}\tau \mathbb{K}_{\rm II}^{(n,m)}\left({\tau\hat{x}},{\tau\hat{y}};\tau\right)\frac{\hat{y}^{\frac{\kappa}{2}}}{\hat{x}^{\frac{\kappa}{2}}}
&=&
\oint_{{\gamma}_0}\frac{d{s}}{2\pi i}
\int_0^\infty dt \frac{t^{\kappa-1}}{s^{\kappa+1}} e^{s-t}
\oint_{\Gamma_{\rm out}}\frac{d\hat{v}}{2\pi i}\oint_{\Gamma_{\rm in}}\frac{d\hat{u}}{2\pi i} \frac{\exp\left[-\frac1s\hat{x}\hat{v}
+\frac1t \hat{y}\hat{u}\right]}{\hat{u}-\hat{v}}
\
\nonumber\\
&&\times  e^{-\frac{1}{\hat{u}}+\frac{1}{\hat{v}}} \left(\frac{\hat{v}}{\hat{u}}\right)^{\nu-m}
\prod_{k=1}^{m}\frac{\hat{v}-\hat{\theta}_{k}}{\hat{u}-\hat{\theta}_{k}}
 \,,
\end{eqnarray}
which is the statement to be proven.\\

Part b) In order to show that the kernel $\mathbb{K}_{\rm II}^{(n,m)}\left({\hat{x}},{\hat{y}};\tau\right)$ given in \eqref{KIIfinal} is interpolating to the Bessel kernel with finite rank perturbations, $\mathbb{K}_{\rm III}^{(n,m)}\left({\hat{x}},{\hat{y}}\right)$, when $\tau\to0+$, we need to expand the Bessel functions in the integrand, due to their argument becoming large. Using \eqref{asymptoticsIK} we obtain in this limit
\begin{eqnarray}
 I_\kappa\left(2\sqrt{\frac{\hat{x}}{\tau^2}(1-\tau\hat{v})}\right)&\sim& \frac{\sqrt{\tau}}{\sqrt{4\pi}\ \hat{x}^{\frac14}}\ e^{2\sqrt{\hat{x}}\frac{1}{\tau}-\hat{v}\sqrt{\hat{x}}}
 \quad\ \mbox{as} \quad \tau\to 0+\ ,
\nonumber \\
 K_\kappa\left(2\sqrt{\frac{\hat{y}}{\tau^2}(1-\tau\hat{u})}\right)&\sim&\sqrt{\frac{\pi}{4}}\frac{\sqrt{\tau}}{\hat{y}^{\frac14}}
 \ e^{-2\sqrt{\hat{y}}\frac{1}{\tau}+\hat{u}\sqrt{\hat{y}}}
 \quad\mbox{as} \quad \tau\to 0+\ .
\end{eqnarray}
Evoking Lebesgue's Dominated Convergence Theorem and removing the divergent prefactors by mapping to an equivalent kernel we obtain
\begin{eqnarray}
\lim_{\tau\to0}\mathbb{K}_{\rm II}^{(n,m)}\left({\hat{x}},{\hat{y}};\tau\right) \ e^{2(\sqrt{\hat{y}}-\sqrt{\hat{x}})\frac{1}{\tau}}&=&
\frac{1}{2(\hat{x}\hat{y})^{\frac14}}\oint_{\Gamma_{\rm out}}\frac{d\hat{v}}{2\pi i}\oint_{\Gamma_{\rm in}}\frac{d\hat{u}}{2\pi i} \frac{1}{\hat{u}-\hat{v}}
\left(\frac{\hat{v}}{\hat{u}}\right)^{\nu-m+n}
\prod_{l=1}^{n}\frac{\hat{u}-\hat{\pi}_{l}}{\hat{v}-\hat{\pi}_{l}}\prod_{k=1}^{m}\frac{\hat{v}-\hat{\theta}_{k}}{\hat{u}-\hat{\theta}_{k}}
\nonumber\\
&&\times \exp\left[\hat{u}\sqrt{\hat{y}}-\hat{v}\sqrt{\hat{x}}\right]
\ e^{-\frac{1}{\hat{u}}+\frac{1}{\hat{v}}},
\end{eqnarray}
which is the statement of the theorem. Note that in this limit $\tau\to 0+$ the limiting domains of the $\hat{\pi}_l$ become $[0,\infty)$, and of the  $\hat{\theta}_k$ become $\cap_{l=1}^{n}(-\infty,\hat{\pi}_l]$.
\end{proof}
\begin{appendix}
\section{}\label{AppendixA}
In this appendix we offer alternative derivations of Proposition \ref{gWjpdf} and Theorem \ref{MainP2} as a check. They will be shown to follow directly from Theorem \ref{TheoremMainDensity} as well. Note that our proof of Theorem \ref{TheoremMainDensity} in Subsection \ref{JPDcoupled} neither uses Proposition \ref{gWjpdf} nor Theorem \ref{MainP2}, but merely utilises the same matrix parametrisations, Jacobians and group integrals as they already occur in the proofs of Proposition \ref{gWjpdf} and Theorem \ref{MainP2}.
\subsection{Alternative derivation of Proposition \ref{gWjpdf}}\label{altProp}
As it was mentioned in the introduction, the generalised Wishart ensemble \eqref{corrWishart} can be obtained from our most general ensemble of correlated coupled matrices \eqref{JointDensityGX} by integrating out the random matrix $G$ and identifying $\Sigma=-\Omega\Omega^*/\alpha$. For the ensemble \eqref{JointDensityGX} Theorem \ref{TheoremMainDensity} states the joint probability density of squared singular values of the product matrix $Y=GX$ and of matrix $X$. Consequently, when integrating out the squared singular values $y_1,\ldots,y_N$ of $Y$ in  Theorem \ref{TheoremMainDensity}, we obtain  the joint probability density of squared singular values $x_1,\ldots,x_N$ of matrix $X$ distributed according to \eqref{corrWishart}, upon identification of the eigenvalues $\sigma_i=-\delta_i/\alpha$ of $\Sigma$ for $i=1,\ldots,N$. The explicit integration of \eqref{MainDensity} leads to the following:
\begin{eqnarray}
\prod_{j=1}^N\int_0^\infty dy_j P(x_1,\ldots,x_N;y_1,\ldots,y_N)&=&
\frac{N!}{Z}
\det\left[x_j^{-\kappa-1}\int_0^\infty dyy^{\frac{\kappa}{2}}e^{-\alpha y/x_j}(-1)^{\frac{\kappa}{2}}J_\kappa(2\sqrt{\alpha\sigma_l y})\right]_{j,l=1}^{N}\nonumber\\
&&\times\det\left[1,q_{i},\ldots,q_{i}^{\nu-1},e^{-q_{i}x_1},\ldots,e^{-q_{i}x_N}\right]_{i=1}^{M}\nonumber\\
&=&\frac{N!}{Z}\alpha^{-N(1+\frac{\kappa}{2})} (-1)^{N\frac{\kappa}{2}}\prod_{i=1}^N\sigma_i^{\frac{\kappa}{2}}
\det\left[e^{-x_j\sigma_l}\right]_{j,l=1}^{N}\nonumber\\
&&\times\det\left[1,q_{i},\ldots,q_{i}^{\nu-1},e^{-q_{i}x_1},\ldots,e^{-q_{i}x_N}\right]_{i=1}^{M}.
\label{altZ''}
\end{eqnarray}
In the first step we replaced the $\delta_i$ by $-\sigma_i\alpha$, for all $i=1,\ldots,N$, which turns the modified into an ordinary Bessel function of the first kind, and applied the standard Andr\'eief formula, \eqref{gAndreief} at $\nu=0$. In the second step, after changing variables, we have used the following integral \cite[Eq. 6.631.4]{GradshteynRyzhik}
\begin{equation}
2\int_0^\infty dt t^{\kappa+1}e^{-t^2}J_{\kappa}(2\sqrt{x_j\sigma_l}\ t)=(x_j\sigma_l)^{\frac{\kappa}{2}} e^{-x_j\sigma_l}\ ,
\end{equation}
and taken out common factors of the determinant. A comparison of the prefactors of the two determinants in \eqref{altZ''} together with \eqref{Z} for $Z$ yields the joint probability density \eqref{biensemble2} of Proposition \ref{gWjpdf}, with the correct normalisation \eqref{Z"}.
\subsection{Alternative derivation of Theorem \ref{MainP2}}\label{altThm}
In this subsection we rederive Theorem \ref{MainP2} by taking the limit $\delta_j\to0$ for all $j=1,\ldots,N$ in Theorem \ref{TheoremMainDensity}, providing an independent derivation. For this purpose we make use of the rule of l'H{\^o}pital which can be formulated for our purposes as
\begin{equation}
\lim_{\delta_{1},\ldots,\delta_{N}\rightarrow 0}\frac{{\det}\left[f\left(\delta_{k}\lambda_{l}\right)\right]_{k,l=1}^{N}}{\Delta_{N}(\delta_{1},\ldots,\delta_{N})}=\prod_{n=0}^{N-1}c_{n}\;\Delta_{N}(\lambda_{1},\ldots,\lambda_{N})\,,\quad\text{for }\;f(x)=\sum_{n=0}^{\infty}c_{n}x^{n}\,.\label{LHospital}
\end{equation}
With the series representation for the modified Bessel function of the first kind,
\begin{equation}
x^{-\frac{\kappa}{2}}I_{\kappa}\left(2\sqrt{x}\right)=\sum_{n=0}^{\infty}\frac{1}{\Gamma(n+\kappa+1)\Gamma(n+1)}x^{n}\,,\label{BesselI}
\end{equation}
it is not difficult to see that combining the $\delta_k$-dependent parts of \eqref{MainDensity} and \eqref{Z}, the rule of l'H{\^o}pital (\ref{LHospital}) can be applied,
\begin{eqnarray}
\lim_{\delta_1,\ldots,\delta_N\to0} &&
\frac{\prod_{j=1}^Ny_j^\kappa}{\prod_{i=1}^{M}\prod_{j=1}^{N}(\alpha q_{i}-\delta_{j})^{-1}}\frac{\det\left[\left(\delta_{l}y_{j}\right)^{-\frac{\kappa}{2}}I_{\kappa}\left(2\sqrt{\delta_{l}y_{j}}\right)\right]_{j,l=1}^N}{\Delta_{N}(\delta_{1},...,\delta_{N})}\nonumber\\
&&=\frac{\prod_{j=1}^Ny_j^\kappa}{\alpha^{-NM}\prod_{i=1}^Mq_i^{-N}}
\prod\limits_{l=0}^{N-1}\frac{1}{\Gamma(l+1)\Gamma(\kappa+l+1)}\Delta_{N}(y_{1},...,y_{N})\,,
\end{eqnarray}
yielding the joint probability density function of the squared singular values of product of two independent correlated matrices stated in \eqref{P2jpdf}.
\section{Alternative Derivation of Proposition \ref{PropPP}}\label{B}
In this appendix we give yet another derivation of Proposition \ref{PropPP}, following the idea of Tracy and Widom \cite{TW} that is independent of the map to a standard biorthogonal ensemble. We define for \eqref{jpdf-gen}
\begin{eqnarray}
Z[f]&=&\int_0^{\infty} \cdots \int_{0}^{\infty}
dy_{1}\ldots dy_{N}
\prod_{j=1}^N\left(1+f(y_j)\right)\nonumber\\
&&\times\det\left[\psi_i(y_j)\right]_{i,j=1}^N\det\left[1,q_i,\ldots,q_i^{\nu-1},\varphi_i(y_1),\ldots,\varphi_i(y_N)\right]_{i=1}^{M}\,,
\end{eqnarray}
where $f$ is a test function. It follows from a generalisation of Andr\'eief's integration formula derived in \cite{Kieburg:2010} that $Z[f]$ can be written as a determinant,
\begin{equation}
Z[f]=N!\det\left[A+B^{f}\right]\,,
\end{equation}
where the matrix $A$ is defined by equation (\ref{MatrixA}) and the matrix $B^f$ is given by
\begin{equation}
B^{f}=\left(\begin{array}{cccccc}
              0 & \ldots & 0 & I_{1,1}^f & \ldots & I_{1,N}^f \\
              \vdots &  &  \vdots&\vdots  &  &\vdots  \\
              0 & \ldots & 0 & I_{M,1}^f & \ldots & I_{M,N}^f
            \end{array}
\right),\;\;\; I_{i,j}^f=\int_0^{\infty}dt\varphi_i(t)\psi_j(t)f(t)\,.
\end{equation}
The quotient of $Z[f]$ and $Z[0]$ can thus be expressed through the inverse matrix $C=A^{-1}$ as
\begin{equation}
\frac{Z[f]}{Z[0]}=\det\left[\delta_{i,j}+\sum_{k=1}^MC_{i,k}B_{k,j}^f\right]_{i,j=1}^M.
\end{equation}
Note that the matrix $B^{f}$ is zero in the first columns. By indicating these zero entries we extend the definition of $\psi_{j}$  to
\begin{equation}
\Psi(j,t)=\left\{
            \begin{array}{ll}
              0 &\mbox{for}\quad 1\leq j\leq M-N\,, \\
              \psi_{j-M+N}(t) &\mbox{for}\quad  M-N+1\leq j\leq M\,,
            \end{array}
          \right.
\end{equation}
which yields immediately an integral expression for all entries of $B^{f}$ as
\begin{equation}
B_{i,j}^f=\int_0^\infty dt \varphi_i(t)\Psi(j,t)f(t)\,.
\end{equation}
The above definition of $B^{f}$ allows to express the matrix multiplication of $C$ and $B$ as
\begin{equation}
\left(CB^f\right)_{i,j}=\int_0^\infty dt\Phi(i,t)\Psi(j,t)\,,\quad\text{with }\Phi(i,t)=\sum_{k=1}^M C_{i,k} \varphi_k(t) f(t)\,.
\end{equation}
We now make use of the notion of $\Phi$ and $\Psi$ to rewrite the quotient of $Z[f]$ and $Z[0]$ as
\begin{equation}
\frac{Z[f]}{Z[0]}=\det\left[\eins_M+\Phi\Psi\right]\,,\quad\text{where}\quad(\Phi\Psi)_{i,j}=\int_{0}^{\infty}\Phi(i,t)\Psi(j,t)dt\,,\quad1\leq i,j\leq M\,.
\end{equation}
Now use the fact that for arbitrary Hilbert-Schmidt operators $\mathcal{A}$ and $\mathcal{B}$ the following general relation holds true:
$\det\left[\eins+\mathcal{A}\mathcal{B}\right]=\det\left[\eins+\mathcal{B}\mathcal{A}\right]$. This gives
\begin{equation}
\frac{Z[f]}{Z[0]}=\det\left[\eins_M+\Psi\Phi\right]\,,
\end{equation}
where $\Psi\Phi$ is
an operator on $L^2\left(0,\infty\right)$ with the kernel
\begin{equation}
K_{N}(x,y)f(y)=\sum_{i=1}^M\Psi(i,x)\Phi(i,y)\,,
\end{equation}
proving our Proposition \ref{PropPP}.
\end{appendix}\\

{\it Acknowledgements:} We acknowledge support by grant AK35/2-1 ``Products of Random Matrices" of the German research council
DFG (G.~A.),
and partially by ERC Advanced Grant No. 338804,  the Natural Science Foundation of China \# 11771417, the Youth Innovation Promotion Association CAS  \#2017491,  Anhui Provincial Natural Science Foundation \#1708085QA03 and the Fundamental Research Funds for the Central Universities \#WK0010450002 (D.-Z.~L.).
Furthermore, we would like to thank the referee for valuable comments, including the map to standard biorthogonal ensembles which provided an alternative proof of Proposition \ref{PropPP}.


\begin{thebibliography}{99}
\bibitem{ACK}
Akemann, G.; Checinski, T.; Kieburg, M. Spectral correlation functions of the sum of two independent complex Wishart matrices with unequal covariances. J. Phys. A: Math. Theor. \href{http://iopscience.iop.org/article/10.1088/1751-8113/49/31/315201/meta}{\textbf{49} 315201} (2016) \href{https://arxiv.org/abs/1509.03466}{[arXiv:1509.03466 [math-ph]]}.
\bibitem{AIK}
Akemann, G.; Ipsen, J.R.; Kieburg, M. Products of rectangular random matrices: singular values and progressive scattering. Phys. Rev. E \href{https://journals.aps.org/pre/abstract/10.1103/PhysRevE.88.052118}{\textbf{88}(5) 052118} (2013) \href{https://arxiv.org/abs/1307.7560}{[arXiv:1307.7560 [math-ph]]}.
\bibitem{AIp}
Akemann, G.; Ipsen, J.R. Recent exact and asymptotic results for products of independent random matrices. Acta Phys. Polon. B \href{https://www.researchgate.net/publication/271855023_Recent_Exact_and_Asymptotic_Results_for_Products_of_Independent_Random_Matrices}{{\bf 46}(9) 1747-1784} (2015) \href{https://arxiv.org/abs/1502.01667}{[arXiv:1502.01667 [math- ph]]}.
\bibitem{AKW}
Akemann, G.; Kieburg M.; Wei, L. Singular value correlation functions for products of Wishart random matrices. J. Phys. A. \href{http://iopscience.iop.org/article/10.1088/1751-8113/46/27/275205/pdf}{{\bf 46} 275205} (2013) \href{https://arxiv.org/abs/1303.5694}{[arXiv:1303.5694 [math-ph]]}.
\bibitem{AStr16a}
Akemann, G.; Strahov, E. Dropping the independence: singular values for products of two coupled random matrices. Commun. Math. Phys. \href{http://link.springer.com/article/10.1007/s00220-016-2653-4}{{\bf 345} 101-140} (2016) \href{https://arxiv.org/abs/1504.02047}{[arXiv:1504.02047 [math-ph]]}.
\bibitem{AStr16b} Akemann, G.; Strahov, E. Hard edge limit of the product of two strongly coupled random matrices. Nonlin. \href{http://iopscience.iop.org/article/10.1088/0951-7715/29/12/3743/meta}{{\bf 29}(12) 3743-3776} (2016) \href{https://arxiv.org/abs/1511.09410}{[arXiv:1511.09410 [math-ph]]}.
\bibitem{And}
Andr\'eief, C. M\'em. de la Soc. Sci., Bordeaux {\bf 2} 111 (1883).
\bibitem{BBP}
Baik, J.; Ben Arous, G.; P\'ech\'e S. Phase transition of the largest eigenvalue for nonnull complex sample covariance matrices. Ann. Prob. \href{https://projecteuclid.org/euclid.aop/1127395869}{{\bf 33}(5) 1643-1697} (2005) \href{https://arxiv.org/abs/math/0403022}{[arXiv:math/0403022 [math.PR]]}.
\bibitem{Basor:1994}
Basor, E.L.; Forrester, P.J. Formulas for the Evaluation of Toeplitz Determinants with Rational Generating Functions. Math. Nachr. \href{http://onlinelibrary.wiley.com/doi/10.1002/mana.19941700102/abstract}{\textbf{170} 5-18} (1994).
\bibitem{BerezinKarpelevich}
Berezin, F.A.; Karpelevich, F.I. Zonal spherical functions and Laplace operators on some symmetric spaces. Dokl. Akad. Nauk SSSR \href{https://www.researchgate.net/publication/265541638_Zonal_spherical_functions_and_Laplace_operators_on_some_symmetric_spaces}{{\bf 118}(1) 9-12} (1958).
\bibitem{Bertola3}
Bertola, M.; Bothner, T. Universality conjecture and results for a model of several coupled positive-definite matrices. Commun. Math. Phys. \href{http://link.springer.com/article/10.1007/s00220-015-2327-7}{{\bf 337}(3) 1077-1141} (2015) \href{https://arxiv.org/abs/1407.2597}{[arXiv:1407.2597 [math-ph]]}.
\bibitem{Borodin:1998}
Borodin, A. Biorthogonal ensembles. Nucl. Phys. B \href{https://www.researchgate.net/publication/222447596_Biorthogonal_ensembles}{{\bf 536}(3) 704-732} (1998) \href{https://arxiv.org/abs/math/9804027}{[arXiv:math/9804027 [math.CA]]}.
\bibitem{BP}
Borodin, A.; P\'ech\'e, S. Airy kernel with two sets of parameters in directed percolation and random matrix theory. J. of Stat. Phys. \href{http://link.springer.com/article/10.1007/s10955-008-9553-8}{\textbf{132}(2)} 275-290  (2008) \href{https://arxiv.org/abs/0712.1086}{[arXiv:0712.1086 [math-ph]]}.
\bibitem{CKW}
Claeys, T.; Kuijlaars, A.B.J.; Wang, D. Correlation kernels for sums and products of random matrices. Rand. Matr. Th. App. \href{http://www.worldscientific.com/doi/abs/10.1142/S2010326315500173}{{\bf 4}(4) 1550017} (2015) \href{https://arxiv.org/abs/1505.00610}{[arXiv:1505.00610 [math.PR]]}.
\bibitem{DF06} 
Desrosiers, P.; Forrester, P.J. Asymptotic correlations for Gaussian and Wishart matrices with external source. Int. Math. Res. Notices \href{https://academic.oup.com/imrn/article-abstract/doi/10.1155/IMRN/2006/27395/661369/Asymptotic-correlations-for-Gaussian-and-Wishart}{\textbf{2006} 27395} (2006) \href{https://arxiv.org/abs/math-ph/0604012}{[arXiv:math-ph/0604012]}.
\bibitem{EGP}
Edelman, A.; Guionnet, A.; P\'ech\'e, S. Beyond universality in random matrix theory. Ann. App. Prob. \href{https://projecteuclid.org/euclid.aoap/1465905015}{{\bf 26}(3) 1659-1697} (2016) \href{https://arxiv.org/abs/1405.7590}{[arXiv:1405.7590 [math.PR]]}.
\bibitem{Jonit}
Fischmann, J.; Bruzda, W.; Khoruzhenko, B.A.; Sommers, H.-J.; Zyczkowski, K. Induced Ginibre ensemble of random matrices and quantum operations. J. Phys. A \href{http://iopscience.iop.org/article/10.1088/1751-8113/45/7/075203/meta}{{\bf 45}(7) 075203} (2012) \href{https://arxiv.org/abs/1107.5019}{[arXiv:1107.5019 [math-ph]]}.
\bibitem{ForresterLogGases}
Forrester, P.J. Log-gases and random matrices. London Mathematical Society Monographs Series, 34. Princeton University Press, Princeton, NJ, 2010.
\bibitem{PFL1}
Forrester, P.J. Lyapunov exponents for products of complex Gaussian random matrices. J. Stat. Phys. \href{https://www.researchgate.net/publication/225291179_Lyapunov_Exponents_for_Products_of_Complex_Gaussian_Random_Matrices}{{\bf 151}(5) 796-808} (2013) \href{https://arxiv.org/abs/1206.2001}{[arXiv:1206.2001 [math.PR]]}.
\bibitem{FL14}
Forrester, P.J.; Liu, D.-Z. Raney distributions and random matrix theory. J. Stat. Phys. \href{http://link.springer.com/article/10.1007/s10955-014-1150-4}{\textbf{158} 1051-1082} (2015) \href{https://arxiv.org/abs/1404.5759}{[arXiv:1404.5759 [math-ph]]}.
\bibitem{DZP}
Forrester, P.J.; Liu, D.-Z. Singular Values for Products of Complex Ginibre Matrices with a Source: Hard Edge Limit and Phase Transition.
Commun. Math. Phys. \href{http://link.springer.com/article/10.1007/s00220-015-2507-5}{{\bf 344}(1) 333-368} (2016) \href{https://arxiv.org/abs/1503.07955}{[arXiv:1503.07955 [math.PR]]}.
\bibitem{GradshteynRyzhik}
Gradshteyn, I.S.; Ryzhik, I.M. Table of Integrals, Series, and Products. A. Jeffrey and D. Zwillinger (eds.). Fifth edition, 1994.
\bibitem{GuhrWettig}
Guhr, T.; Wettig, T. An Itzykson-Zuber like integral and diffusion for complex ordinary and supermatrices. J. Math. Phys. \href{http://aip.scitation.org/doi/10.1063/1.531784}{{\bf 37} 6395-6413} (1996) \href{https://arxiv.org/abs/hep-th/9605110}{[arXiv:hep-th/9605110]}.
\bibitem{HC}
Harish-Chandra. Differential operators on a semisimple Lie algebra. Am. J. Math. \href{http://www.jstor.org/stable/2372387?origin=crossref&seq=1#page_scan_tab_contents}{{\bf 79} 87-120} (1957).
\bibitem{IIKS}
Its, A.R.; Izergin, A.G.; Korepin, V.E.; Slavnov, N.A. Differential equations for quantum correlation functions. Int. J. Mod. Phys. B \href{http://www.worldscientific.com/doi/abs/10.1142/S0217979290000504?journalCode=ijmpb}{{\bf 04} 1003-1037} (1990).
\bibitem{IZ}
Itzykson C.; Zuber, J.B. The Planar Approximation. 2, J. Math. Phys. \href{http://aip.scitation.org/doi/abs/10.1063/1.524438}{{\bf 21} 411-423} (1980).
\bibitem{JacksonSenerVerbaarschot}
Jackson, A.D.; \c{S}ener, M.K; Verbaarschot, J.J.M. Finite volume partition functions and Itzykson-Zuber integrals. Phys. Lett. B \href{http://www.sciencedirect.com/science/article/pii/0370269396009938}{{\bf 387}(2) 355-360} (1996) \href{https://arxiv.org/abs/hep-th/9605183}{[arXiv:hep-th/9605183]}.
\bibitem{Kieburg:2010}
Kieburg, M.; Guhr, T. Derivation of determinantal structures for random matrix ensembles in a new way. J. Phys. A \href{http://iopscience.iop.org/article/10.1088/1751-8113/43/7/075201/meta}{{\bf 43}(7) 075201} (2010) \href{https://arxiv.org/abs/0912.0654}{[arXiv:0912.0654 [math-ph]]}.
\bibitem{KKS}
Kieburg, M.; Kuijlaars, A.B.J.; Stivigny, D. Singular value statistics of matrix products with truncated unitary matrices. Int. Math. Res. Notices \href{https://academic.oup.com/imrn/article-abstract/2016/11/3392/2451530/Singular-Value-Statistics-of-Matrix-Products-with?redirectedFrom=PDF}{{\bf 2016}(11) 3392-3424} (2016) \href{https://arxiv.org/abs/1501.03910}{[arXiv:1501.03910 [math.PR]]}.
\bibitem{ArnoOUP}
Kuijlaars, A.B.J. Universality. Chapter 6 in The Oxford Handbook on Random Matrix Theory, Eds. Akemann, G.; Baik, J.; Di Francesco, P. Oxford University Press, 2011, Oxford \href{https://arxiv.org/abs/1103.5922}{[arXiv:1103.5922 [math-ph]]}.
\bibitem{Arno}
Kuijlaars, A.B.J. Transformations of polynomial ensembles. Eds. Hardin, D.P.; Lubinsky, D.S.; Simanek, B.Z. Vol. 66. Providence, RI: Am. Math. Soc. 2016 \href{https://arxiv.org/abs/1501.05506}{[arXiv:1501.05506 [math.PR]]}.
\bibitem{ArnoDries} Kuijlaars, A.B.J.; Stivigny, D. Singular values of products of random matrices and polynomial ensembles. Rand. Matr. Th. App. \href{http://www.worldscientific.com/doi/abs/10.1142/S2010326314500117}{{\bf 03}(3) 1450011} (2014) \href{https://arxiv.org/abs/1404.5802}{[arXiv:1404.5802 [math.PR]]}.
\bibitem{KV} Kuijlaars, A.B.J.; Vanlessen, M. Universality for eigenvalue correlations at the origin of the spectrum. Commun. Math. Phys. \href{http://link.springer.com/article/10.1007/s00220-003-0960-z}{{\bf 243} 163-191} (2003) \href{https://arxiv.org/abs/math-ph/0305044}{[arXiv:math-ph/0305044]}.
\bibitem{KZ14}
Kuijlaars, A.B.J.; Zhang, L. Singular values of products of Ginibre random matrices, multiple orthogonal polynomials and hard edge scaling limits. Commun. Math. Phys. \href{http://link.springer.com/article/10.1007/s00220-014-2064-3}{{\bf 332} 759-781} (2014) \href{https://arxiv.org/abs/1308.1003}{[arXiv:1308.1003 [math-ph]]}.
\bibitem{DZL}
Liu, D.-Z. Singular values for products of two coupled random matrices: hard edge phase transition. Constr. Approx. \href{https://link.springer.com/article/10.1007\%2Fs00365-017-9389-z}{ DOI:10.1007/s00365-017-9389-z}, 1-42 (2017)
\href{https://arxiv.org/abs/1602.00634}{[arXiv:1602.00634 [math-ph]]}.
\bibitem{LWZ}
Liu, D.-Z.; Wang, D; Zhang, L. Bulk and soft-edge universality for singular values of products of Ginibre random matrices. Ann. Inst. Henri Poincar\'e, Prob. Stat. \href{https://www.researchgate.net/publication/269935345_Bulk_and_soft-edge_universality_for_singular_values_of_products_of_Ginibre_random_matrices}{{\bf 52}(4) 1734-1762} (2016) \href{https://arxiv.org/abs/1412.6777}{[arXiv:1412.6777 [math.PR]]}.
\bibitem{VP}
Prasolov, V.V. Problems and theorems in linear algebra. Volume 134 of Translations of Mathematical Monographs. Am. Math. Soc., 1994, Providence, RI.
\bibitem{Schechter}
Schechter, S. On the inversion of certain matrices. Math. Tab. Other Aids Comp. \href{http://www.ams.org/journals/mcom/1959-13-066/S0025-5718-1959-0105798-2/}{{\bf 13} 73-77} (1959).
\bibitem{SMM:2005}
Simon, S.H.; Moustakas, A.L.; Marinell, L. Capacity and Character Expansions: Moment generating function and other exact results for MIMO correlated channels. IEEE Trans. Info. Theor. \href{http://ieeexplore.ieee.org/document/4016317/}{{\bf 52}(12) 5336-5351} (2006) \href{https://arxiv.org/abs/cs/0509080}{[arXiv:cs/0509080 [cs.IT]]}.
\bibitem{EStr}
Strahov, E. Differential equations for singular values of products of Ginibre random matrices. J. Phys. A: Math. Theor. \href{http://iopscience.iop.org/article/10.1088/1751-8113/47/32/325203/meta}{{\bf 47}(32) 325203} (2014) \href{https://arxiv.org/abs/1403.6368}{[arXiv:1403.6368 [math-ph]]}.
\bibitem{TW}
Tracy, C. A.; Widom, H. Correlation functions, cluster  functions, and spacing distributions for random matrices. J. Stat. Phys. \href{http://link.springer.com/article/10.1023\%2FA\%3A1023084324803}{{\bf 92}(5) 809-835} (1998) \href{https://arxiv.org/abs/solv-int/9804004}{[arXiv:solv-int/9804004]}.
\bibitem{WWG}
Waltner, D.; Wirtz, T.; Guhr, T. Eigenvalue Density of the Doubly Correlated Wishart Model: Exact Results. J. Phys. A: Math. Theor. \href{http://iopscience.iop.org/article/10.1088/1751-8113/48/17/175204/meta}{{\bf 48}(17) 175204} (2015) \href{https://arxiv.org/abs/1412.3092}{[arXiv:1412.3092 [math-ph]]}.
\bibitem{WF}
Witte, N.S.; Forrester, P.J. Singular Values of Products of Ginibre Random Matrices. Stud. App. Math. \href{http://onlinelibrary.wiley.com/doi/10.1111/sapm.12147/abstract}{\textbf{138}(2) 135–184} (2017) \href{https://arxiv.org/abs/1605.00704}{[arXiv:1605.00704 [math.CA]]}.
\end{thebibliography}
\end{document}